\documentclass[11pt]{article}
\usepackage{smile}

\usepackage{fullpage}
\usepackage{lscape}
\usepackage{bigints}
\usepackage{framed}
\usepackage{mdframed}
\usepackage{enumerate}
\usepackage[inline]{enumitem}
\usepackage[T1]{fontenc}
\usepackage{moresize}
\usepackage{bm}
\usepackage{bbm}
\usepackage{dsfont}
\usepackage{amsmath}
\usepackage{amssymb}
\usepackage{amsthm}
\usepackage{amsfonts}
\usepackage{stmaryrd}
\usepackage{array}
\usepackage{mathrsfs}
\usepackage{mathtools} % \prescript
\usepackage{extarrows}
\usepackage{stackrel}
\usepackage{relsize,exscale}
\usepackage{scalerel}
\usepackage[nodisplayskipstretch]{setspace}
\usepackage{color}
\usepackage[usenames,dvipsnames]{xcolor}
\usepackage{cancel}
\usepackage{soul}
\usepackage{undertilde}
\usepackage{xfrac}
\usepackage{siunitx}
\usepackage{graphicx}
\usepackage{float}
\usepackage{rotating}
\usepackage{subcaption}
\usepackage{overpic}
\usepackage[all]{xy}
\usepackage{tikz}
\usetikzlibrary{arrows,matrix,positioning,calc,automata,patterns}
\usepackage{booktabs}
\usepackage{dcolumn}
\usepackage{multirow}
\usepackage{diagbox}
\usepackage{tabularx}
\usepackage{tablefootnote}
\usepackage{threeparttable}
\usepackage{verbatim}
\usepackage{listings}
\usepackage[ruled,vlined]{algorithm2e}
\usepackage{fancyvrb}
\usepackage{hyperref}
\usepackage[round]{natbib}
\usepackage{sectsty}

\hypersetup{
    bookmarks=true,         % show bookmarks bar?
    unicode=false,          % non-Latin characters in Acrobat’s bookmarks
    pdftoolbar=true,        % show Acrobat’s toolbar?
    pdfmenubar=true,        % show Acrobat’s menu?
    pdffitwindow=false,     % window fit to page when opened
    pdfstartview={FitH},    % fits the width of the page to the window
    pdftitle={My title},    % title
    pdfauthor={Author},     % author
    pdfsubject={Subject},   % subject of the document
    pdfcreator={Creator},   % creator of the document
    pdfproducer={Producer}, % producer of the document
    pdfkeywords={key1, key2}, % list of keywords
    pdfnewwindow=true,      % links in new PDF window
    colorlinks=true,        % false: boxed links; true: colored links
    linkcolor=blue,         % color of internal links (change box color with linkbordercolor)
    citecolor=blue,         % color of links to bibliography
    filecolor=blue,         % color of file links
    urlcolor=cyan           % color of external links
}

\usepackage{stackengine}
\stackMath
\newcommand\tenq[2][1]{%
\def\useanchorwidth{T}%
\ifnum#1>1%
\stackunder[0pt]{\tenq[\numexpr#1-1\relax]{#2}}{\!\scriptscriptstyle\thicksim}%
\else%
\stackunder[1pt]{#2}{\!\scriptstyle\thicksim}%
\fi%
}

\makeatletter
\DeclareRobustCommand\widecheck[1]{{\mathpalette\@widecheck{#1}}}
\def\@widecheck#1#2{%
    \setbox\z@\hbox{\m@th$#1#2$}%
    \setbox\tw@\hbox{\m@th$#1%
       \widehat{%
          \vrule\@width\z@\@height\ht\z@
          \vrule\@height\z@\@width\wd\z@}$}%
    \dp\tw@-\ht\z@
    \@tempdima\ht\z@ \advance\@tempdima2\ht\tw@ \divide\@tempdima\thr@@
    \setbox\tw@\hbox{%
       \raise\@tempdima\hbox{\scalebox{1}[-1]{\lower\@tempdima\box
\tw@}}}%
    {\ooalign{\box\tw@ \cr \box\z@}}}
\makeatother

\def\given{\,|\,}

\def\tr{\mathop{\text{tr}}\kern.2ex}

\def\P{{\mathrm P}}

\def\E{{\mathrm E}}

\def\d{{\mathrm d}}
\def\cI{{\mathcal I}}

\newcommand{\indep}{\perp \!\!\!\perp}

\newcolumntype{L}[1]{>{\raggedright\let\newline\\\arraybackslash\hspace{0pt}}m{#1}}
\newcolumntype{C}[1]{>{  \centering\let\newline\\\arraybackslash\hspace{0pt}}m{#1}}
\newcolumntype{R}[1]{>{ \raggedleft\let\newline\\\arraybackslash\hspace{0pt}}m{#1}}
\newcolumntype{d}[1]{D{.}{.}{#1}}
\newcolumntype{H}{>{\setbox0=\hbox\bgroup}c<{\egroup}@{}}
\newcolumntype{Z}{>{\setbox0=\hbox\bgroup}c<{\egroup}@{\hspace*{-\tabcolsep}}}
\newcolumntype{b}{X}
\newcolumntype{s}{>{\hsize=.5\hsize}X}

\newcommand{\htau}{\widehat\tau}

\numberwithin{equation}{section}

\newtheorem{theorem}{Theorem}[section]
\newtheorem{lemma}{Lemma}[section]

\newtheorem{assumption}{Assumption}[section]

\newtheorem{innercustomgeneric}{\customgenericname}
\providecommand{\customgenericname}{}
\newcommand{\newcustomtheorem}[2]{%
  \newenvironment{#1}[1]
  {%
   \renewcommand\customgenericname{#2}%
   \renewcommand\theinnercustomgeneric{##1}%
   \innercustomgeneric
  }
  {\endinnercustomgeneric}
}
\newcustomtheorem{customdefinition}{Definition}
\newcustomtheorem{customdefinitions}{Definitions}
\newcustomtheorem{customtheorem}{Theorem}
\newcustomtheorem{customassumption}{Assumption}
\newcustomtheorem{customlemma}{Lemma}
\newcustomtheorem{customexample}{Example}
\theoremstyle{definition}
\newtheorem{definition}{Definition}[section]
\newtheorem{example}{Example}[section]
\newtheorem{remark}{Remark}[section]

\usepackage{enumitem}
\makeatletter
\newcommand{\mylabel}[2]{#2\def\@currentlabel{#2}\label{#1}}
\makeatother

\graphicspath{{./fig3/}}

\allowdisplaybreaks

\begin{document}

\setlength{\abovedisplayskip}{5pt}
\setlength{\belowdisplayskip}{5pt}
\setlength{\abovedisplayshortskip}{5pt}
\setlength{\belowdisplayshortskip}{5pt}
\hypersetup{colorlinks,breaklinks,urlcolor=blue,linkcolor=blue}

\title{\LARGE Introducing the b-value: combining unbiased and biased estimators from a sensitivity analysis perspective }

\author{Zhexiao Lin\thanks{Department of Statistics, University of California, Berkeley, CA 94720, USA; e-mail: {\tt zhexiaolin@berkeley.edu}},~~~Peter J. Bickel\thanks{Department of Statistics, University of California, Berkeley, CA 94720, USA; e-mail: {\tt bickel@stat.berkeley.edu}},~~~and~
Peng Ding\thanks{Department of Statistics, University of California, Berkeley, CA 94720, USA; e-mail: {\tt pengdingpku@berkeley.edu}}
}

\maketitle

\vspace{-1em}

\begin{abstract}
  In empirical research, when we have multiple estimators for the same parameter of interest, a central question arises: how do we combine unbiased but less precise estimators with biased but more precise ones to improve the inference? Under this setting, the point estimation problem has attracted considerable attention. In this paper, we focus on a less studied inference question: how can we conduct valid statistical inference in such settings with unknown bias? We propose a strategy to combine unbiased and biased estimators from a sensitivity analysis perspective. We derive a sequence of confidence intervals indexed by the magnitude of the bias, which enable researchers to assess how conclusions vary with the bias levels. Importantly, we introduce the notion of the b-value, a critical value of the unknown maximum relative bias at which combining estimators does not yield a significant result. We apply this strategy to three canonical combined estimators: the precision-weighted estimator, the pretest estimator, and the soft-thresholding estimator. For each estimator, we characterize the sequence of confidence intervals and determine the bias threshold at which the conclusion changes. Based on the theory, we recommend reporting the b-value based on the soft-thresholding estimator and its associated confidence intervals, which are robust to unknown bias and achieve the lowest worst-case risk among the alternatives.
\end{abstract}

{\bf Keywords}: data fusion, data integration, pretest, shrinkage estimator, soft-thresholding.

\section{Introduction}

In empirical research, it is common for researchers to employ different methods to estimate the same parameter of interest. These differences may arise from the use of distinct datasets or from imposing different model assumptions on the same dataset. We motivate our paper with the following two examples of combining estimators.
\begin{example}\label{ex:treatment}
  Randomized controlled trials (RCTs) are the gold standard for estimating treatment effects due to their ability to eliminate unmeasured confounding. However, RCTs often suffer from limited sample sizes, as large-scale experiments can be costly or infeasible. In contrast, observational data are more readily available from the target population of interest. However, estimates using observational data may be biased in estimating treatment effects due to unmeasured confounding, raising concerns about the internal validity. See \cite{brantner2023methods} and \cite{colnet2024causal} for recent reviews on motivations and methods for combining RCTs and observational studies.
\end{example}

\begin{example}\label{ex:regression}
  The ordinary least squares (OLS) estimator is biased in estimating the unknown parameters under the linear model when the error term is correlated with the regressor. In contrast, the instrumental variables (IV) estimator can provide unbiased estimates for the parameters of interest with a valid instrumental variable that is uncorrelated with the error term but correlated with the regressor. However, the IV estimator is usually much less precise than the OLS estimator, especially when the IV is weakly correlated with the regressor \citep{bound1995problems}. In empirical studies, e.g., \cite{angrist1991does}, researchers often report results from both OLS and IV estimators. How to combine OLS and IV estimators is gaining increasing interest \citep{armstrong2025adapting}.
\end{example}

When the estimators are from different datasets, e.g., Example~\ref{ex:treatment}, the estimators are independent as long as the datasets are independent. When the estimators are from the same dataset but with different model assumptions, e.g., Example~\ref{ex:regression}, the estimators are dependent in general. Given access to multiple potentially dependent estimators, some unbiased but less precise and others biased but more precise, a natural question is: How can we combine the unbiased and potentially biased estimators to improve the inference with unknown bias? From the point estimation perspective, this problem has been extensively studied \citep{bickel1984parametric,green1991james, giles1993pre,chen2015data,athey2020combining,de2020empirical, rosenman2023combining,gao2023pretest, yang2025cross}. Many methods have been proposed for constructing combined estimators that perform well when the bias is small and have bounded risks when the bias is large. From the statistical inference perspective, this problem is less studied. In this paper, we answer the following question: how can we conduct valid statistical inference after combining the estimators?

This question has received considerably less attention. The primary difficulty lies in the impossibility of characterizing the distribution of the combined estimator with unknown bias \citep{armstrong2025adapting}.
% This impossibility is closely related to results in nonparametric regression, where honest confidence intervals cannot be constructed without some restriction on the function class, such as a bound on its derivative \citep{low1997nonparametric,cai2004adaptation,armstrong2018optimal,armstrong2020simple}. The key insight is that once even limited information about the bias is introduced, e.g., an upper bound on its magnitude, confidence intervals for the parameter of interest become attainable. 
Once the information about the bias is introduced, e.g., an upper bound on its magnitude, confidence intervals for the parameter of interest become possible. In the absence of such information, we focus on the following question:
\begin{quote}
  \textit{How can we construct a sequence of confidence intervals for the parameter of interest across bias levels?}
\end{quote}

The sequence of confidence intervals provides a way to quantify how the level of bias impacts the uncertainty in point estimation. One important application of this sequence is in hypothesis testing. Suppose the null hypothesis is not rejected based on the unbiased but less precise estimator. Now consider a scenario where the statistical test based on the biased but more precise estimator rejects the null hypothesis. If we have prior knowledge suggesting the bias is small, incorporating the biased estimator may yield a more precise estimator to reject the null hypothesis. In such cases, the sequence of confidence intervals enables us to address the following question:
\begin{quote}
  \textit{How large must the bias be to change the conclusion of a hypothesis test—from rejection to non-rejection?}
\end{quote}

The idea of constructing the sequence of confidence intervals and examining how conclusions change as the assumed level of bias varies is related to sensitivity analysis in causal inference with unmeasured confounding, e.g., \cite{cornfield1959smoking,rosenbaum1983assessing,vanderweele2017sensitivity}. In observational studies, sensitivity analysis assesses
how the causal conclusions change with respect to different degrees of unmeasured confounding by varying the sensitivity parameter \citep{rosenbaum2002observational,ding2016sensitivity}. Our proposed framework has a similar flavor: by indexing inference results over a continuum of bias levels, we can assess the robustness of statistical inference.

We formalize two statistical inference questions, confidence interval and hypothesis testing, in the context of combining unbiased and biased estimators. 
% Our formulation is of broad interest as it encompasses both randomization-based and sampling-based frameworks. In randomization-based inference, the estimators can be viewed as functions of treatment assignments drawn from a sequence of finite populations. In sampling-based inference, they arise from random samples from an underlying superpopulation.
Under regularity conditions,
% \citep{madow1948limiting,erdHos1959central,hajek1960limiting, lehmann1999elements,van2000asymptotic}
the estimators satisfy a joint central limit theorem. Consequently, we present our formulation in a finite-sample Gaussian setting, assuming exact normality for both the unbiased and biased estimators. This reduction to a Gaussian model is motivated by Le Cam’s classical asymptotic argument \citep{le1956asymptotic}, 
% , which shows that general statistical experiments can be locally approximated by Gaussian shift experiments. Hence, 
and because of this, our Gaussian formulation should be viewed as an asymptotic idealization rather than a restrictive finite-sample assumption.

We develop a general framework that applies to any point estimator formed by combining such estimators. Within this framework, we construct a sequence of confidence intervals indexed by the bias level, and importantly, we introduce the notion of the b-value, a critical value of the unknown maximum relative bias at which combining estimators does not yield a significant result. We examine three canonical combined estimators: the precision-weighted estimator, the pretest estimator, and the soft-thresholding estimator. For each estimator, we derive either analytically or numerically the sequence of confidence intervals and the b-value. Among the three, we advocate for the soft-thresholding estimator, as it offers robustness to unknown bias compared with the precision-weighted estimator, and exhibits lower worst-case risk and more desirable properties for confidence interval construction than the pretest estimator. We provide a Python package for the proposed methods, available at \url{https://github.com/zhexiaolin/b-value}.

\textbf{Notation.} For a vector $a = (a_1,\ldots,a_d)^\top \in \mathbb{R}^d$, let $\lVert a \rVert_1 = \sum_{i=1}^d \lvert a_i \rvert$, $\lVert a \rVert_2 = ( \sum_{i=1}^d a_i^2 )^{1/2}$, and $\lvert a \rvert = (\lvert a_1 \rvert, \ldots, \lvert a_d \rvert)^\top$. For vectors $a = (a_1,\ldots,a_d)^\top$ and $b = (b_1,\ldots,b_d)^\top$, let $a \odot b = (a_1 b_1, \ldots, a_d b_d)^\top$ be the Hadamard  (element-wise) product, and $a \le b$ denote $a_i \le b_i$ for all $i=1,2,\ldots,d$. For a scalar $b \in \mathbb{R}$ and a set $A \subset \mathbb{R}$, we write
$b - A = \{b - a : a \in A\}$, which extends naturally to vectors and sets in $\mathbb{R}^d$. We use $\Phi(\cdot)$ and $\phi(\cdot)$ to denote the cumulative distribution function and density function of the standard normal distribution, respectively. We use $\phi_{\bm{\mu}, \bm{\Sigma}}(\cdot)$ to denote the density function of the multivariate normal distribution with mean $\bm{\mu}$ and covariance matrix $\bm{\Sigma}$. We use $\Psi_d(\cdot; \lambda)$ to denote the cumulative distribution function of the noncentral chi-squared distribution with noncentrality parameter $\lambda$ and degrees of freedom $d$. We use $c_{\alpha}$ to denote the $(1-\alpha)$ upper quantile of the standard normal distribution.

\section{Problem setup and a review of point estimation}\label{sec:setup}

\subsection{Problem Setup}

We consider the following setting:
\begin{assumption}\label{ass:normal}
  Suppose we observe two independent random variables: one unbiased estimator $\htau_0 \sim N(\tau, \sigma_0^2)$ and one biased estimator $\htau_1 \sim N(\tau + \Delta, \sigma_1^2)$. Here, $\tau \in \bR$ is the unknown parameter of interest. We assume that $\sigma_0^2$ and $\sigma_1^2$ are known, whereas $\Delta$ is unknown.
\end{assumption}

In practice, both the unbiased and biased estimators for the same parameter of interest are constructed from data. Under regularity conditions, these estimators are jointly asymptotically normal with an unknown covariance matrix. As long as this covariance matrix can be consistently estimated, the problem of combining estimators reduces to an exact normality framework under Le Cam’s asymptotic framework \citep{le1956asymptotic}. \cite{le1956asymptotic} showed that a wide class of estimators and test statistics can be approximated in large samples by Gaussian experiments with known covariances. Within this framework, we treat the estimators as arising from Gaussian experiments in which the variances are replaced by their consistent estimates, and the validity of the inference procedures is preserved asymptotically. For this reason, we present our analysis in the exact normality setting. Nevertheless, our results represent the asymptotic limits of a broad class of more general and practically relevant inference problems. The analysis can be generalized to dependent (Section~\ref{sec:dependent}), multivariate (Section~\ref{sec:multivariate}) and multiple estimators (Section~\ref{sec:data_fusion}) cases.

Let $\gamma = \sigma_0^2/\sigma_1^2$ be the variance ratio between $\htau_0$ and $\htau_1$. In general, the unbiased estimator is less precise but the biased estimator is more precise. Therefore, we focus on the regime in which $\sigma_0^2$ is large and $\sigma_1^2$ is small, which indicates that $\gamma$ is large.

Under Assumption~\ref{ass:normal}, we address the following central question:
\begin{quote}
  \textit{How can we construct a sequence of two-sided confidence intervals for $\tau$ across different levels of bias ($\Delta$)?}
\end{quote}

We focus on constructing two-sided symmetric confidence intervals centered at prespecified point estimators. In other words, we do not let the point estimator itself depend on the bias level $\Delta$. An alternative approach would be to use a bias-dependent point estimator, where the estimator incorporates $\Delta$. We will discuss the advantages of using prespecified point estimators later in Appendix~\ref{sec:bias_dependent}.

Since $\htau_0$ is an unbiased estimator of $\tau$, one natural approach is to construct a confidence interval based solely on $\htau_0$, which is invariant to the bias $\Delta$. Given a significance level $\zeta \in (0,1)$, we can construct a standard two-sided confidence interval
\[
[\htau_0 - \sigma_0 c_{\zeta/2},\; \htau_0 + \sigma_0 c_{\zeta/2}],
\]
where $c_{\zeta/2}$ is the $(1-\zeta/2)$ upper quantile of the standard normal distribution. While being valid regardless of the bias $\Delta$, this confidence interval may be too wide when $\htau_0$ is not precise, whose length $2\sigma_0 c_{\zeta/2}$ scales proportionally with $\sigma_0$. As we have access to the additional biased but more precise estimator $\htau_1$, a natural question arises: can we shorten the confidence interval by incorporating information from $\htau_1$?

This motivates us to combine the two estimators $\htau_0$ and $\htau_1$ to construct a two-sided confidence interval with shorter length. Consider a generic combined estimator 
\[
\hat\tau = \hat\tau(\htau_0, \htau_1, \sigma_0^2, \sigma_1^2)
\]
of $\tau$, which depends only on observed data $(\htau_0, \htau_1)$ and known variances $(\sigma_0^2, \sigma_1^2)$, but not on the unknown bias $\Delta$. If $\Delta$ were known, constructing an exact two-sided confidence interval would be straightforward, as the distributions of both estimators are known, which lead to known distribution of $\hat\tau$. However, since $\Delta$ is unknown, the exact confidence interval depends on the magnitude of $\Delta$.

To analyze the problem, we impose bounds on the bias $\Delta$ and construct two-sided confidence intervals for different levels of bias. We assume that $\lvert \Delta/\sigma_0 \rvert \le b$ for some $b \ge 0$, and study how the confidence interval based on $\hat\tau$ changes as a function of the bias bound $b$. Here we focus on the relative bias $\Delta/\sigma_0$ instead of the absolute bias $\Delta$ to ensure that the parameterization is invariant to the scale of the estimators. For a given $b$, we aim to construct a two-sided confidence interval that achieves correct coverage uniformly over all $\Delta$ satisfying $\lvert \Delta/\sigma_0\rvert \le b$. We thus define the confidence interval below.
\begin{definition}\label{def:CI}
  Given a significance level $\zeta \in (0,1)$ and a maximum relative bias $b \ge 0$, we want to construct an interval $\cI(b, \zeta) = \cI(b, \zeta, \htau_0, \htau_1, \sigma_0^2, \sigma_1^2)$ such that
  \begin{align}\label{eq:coverage}
    \inf_{\Delta: \lvert \Delta/\sigma_0 \rvert \le b} \P_\Delta(\tau \in \hat\tau - \cI(b, \zeta)) 
    = \inf_{\Delta: \lvert \Delta/\sigma_0  \rvert \le b} \P_\Delta(\hat\tau - \tau \in \cI(b, \zeta)) \ge 1 - \zeta,
  \end{align}
  where $\P_\Delta$ explicitly denotes the dependence of the distribution of $\hat\tau$ on $\Delta$. The confidence interval for $\tau$ based on $\hat\tau$ is then given by $\hat\tau - \cI(b, \zeta)$.
\end{definition}

We next impose two natural monotonicity conditions on the interval $\cI(b, \zeta)$ below.
\begin{assumption}\label{asp:monotonicity}
We assume:
\begin{enumerate}
  \item For fixed $(\htau_0, \htau_1, \sigma_0^2, \sigma_1^2)$ and $\zeta$, we have $\cI(b, \zeta) \subset \cI(b', \zeta)$ whenever $b \le b'$.
  \item For fixed $(\htau_0, \htau_1, \sigma_0^2, \sigma_1^2)$ and $b$, we have $\cI(b, \zeta) \subset \cI(b, \zeta')$ whenever $\zeta \ge \zeta'$.
\end{enumerate}
\end{assumption}

The first condition in Assumption~\ref{asp:monotonicity} requires that as we allow greater bias in $\htau_1$, the confidence interval becomes wider and contains the previous intervals. The second condition in Assumption~\ref{asp:monotonicity} requires that to guarantee higher coverage rate, the confidence interval must widen. A common class of intervals satisfying both conditions is the class of symmetric fixed-length centered intervals, whose length does not depend on $(\htau_0, \htau_1)$, i.e., $\cI(b,\zeta) = [-c(b,\zeta, \sigma_0^2, \sigma_1^2), c(b,\zeta, \sigma_0^2, \sigma_1^2)]$ for some $c(b,\zeta, \sigma_0^2, \sigma_1^2) \ge 0$ depending only on $b$, $\zeta$, $\sigma_0^2$, and $\sigma_1^2$. In this case, the confidence interval $\hat\tau - \cI(b, \zeta)$ in Definition~\ref{def:CI} is given by $[\hat\tau - c(b,\zeta, \sigma_0^2, \sigma_1^2), \hat\tau + c(b,\zeta, \sigma_0^2, \sigma_1^2)]$. 

Therefore, for given $(\htau_0, \htau_1, \sigma_0^2, \sigma_1^2)$ and $\zeta$, we can regard the confidence interval $\hat\tau - \cI(b, \zeta)$ in Definition~\ref{def:CI} as a function of the bias bound $b$. Thus, to address the central question of constructing a sequence of two-sided confidence intervals for $\tau$ across different levels of bias, we compute this confidence interval for a range of values of $b$: $\{\hat\tau - \cI(b, \zeta)\}_{b \ge 0}$.

% \nb{
% The confidence interval in Definition~\ref{def:CI} is closely related to existing results in the nonparametric regression literature. Specifically, $\hat\tau - \cI(b, \zeta)$ is bias-aware \citep{armstrong2018optimal,armstrong2020bias,armstrong2022robust} in the sense that $\cI(b, \zeta)$ explicitly accounts for a possible bias of magnitude $b$. The uniform coverage requirement over all values within the bias bound in Definition~\ref{def:CI} mirrors the notion of honesty introduced by \cite{li1989honest}, which requires confidence intervals to cover the true parameter uniformly (asymptotically) over the parameter space. This is also conceptually aligned with minimax optimality in the classical decision theory.
% }

Once we construct the sequence of confidence intervals, a natural application is hypothesis testing about the parameter $\tau$. We focus on the two-sided test:
\begin{align}\label{eq:two_sided_test}
  H_0: \tau = 0 \quad\text{versus}\quad H_1: \tau \neq 0.
\end{align}
We generalize the discussion to one-sided tests, such as testing $\tau = 0$ versus $\tau > 0$ or testing $\tau \le 0$ versus $\tau > 0$, in Section~\ref{sec:one_sided}. It is straightforward to extend the two-sided test to hypotheses of the form $\tau = \tau^*$ versus $\tau \neq \tau^*$ for any given $\tau^* \in \mathbb{R}$. A common practice for the two-sided test involves constructing confidence intervals for $\tau$ and checking whether the null value $0$ lies within these intervals.

Under Assumption~\ref{asp:monotonicity}, the width of the confidence interval increases with the bias bound $b$. This leads to a central question when using the combined estimator $\hat\tau$ with the confidence interval $\hat\tau - \cI(b, \zeta)$:
\begin{quote}
  \textit{How large must the bias bound $b$ be to change the conclusion of the hypothesis test \eqref{eq:two_sided_test}?}
\end{quote}

By the monotonicity of $\cI(b, \zeta)$ in $b$ (holding $(\htau_0, \htau_1)$ and $\zeta$ fixed) from Assumption~\ref{asp:monotonicity}, we define the limiting interval as $\cI(\infty, \zeta) = \lim_{b \to \infty} \cI(b, \zeta)$. Then there may exist a critical value of $b$ at which the confidence interval $\hat\tau - \cI(b, \zeta)$ contains the null value tested in \eqref{eq:two_sided_test}. We focus on cases where the critical value exists. Specifically, we consider the case when $0 \notin \hat\tau - \cI(0, \zeta)$ but $0 \in \hat\tau - \cI(\infty, \zeta)$. In this scenario, we define the b-value below.

\begin{definition}\label{def:b}
  Define the b-value as the critical value $b^*$ of testing $\tau = 0$ versus $\tau \neq 0$:
  \begin{align}\label{eq:b}
    b^*(\zeta) = b^*(\zeta, \htau_0, \htau_1, \sigma_0^2, \sigma_1^2) = \inf\left\{b \ge 0: 0 \in \hat\tau - \cI(b, \zeta)\right\}.
  \end{align}
\end{definition}

By the monotonicity conditions in Assumption~\ref{asp:monotonicity}, we have $0 \notin \hat\tau - \cI(b, \zeta)$ for $0 \le b < b^*$ and $0 \in \hat\tau - \cI(b, \zeta)$ for $b > b^*$. Thus, we reject the null hypothesis $\tau = 0$ when $b < b^*$ and fail to reject it for $b > b^*$. The b-value $b^*$ thus represents the maximum relative bias beyond which the null hypothesis can no longer be rejected. We propose to report $b^*$ defined in \eqref{eq:b} for given estimators $(\htau_0,\htau_1)$ and significance level $\zeta$. 

Compared with the sensitivity analysis literature, the bias bound $b$ in our framework plays a role analogous to the sensitivity parameter. The corresponding confidence interval $\hat\tau - \cI(b, \zeta)$ serves as a sensitivity curve, illustrating how inference changes as the bias varies. In this context, the b-value $b^*$ serves a role similar to key robustness metrics in prior work: it parallels
% the breakdown point and breakdown frontier by 
% \cite{horowitz1995identification} and \cite{masten2020inference}, 
the design sensitivity by \cite{rosenbaum2004design}, the E-value by \cite{vanderweele2017sensitivity}, and the robustness value by \cite{cinelli2020making}.

The b-value $b^*$ provides a criterion for comparing competing strategies of confidence interval construction. Different choices of the combined estimator $\hat\tau$ and different formulations of $\cI(b, \zeta)$ can lead to different values of $b^*$. We prefer procedures that yield a larger b-value $b^*$, since this indicates that the resulting confidence interval is more robust to potential bias in the biased estimator. This role of $b^*$ is analogous to that use of the design sensitivity in \cite{rosenbaum2004design}, where it serves as a criterion for comparing different test statistics and matched designs in observational studies.

\begin{remark}
  Besides the scenario of primary interest: $0 \notin \hat\tau - \cI(0, \zeta)$ and $0 \in \hat\tau - \cI(\infty, \zeta)$, two other less interesting scenarios may occur. First, if $0 \in \hat\tau - \cI(0, \zeta)$, then $0 \in \hat\tau - \cI(b, \zeta)$ for all $b \ge 0$, and we always fail to reject the null hypothesis regardless of the bias magnitude. Second, if $0 \notin \hat\tau - \cI(\infty, \zeta)$, then $0 \notin \hat\tau - \cI(b, \zeta)$ for all $b \ge 0$, and we always reject the null hypothesis regardless of the bias magnitude. In these two scenarios, combining the estimators or not does not change the statistical result qualitatively. Therefore, we ignore them in our discussion. Depending on the choice of $\hat\tau$ and construction of $\cI(b, \zeta)$, not all three scenarios may arise. We focus on the case when the b-value $b^*$ defined in \eqref{eq:b} is in the interval $(0, \infty)$.
\end{remark}

\subsection{Point estimation: a review}\label{sec:point}

Before diving into the details of the method for constructing confidence intervals and obtaining the b-value, we first review the existing approaches to point estimation. Specifically, we review three point estimators: the precision-weighted estimator, the pretest estimator, and the soft-thresholding estimator.

First, we recall the precision-weighted estimator:
\[
\htau_{\rm PW} := \frac{\sigma_1^2}{\sigma_0^2 + \sigma_1^2} \htau_0 + \frac{\sigma_0^2}{\sigma_0^2 + \sigma_1^2} \htau_1 = \htau_0 + \frac{\sigma_0^2}{\sigma_0^2 + \sigma_1^2} (\htau_1 - \htau_0) = \htau_0 + \frac{\gamma}{1+\gamma} (\htau_1 - \htau_0).
\]
When $\Delta$ is known to be 0, the precision-weighted estimator is the maximum likelihood estimator and best linear unbiased estimator of $\tau$. Moreover, by the classical Le Cam asymptotic decision theory \citep{le1956asymptotic} and classical results for the Gaussian shift model, it is asymptotically admissible and minimax under $L_2$ risk in general, i.e., $\E[(\hat\tau - \tau)^2]$, among all regular estimators under standard regularity conditions.

However, $\htau_{\rm PW}$ is not robust to bias: its risk is large when $\lvert \Delta \rvert$ is large. 
% To see that, since $\htau_{\rm PW} \sim N(\tau +  (1+\gamma)^{-1} \gamma \Delta, (1+\gamma)^{-1} \sigma_0^2)$, the bias of $\htau_{\rm PW}$ in estimating $\tau$ is $ (1+\gamma)^{-1} \gamma \Delta$, and its risk is
% \[
% \E_\Delta[(\htau_{\rm PW} - \tau)^2] = \gamma^2(1+\gamma)^{-2} \Delta^2 + (1+\gamma)^{-1} \sigma_0^2,
% \]
% which diverges as $\lvert \Delta \rvert \to \infty$. Consequently, the length of the confidence interval $[\htau_{\rm PW} - \hat{L}_{\rm PW} (1+\gamma)^{-1/2} \sigma_0, \htau_{\rm PW} + \hat{L}_{\rm PW} (1+\gamma)^{-1/2} \sigma_0]$ also diverges, as $\hat{L}_{\rm PW}$ defined in Theorem~\ref{thm:L_PW} grows without bound when $b \to \infty$. 
This motivates us to consider a combined estimator that performs nearly as well as $\htau_{\rm PW}$ when the bias is small, and is robust to unknown bias $\Delta$, ensuring that the worst-case risk $\sup_{\Delta \in \bR} \E_\Delta[(\hat\tau - \tau)^2]$ remains bounded. This motivates the following two estimators. 

Second, we recall the pretest estimator, which involves incorporating a pretest for $\Delta = 0$ versus $\Delta \neq 0$, a procedure that is commonly used \citep{bancroft1944biases,wallace1977pretest, bancroft1977inference,giles1993pre}. Under $\Delta = 0$, given the independence between $\htau_1$ and $\htau_0$, their difference follows: $\htau_1 - \htau_0 \sim N(0, \sigma^2)$, where $\sigma^2 = \sigma_0^2 + \sigma_1^2$. For a fixed significance level $\alpha \in (0,1)$, we consider the test statistic $(\htau_1 - \htau_0)/\sigma$. Let $$\mathcal{A} = \{\lvert \htau_1 - \htau_0 \rvert \le \sigma c_{\alpha/2}\}$$ denote the event that the pretest fails to reject the null hypothesis ($\Delta = 0$). If the pretest fails to reject the null hypothesis, i.e., $\lvert \htau_1 - \htau_0 \rvert \le \sigma c_{\alpha/2}$, the pretest estimator uses the precision-weighted estimator:
\[
  \htau_0 + \frac{\gamma}{1+\gamma} (\htau_1 - \htau_0).
\]
If the pretest rejects the null hypothesis, i.e., $\lvert \htau_1 - \htau_0 \rvert > \sigma c_{\alpha/2}$, the pretest estimator employs hard-thresholding by reverting to the unbiased estimator $\htau_0$. Combining the two cases, the pretest estimator is:
\[
\htau_{\rm PT} = \htau_0 + \frac{\gamma}{1+\gamma} (\htau_1 - \htau_0) \ind\left(\mathcal{A}\right).
\]

Third, we recall the soft-thresholding estimator, which ensures continuity at the pretest boundary ($\lvert \htau_1 - \htau_0 \rvert = \sigma c_{\alpha/2}$). If the pretest fails to reject the null hypothesis, the soft-thresholding estimator also uses the precision-weighted estimator. If the pretest rejects the null hypothesis, the soft-thresholding estimator employs soft-thresholding by:
\[
\htau_0 + \frac{\gamma}{1+\gamma} \sigma c_{\alpha/2}\,\sign(\htau_1 - \htau_0).
\]
Combining the two cases, the soft-thresholding estimator is:
\[
\htau_{\rm ST}= \htau_0 + \frac{\gamma}{1+\gamma} (\htau_1 - \htau_0)\ind\left(\mathcal{A}\right) 
+ \frac{\gamma}{1+\gamma} \sigma c_{\alpha/2}\,\sign(\htau_1 - \htau_0)\ind\left(\mathcal{A}^\textup{c}\right).
\]

% Compared with $\htau_{\rm PT}$, the estimator $\htau_{\rm ST}$ aligns more closely with estimators in other contexts, e.g., Efron–Morris estimator \citep{efron1971limiting,j1981estimation, bickel1983minimax}, optimization and regularization \citep{duan2023adaptive}, Bayesian hierarchical models \citep{park2008bayesian}, shrinkage estimators \citep{green1991james,de2020empirical}. Furthermore, 
While $\htau_{\rm PT}$ reduces to $\htau_0$ with probability nearly one when the bias is large, the risk of $\htau_{\rm PT}$ is much higher than that of $\htau_{\rm ST}$ when the bias is moderate. See \cite{bickel1983minimax} and \cite{armstrong2025adapting} for a comparison of worst-case risk between $\htau_{\rm PT}$ and $\htau_{\rm ST}$ and the superior performance of $\htau_{\rm ST}$.

In both $\htau_{\rm PT}$ and $\htau_{\rm ST}$, the role of $\alpha$ is different from its usual interpretation in hypothesis testing. 
% It is important to note that the confidence intervals and hypothesis testings of interest concern the target parameter $\tau$, not the bias parameter $\Delta$. The relevant confidence level is determined by $\zeta$, as in the confidence interval $\htau - \cI(b, \zeta)$ in Definition~\ref{def:CI}, which controls the coverage probability for $\tau$ given a bias bound $b$. In contrast, $\alpha$ appears only in the construction of the combined estimator. In $\htau_{\rm PT}$, $\alpha$ controls the probability of rejecting the null hypothesis that $\Delta=0$, thereby introducing a discontinuity in the estimator at the rejection boundary. In $\htau_{\rm ST}$, the same critical value $c_{\alpha/2}$ determines the size of the shrinkage applied to $\htau_1 - \htau_0$. 
Rather than a Type I error rate, $\alpha$ here acts as a tuning parameter that balances the bias and variance of the point estimator. In this sense, $\alpha$ simply indexes a family of estimators, much like the penalty level in penalized regression methods (e.g., the Lasso). Choosing $\alpha$ optimally depends on $\Delta$, which is unknown in practice. In our empirical studies, we set $\alpha = 0.05$ as a default, following the standard convention, and examine how the estimator’s performance varies across different bias levels.

\section{Confidence intervals, hypothesis testing, and the b-value}\label{sec:confidence}

Given the canonical point estimators introduced in Section~\ref{sec:point}, we now discuss the problem of constructing confidence intervals, hypothesis testing, and the b-value.

\subsection{Confidence interval based on the precision-weighted estimator}\label{sec:simple}

As a warm-up, we use the precision-weighted estimator $\htau_{\rm PW}$ to construct the sequence of confidence intervals and to illustrate how the confidence interval changes with respect to $b$. Based on the theory below, we do not recommend using the precision-weighted estimator and its corresponding confidence intervals in practice. Under Assumption~\ref{ass:normal}, we have $\htau_{\rm PW} \sim N(\tau +  (1+\gamma)^{-1} \gamma \Delta, (1+\gamma)^{-1} \sigma_0^2)$. The following theorem provides the sequence of confidence intervals based on $\htau_{\rm PW}$ depending on $b$.

\begin{theorem}\label{thm:L_PW}
  Let $\hat{L}_{\rm PW} = \hat{L}_{\rm PW}(b, \zeta, \gamma) \ge 0$ denote the solution to the equation of $L$:
  \[
    \Phi\Big(L - \frac{\gamma}{\sqrt{1+\gamma}} b \Big) - \Phi\Big(-L - \frac{\gamma}{\sqrt{1+\gamma}} b \Big) = 1-\zeta.
  \]
  The $\hat{L}_{\rm PW}$ always exists and is unique. The shortest length symmetric centered confidence interval based on $\htau_{\rm PW}$ for $\tau$ satisfying \eqref{eq:coverage} is given by $[\htau_{\rm PW} - \hat{L}_{\rm PW} (1+\gamma)^{-1/2} \sigma_0, \htau_{\rm PW} + \hat{L}_{\rm PW} (1+\gamma)^{-1/2} \sigma_0]$.
\end{theorem}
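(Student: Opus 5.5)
We standardize the error of $\htau_{\rm PW}$ and reduce the uniform coverage requirement to a one-dimensional problem. By Assumption~\ref{ass:normal}, $\htau_{\rm PW} - \tau \sim N\big(\gamma\Delta/(1+\gamma),\, \sigma_0^2/(1+\gamma)\big)$. Set
\[
Z = \frac{\sqrt{1+\gamma}}{\sigma_0}(\htau_{\rm PW}-\tau) \sim N(\delta, 1), \qquad \delta = \frac{\gamma}{\sqrt{1+\gamma}}\,\frac{\Delta}{\sigma_0}.
\]
A symmetric centered interval $\cI = [-c,c]$ can be written as $c = L(1+\gamma)^{-1/2}\sigma_0$. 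Its coverage is then
\[
g(L,\delta) := \P(\lvert Z\rvert \le L) = \Phi(L-\delta) - \Phi(-L-\delta).
\]
The constraint $\lvert\Delta/\sigma_0\rvert\le b$ becomes $\lvert\delta\rvert \le \gamma b/\sqrt{1+\gamma} =: \delta_b$. So \eqref{eq:coverage} is equivalent to $\inf_{\lvert\delta\rvert\le\delta_b} g(L,\delta) \ge 1-\zeta$.

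The first step is to identify the worst case in $\delta$. The function $g(L,\cdot)$ is even in $\delta$. Its derivative is $\partial_\delta g = -\phi(L-\delta) + \phi(L+\delta)$, which for $\delta>0$ and $L>0$ is negative because $\lvert L-\delta\rvert < L+\delta$. Hence $g$ is decreasing in $\lvert\delta\rvert$, and the infimum over $\lvert\delta\rvert\le\delta_b$ is attained at $\delta=\delta_b$. Coverage therefore holds if and only if $h(L) := g(L,\delta_b) \ge 1-\zeta$.

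The second step is existence and uniqueness. We have $h'(L) = \phi(L-\delta_b)+\phi(L+\delta_b) > 0$, so $h$ is continuous and strictly increasing on $[0,\infty)$. Its endpoint values are $h(0)=0$ and $\lim_{L\to\infty}h(L) = 1$. Since $1-\zeta\in(0,1)$, the intermediate value theorem together with strict monotonicity gives a unique root $\hat L_{\rm PW}\ge 0$ of $h(L)=1-\zeta$.

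The third step is optimality. Any $L<\hat L_{\rm PW}$ gives $h(L)<1-\zeta$, so the coverage bound fails at $\Delta = \sigma_0 b$. Any $L\ge \hat L_{\rm PW}$ is valid, so $\hat L_{\rm PW}$ gives the shortest valid symmetric centered interval. For the minimal length, which is the claimed optimality, we note that a data-dependent symmetric interval $[-c(\cdot), c(\cdot)]$ is not meant to be covered. We restrict to fixed-length intervals as in the discussion after Assumption~\ref{asp:monotonicity}.

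The only step needing care is the monotonicity of $g$ in $\lvert\delta\rvert$, which justifies taking the boundary bias as the least favorable. It follows directly from the derivative computation above, so we do not expect a real obstacle.
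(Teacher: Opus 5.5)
Your proof is correct and follows the same route as the paper. You standardize $\htau_{\rm PW}-\tau$ to $N\bigl(\gamma\Delta/(\sigma_0\sqrt{1+\gamma}),1\bigr)$, take the least-favorable bias at the boundary $\lvert \Delta/\sigma_0\rvert = b$, and solve the resulting folded-normal equation for $L$. The paper uses, but does not write out, the facts you prove explicitly: the derivative argument that coverage decreases in $\lvert\delta\rvert$, and the strict monotonicity and intermediate value argument giving existence and uniqueness of $\hat{L}_{\rm PW}$.
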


The $\hat{L}_{\rm PW}$ in Theorem~\ref{thm:L_PW} corresponds to the $(1-\zeta)$ quantile of the folded normal distribution $\lvert N(\frac{\gamma}{\sqrt{1+\gamma}} b, 1) \rvert$. This distribution also arises in the regression literature, where worst-case bias is incorporated into the construction of confidence intervals \citep{armstrong2020bias, armstrong2022robust}.

In the special case when $b=0$, we have $\hat{L}_{\rm PW}(0, \zeta, \gamma) = c_{\zeta/2}$, which yields the length of the confidence interval based on $\htau_{\rm PW}$ to be $2 c_{\zeta/2} (1+\gamma)^{-1/2} \sigma_0$. For comparison, recall that the length of the confidence interval solely based on $\htau_0$ is $2c_{\zeta/2} \sigma_0$. Thus the length of the confidence interval reduces by a factor of $(1+\gamma)^{-1/2}$ when using $\htau_{\rm PW}$ instead of $\htau_0$, which is small when $\gamma$ is large, i.e., $\sigma_0^2$ is much larger than $\sigma_1^2$. This indicates that by using a more precise but potentially biased estimator, if we have prior knowledge that the bias is small, then we can achieve a much shorter confidence interval. Given $\zeta$ and $\gamma$, the $\hat{L}_{\rm PW}(b, \zeta, \gamma)$ in Theorem~\ref{thm:L_PW} increases as $b$ increases, and goes to infinity as $b \to \infty$. Therefore, the precision-weighted estimator is not robust to unknown bias since the confidence interval is not bounded as the bias diverges.

One interesting observation is that $\hat{L}_{\rm PW}$ depends on $\sigma_0^2$ and $\sigma_1^2$ only through $\gamma$, the variance ratio. This explains why the bias $\Delta$ is scaled by $\sigma_0$ in the definition of $b$, and the confidence interval is represented as $[\htau_{\rm PW} - \hat{L}_{\rm PW} (1+\gamma)^{-1/2} \sigma_0, \htau_{\rm PW} + \hat{L}_{\rm PW} (1+\gamma)^{-1/2} \sigma_0]$.

% In Figure~\ref{fig:CI}, we illustrate how the confidence interval $[\htau_{\rm PW} - \hat{L}_{\rm PW} (1+\gamma)^{-1/2} \sigma_0, \htau_{\rm PW} + \hat{L}_{\rm PW} (1+\gamma)^{-1/2} \sigma_0]$ changes with the bias bound $b$. For comparison, we also include the confidence interval based on $\htau_0$, i.e., $[\htau_0 - c_{\zeta/2} \sigma_0,\; \htau_0 + c_{\zeta/2}\sigma_0]$. Assume $\sigma_0^2 = 1$ and $\htau_0 = 1$. Let $\zeta = 0.05$. We set $\htau_1 = 2$ and examine two scenarios: $\gamma = 10$ and $\gamma = 100$. We observe that as $b$ increases, the confidence interval based on $\htau_{\rm PW}$ widens, while the confidence interval based on $\htau_0$ remains fixed.

% \begin{figure}[htbp]
%   \centering
%   \includegraphics[width=0.9\linewidth]{figs/CI.png}
%   \caption{Confidence Interval against Maximum Bias $b$, where the left panel has $\gamma = 10$ and the right panel has $\gamma = 100$.}
%   \label{fig:CI}
% \end{figure}

\subsection{Confidence interval based on the pretest estimator}

We now construct the confidence interval for the pretest estimator $\htau_{\rm PT}$. In the following lemma, we present the distribution of $\hat\tau_{\rm PT} - \tau$. 

\begin{lemma}\label{lemma:clt_PT}
  Let $\cZ_1, \cZ_2$ be two independent standard normal random variables. Then $\hat\tau_{\rm PT} - \tau$ is distributed as
\begin{align*}
  \frac{1}{\sqrt{1+\gamma}}\sigma_0\cZ_1 +  \frac{\gamma}{1+\gamma} \Delta\ind\left(\left\lvert \cZ_2 + \sqrt{\frac{\gamma}{1+\gamma}}\frac{\Delta}{\sigma_0}\right\rvert \le c_{\alpha/2}\right) - \sqrt{\frac{\gamma}{1+\gamma}}\sigma_0\cZ_2\ind\left(\left\lvert \cZ_2 + \sqrt{\frac{\gamma}{1+\gamma}}\frac{\Delta}{\sigma_0}\right\rvert > c_{\alpha/2}\right).
\end{align*} 
\end{lemma}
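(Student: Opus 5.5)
The plan is to reparametrize the pair $(\htau_0,\htau_1)$ by an orthogonal decomposition into the precision-weighted estimator and the difference $\htau_1-\htau_0$. These two are jointly Gaussian and uncorrelated, hence independent. Write $\htau_0 = \tau + \sigma_0 \epsilon_0$ and $\htau_1 = \tau + \Delta + \sigma_1\epsilon_1$ with $\epsilon_0,\epsilon_1$ independent standard normals, as allowed by Assumption~\ref{ass:normal}.

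First, compute the two pieces.
\begin{itemize}
\item The difference is $D := \htau_1-\htau_0 = \Delta + \sigma_1\epsilon_1 - \sigma_0\epsilon_0 \sim N(\Delta,\sigma^2)$, where $\sigma^2=\sigma_0^2+\sigma_1^2 = \sigma_0^2(1+\gamma)/\gamma$.
\item The precision-weighted estimator satisfies $\htau_{\rm PW}-\tau = \frac{1}{1+\gamma}\sigma_0\epsilon_0 + \frac{\gamma}{1+\gamma}(\Delta+\sigma_1\epsilon_1)$.
\end{itemize}
A direct covariance computation gives
\[
\mathrm{Cov}(\htau_{\rm PW}, D) = -\tfrac{1}{1+\gamma}\sigma_0^2 + \tfrac{\gamma}{1+\gamma}\sigma_1^2 = 0,
\]
using $\gamma\sigma_1^2=\sigma_0^2$. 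By joint Gaussianity, $\htau_{\rm PW}$ and $D$ are therefore independent. Their variances are $\sigma_0^2/(1+\gamma)$ and $\sigma_0^2(1+\gamma)/\gamma$, respectively.

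Next, define the standardized variables
\[
\cZ_1 := \sqrt{1+\gamma}\,\big(\htau_{\rm PW}-\tau-\tfrac{\gamma}{1+\gamma}\Delta\big)/\sigma_0, \qquad \cZ_2 := -(D-\Delta)/\sigma.
\]
These are independent standard normals. The sign choice in $\cZ_2$ is made to match the minus sign in the statement; by symmetry of $N(0,1)$ this is harmless.

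Then express the estimator error in terms of $\cZ_1,\cZ_2$, starting from $\htau_{\rm PT}-\tau = (\htau_0-\tau) + \frac{\gamma}{1+\gamma}D\,\ind(\mathcal A)$. Using $\htau_0 = \htau_{\rm PW} - \frac{\gamma}{1+\gamma}D$, this rewrites as
\[
\htau_{\rm PT}-\tau = (\htau_{\rm PW}-\tau) - \tfrac{\gamma}{1+\gamma}D\,\ind(\mathcal A^{\rm c}).
\]
Now substitute:
\begin{itemize}
\item $\htau_{\rm PW}-\tau = \sigma_0\cZ_1/\sqrt{1+\gamma} + \frac{\gamma}{1+\gamma}\Delta$;
\item $D = \Delta - \sigma\cZ_2$, with $\frac{\gamma}{1+\gamma}\sigma = \sqrt{\gamma/(1+\gamma)}\,\sigma_0$.
\end{itemize}
This gives
\[
\htau_{\rm PT}-\tau = \frac{\sigma_0\cZ_1}{\sqrt{1+\gamma}} + \frac{\gamma}{1+\gamma}\Delta\,\ind(\mathcal A) + \sqrt{\frac{\gamma}{1+\gamma}}\,\sigma_0\cZ_2\,\ind(\mathcal A^{\rm c}).
\]

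Finally, rewrite the event $\mathcal A$ in terms of $\cZ_2$:
\[
\mathcal A = \{|D|/\sigma \le c_{\alpha/2}\} = \Big\{\Big|\cZ_2 - \sqrt{\tfrac{\gamma}{1+\gamma}}\tfrac{\Delta}{\sigma_0}\Big| \le c_{\alpha/2}\Big\}.
\]
Replacing $\cZ_2$ by $-\cZ_2$, which has the same law and leaves $\cZ_1$ and independence intact, flips both the sign of the last term and the sign inside the indicator. This yields exactly the displayed representation.

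The only real obstacle is sign bookkeeping: one must choose the orientation of $\cZ_2$ consistently so that both the indicator and the $\cZ_2$-term match the statement simultaneously. Everything else follows from the independence of $\htau_{\rm PW}$ and $D$.
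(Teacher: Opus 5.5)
Your proof is correct and is essentially the paper's argument: split into the independent Gaussian pair $(\htau_{\rm PW}, \htau_1-\htau_0)$, rewrite $\htau_{\rm PT}-\tau = (\htau_{\rm PW}-\tau) - \frac{\gamma}{1+\gamma}(\htau_1-\htau_0)\ind(\mathcal{A}^{\textup{c}})$, and substitute the standardized variables. The only difference is cosmetic: contrary to your remark, the choice $\cZ_2 = -(D-\Delta)/\sigma$ produces the opposite sign and forces the final relabeling, whereas taking $\cZ_2 = (D-\Delta)/\sigma$ directly, as the paper does, gives $D = \Delta + \sigma\cZ_2$ and yields the displayed form with no flip.
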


By Lemma~\ref{lemma:clt_PT}, the distribution of $\hat\tau_{\rm PT}$ is a mixture of a normal distribution and a truncated normal distribution. We define $\hat{L}_{\rm PT} = \hat{L}_{\rm PT}(b, \zeta, \sigma_0^2, \sigma_1^2, \alpha)$ as the smallest length such that the confidence interval $[\htau_{\rm PT} - \hat{L}_{\rm PT} (1+\gamma)^{-1/2} \sigma_0, \htau_{\rm PT} + \hat{L}_{\rm PT} (1+\gamma)^{-1/2} \sigma_0]$ achieves correct coverage for all $\Delta$ satisfying $\lvert \Delta/\sigma_0 \rvert \le b$, where we choose the same scale as in Theorem~\ref{thm:L_PW} for comparison. By Definition~\ref{def:CI}, we can formulate $\hat{L}_{\rm PT}$ as the optimization problem
\begin{align}\label{eq:L_PT}
  \hat{L}_{\rm PT} = \hat{L}_{\rm PT}(b, \zeta, \sigma_0^2, \sigma_1^2, \alpha) = \inf \left\{ L \ge 0: \inf_{\Delta:\lvert \Delta/\sigma_0 \rvert \le b} \P_\Delta(\lvert \htau_{\rm PT} - \tau \rvert \le L (1+\gamma)^{-1/2} \sigma_0) \ge 1-\zeta\right\}.
\end{align}

However, $\hat{L}_{\rm PT}$ generally does not admit a closed-form expression. Moreover, direct computation of $\hat{L}_{\rm PT}$ based on \eqref{eq:L_PT} is computationally challenging since for each $L \ge 0$, the optimization problem \eqref{eq:L_PT} involves finding the infimum of $\P_\Delta(\lvert \htau_{\rm PT} - \tau \rvert \le L (1+\gamma)^{-1/2} \sigma_0)$ over all $\Delta$ satisfying $\lvert \Delta/\sigma_0 \rvert \le b$. Nevertheless, we show that computing $\hat{L}_{\rm PT}$ is tractable due to the following theorem:
\begin{theorem}\label{thm:L_PT}
    For any $L > 0$, $\P_\Delta(\lvert \htau_{\rm PT} - \tau \rvert \le L (1+\gamma)^{-1/2} \sigma_0)$ as a function of $\Delta$ is symmetric about $\Delta = 0$. Then $\hat{L}_{\rm PT} = \hat{L}_{\rm PT}(b, \zeta, \gamma, \alpha)$ in \eqref{eq:L_PT} is the solution to the following equation of $L$:
    \begin{align*}
      \min_{0 \le t \le b}\P_{\Delta/\sigma_0 = t}(\lvert \htau_{\rm PT} - \tau \rvert \le L (1+\gamma)^{-1/2} \sigma_0) = 1-\zeta,
    \end{align*}
    where
    \begin{align}\label{eq:P_PT}
      &\P_{\Delta/\sigma_0 = t}(\lvert \htau_{\rm PT} - \tau \rvert \le L (1+\gamma)^{-1/2} \sigma_0) \nonumber\\
      =& \Big[\Phi\Big(c_{\alpha/2} - \sqrt{\frac{\gamma}{1+\gamma}}t \Big) - \Phi\Big(-c_{\alpha/2} - \sqrt{\frac{\gamma}{1+\gamma}}t \Big)\Big]\Big[\Phi\Big(L - \frac{\gamma}{\sqrt{1+\gamma}}t \Big) - \Phi\Big(-L - \frac{\gamma}{\sqrt{1+\gamma}}t \Big)\Big]\nonumber\\
      &+ \int_{-\infty}^{-c_{\alpha/2} - \sqrt{\frac{\gamma}{1+\gamma}}t} \Big[\Phi\Big(L + \sqrt{\gamma} u \Big) - \Phi\Big(-L + \sqrt{\gamma} u\Big)\Big] \phi(u) \d u\nonumber\\
      &+ \int_{c_{\alpha/2} - \sqrt{\frac{\gamma}{1+\gamma}}t}^\infty \Big[\Phi\Big(L + \sqrt{\gamma} u\Big) - \Phi\Big(-L + \sqrt{\gamma} u\Big)\Big] \phi(u) \d u.
    \end{align}
    $[\htau_{\rm PT} - \hat{L}_{\rm PT} (1+\gamma)^{-1/2} \sigma_0, \htau_{\rm PT} + \hat{L}_{\rm PT} (1+\gamma)^{-1/2} \sigma_0]$ is the shortest length symmetric centered confidence interval based on $\htau_{\rm PT}$ for $\tau$ satisfying \eqref{eq:coverage}.
\end{theorem}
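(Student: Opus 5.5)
The proof starts from the representation in Lemma~\ref{lemma:clt_PT}. Write $\Delta/\sigma_0 = t$, $s = \sqrt{\gamma/(1+\gamma)}$, and divide $\htau_{\rm PT}-\tau$ by $(1+\gamma)^{-1/2}\sigma_0$. This gives the standardized error
\[
\cZ_1 + \frac{\gamma}{\sqrt{1+\gamma}}t\,\ind(\lvert \cZ_2 + st\rvert \le c_{\alpha/2}) - \sqrt{\gamma}\,\cZ_2\,\ind(\lvert \cZ_2 + st\rvert > c_{\alpha/2}),
\]
using $\sqrt{\gamma/(1+\gamma)}\cdot\sqrt{1+\gamma}=\sqrt{\gamma}$ and $\frac{\gamma}{1+\gamma}\sqrt{1+\gamma}=\gamma/\sqrt{1+\gamma}$.

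To obtain \eqref{eq:P_PT}, condition on $\cZ_2=u$ and use the independence of $\cZ_1$ and $\cZ_2$. There are two cases.
\begin{itemize}
\item On the acceptance region $\lvert u+st\rvert\le c_{\alpha/2}$, the error is $\cZ_1+\gamma t/\sqrt{1+\gamma}$. The conditional probability of $\lvert\cdot\rvert\le L$ is $\Phi(L-\gamma t/\sqrt{1+\gamma})-\Phi(-L-\gamma t/\sqrt{1+\gamma})$, which does not depend on $u$. It therefore factors out against $\P(\lvert\cZ_2+st\rvert\le c_{\alpha/2})$, and that probability is the first bracket.
\item On the rejection region, the error is $\cZ_1-\sqrt{\gamma}u$. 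The conditional coverage is $\Phi(L+\sqrt\gamma u)-\Phi(-L+\sqrt\gamma u)$. Integrating this against $\phi(u)$ over the two tails $u<-c_{\alpha/2}-st$ and $u>c_{\alpha/2}-st$ gives the two integrals.
\end{itemize}

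For the symmetry in $\Delta$, use the joint symmetry $(\cZ_1,\cZ_2)\mapsto(-\cZ_1,-\cZ_2)$. Replacing $t$ by $-t$ and flipping the signs of both normals maps the standardized error to its negative, since the indicator events $\lvert\cZ_2+st\rvert\le c_{\alpha/2}$ become $\lvert -\cZ_2 - st\rvert \le c_{\alpha/2}$, which are the same events. So the law of $\lvert\htau_{\rm PT}-\tau\rvert$ is the same under $\Delta$ and $-\Delta$. Alternatively, one can check symmetry directly in \eqref{eq:P_PT}: substitute $u\mapsto -u$ in the integrals and use $\Phi(a)-\Phi(-b) = \Phi(b)-\Phi(-a)$ on the symmetric bracket. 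Symmetry lets us replace the infimum over $\lvert\Delta/\sigma_0\rvert\le b$ by an infimum over $t\in[0,b]$. Since \eqref{eq:P_PT} is continuous in $t$ (all integrands are bounded and the limits move continuously), the infimum over the compact set $[0,b]$ is attained, which justifies writing $\min$.

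To characterize $\hat L_{\rm PT}$ as the root, define $G(L)=\min_{0\le t\le b}\P_t(\cdot)$, where $\P_t(\cdot)$ abbreviates $\P_{\Delta/\sigma_0=t}(\lvert \htau_{\rm PT} - \tau \rvert \le L (1+\gamma)^{-1/2} \sigma_0)$ from \eqref{eq:P_PT}. Three properties are needed.
\begin{itemize}
\item Monotonicity: for each $t$, $\P_t(\cdot)$ is strictly increasing in $L$, since each bracket $\Phi(L+a)-\Phi(-L+a)$ is strictly increasing and the standardized error has positive density everywhere. Hence $G$ is strictly increasing: if $L<L'$, the minimizer $t'$ for $L'$ gives $G(L)\le\P_{t'}(L)<\P_{t'}(L')=G(L')$.
\item Continuity: $G$ is continuous, as a minimum of a jointly continuous function over a compact set.
\item Boundary values: $G(0)=0$, and $G(L)\to1$ as $L\to\infty$, uniformly in $t\in[0,b]$ by dominated convergence and compactness.
\end{itemize}
By the intermediate value theorem there is a unique root of $G(L)=1-\zeta$, and the set $\{L: G(L)\ge1-\zeta\}$ equals $[\text{root},\infty)$. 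So its infimum in \eqref{eq:L_PT} is the root. The shortest-length statement follows because any symmetric centered interval with half-length $L(1+\gamma)^{-1/2}\sigma_0$ satisfies \eqref{eq:coverage} if and only if $G(L)\ge 1-\zeta$.

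The main technical care is the uniform convergence and joint continuity needed to make the minimum well defined and $G$ continuous; these are routine but must be stated. The one genuinely subtle point is the symmetry argument: the indicator events must be shown to transform correctly under the sign flip.
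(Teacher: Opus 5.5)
Your proposal is correct and follows essentially the paper's route: you split on the pretest acceptance and rejection events, condition on the pretest statistic, and use its independence from the precision-weighted error to get a conditional normal coverage probability, which you then integrate to obtain \eqref{eq:P_PT}. You go further than the paper on three points it only asserts: you prove the symmetry in $\Delta$ via the joint sign flip $(\mathcal{Z}_1,\mathcal{Z}_2,t)\mapsto(-\mathcal{Z}_1,-\mathcal{Z}_2,-t)$, you show the minimum over $t\in[0,b]$ is attained, and you show $G(L)=1-\zeta$ has a unique root that equals the infimum in \eqref{eq:L_PT}.
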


We have two main observations from Theorem~\ref{thm:L_PT}. First, for any $L>0$, the coverage probability $\P_\Delta(\lvert \htau_{\rm PT} - \tau \rvert \le L (1+\gamma)^{-1/2} \sigma_0)$ is symmetric about $\Delta=0$. As a result, instead of minimizing over all $\Delta$ satisfying $\lvert \Delta/\sigma_0 \rvert \le b$ as in \eqref{eq:L_PT}, we can restrict attention to the interval $0 \le \Delta/\sigma_0 \le b$. Second, Theorem~\ref{thm:L_PT} provides an explicit expression for the coverage probability $\P_{\Delta/\sigma_0 = t}(\lvert \htau_{\rm PT} - \tau \rvert \le L (1+\gamma)^{-1/2} \sigma_0)$ as a function of $t$ and $L$. The first term in \eqref{eq:P_PT} corresponds to the event that the pretest fails to reject the null hypothesis ($\Delta = 0$), in which case $\hat\tau_{\rm PT}$ reduces to the precision-weighted estimator. The second and third terms in \eqref{eq:P_PT} integrate over the regions where the pretest rejects the null hypothesis, in which case $\hat\tau_{\rm PT}$ reduces to the unbiased estimator.

Although the probability $\P_{\Delta/\sigma_0 = t}(\lvert \htau_{\rm PT} - \tau \rvert \le L (1+\gamma)^{-1/2} \sigma_0)$ admits a closed form for any $L > 0$, the minimum of $\P_{\Delta/\sigma_0 = t}(\lvert \htau_{\rm PT} - \tau \rvert \le L (1+\gamma)^{-1/2} \sigma_0)$ over $0 \le t \le b$ is not explicitly available in closed form, and thus must be computed numerically. Since the function $\P_{\Delta/\sigma_0 = t}(\lvert \htau_{\rm PT} - \tau \rvert \le L (1+\gamma)^{-1/2} \sigma_0)$ is monotonically increasing with respect to $L$ for $L \ge 0$, $L^*_{\rm PT}$ can be computed numerically using, for example, the bisection method, combined with numerical computation of the minimum of $\P_{\Delta/\sigma_0 = t}(\lvert \htau_{\rm PT} - \tau \rvert \le L (1+\gamma)^{-1/2} \sigma_0)$ with respect to $t$ for each candidate $L$. 

\subsection{Confidence interval based on the soft-thresholding estimator}

We now construct the confidence interval for the soft-thresholding estimator $\htau_{\rm ST}$. In the following lemma, we present the distribution of $\hat\tau_{\rm ST} - \tau$.

\begin{lemma}\label{lemma:clt_ST}
  Let $\cZ_1, \cZ_2$ be two independent standard normal random variables. Then $\hat\tau_{\rm ST} - \tau$ is distributed as
\begin{align*}
  &\frac{1}{\sqrt{1+\gamma}}\sigma_0\cZ_1 +  \frac{\gamma}{1+\gamma} \Delta\ind\left(\left\lvert \cZ_2 + \sqrt{\frac{\gamma}{1+\gamma}}\frac{\Delta}{\sigma_0}\right\rvert \le c_{\alpha/2}\right) \\
  &- \sqrt{\frac{\gamma}{1+\gamma}} \sigma_0\left(\cZ_2 - c_{\alpha/2}\sign\left(\cZ_2 + \sqrt{\frac{\gamma}{1+\gamma}}\frac{\Delta}{\sigma_0}\right)\right)\ind\left(\left\lvert \cZ_2 + \sqrt{\frac{\gamma}{1+\gamma}}\frac{\Delta}{\sigma_0}\right\rvert > c_{\alpha/2}\right).
\end{align*}
\end{lemma}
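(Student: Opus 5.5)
The approach is an orthogonal decomposition of $(\htau_0,\htau_1)$ into the precision-weighted combination and the difference, followed by substitution into the definition of $\htau_{\rm ST}$.

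Write $\htau_0 = \tau + \sigma_0 \varepsilon_0$ and $\htau_1 = \tau + \Delta + \sigma_1\varepsilon_1$, where $\varepsilon_0,\varepsilon_1$ are independent standard normals. Set $D = \htau_1 - \htau_0 = \Delta + \sigma_1\varepsilon_1 - \sigma_0\varepsilon_0 \sim N(\Delta,\sigma^2)$ with $\sigma^2 = \sigma_0^2+\sigma_1^2$. Also set $W = \htau_0 + \frac{\gamma}{1+\gamma}D = \htau_{\rm PW}$.

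First, I will check that $W$ and $D$ are jointly Gaussian and uncorrelated. Since $\frac{\gamma}{1+\gamma} = \sigma_0^2/\sigma^2$,
\[
\mathrm{Cov}(W,D) = \mathrm{Cov}(\htau_0,D) + \frac{\sigma_0^2}{\sigma^2}\mathrm{Var}(D) = -\sigma_0^2 + \sigma_0^2 = 0.
\]
Hence $W$ and $D$ are independent. Moreover $W - \tau \sim N\bigl(\frac{\gamma}{1+\gamma}\Delta, \frac{\sigma_0^2}{1+\gamma}\bigr)$, using $\mathrm{Var}(W)=\sigma_0^2\sigma_1^2/\sigma^2 = \sigma_0^2/(1+\gamma)$. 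This lets me define independent standard normals
\[
\cZ_1 = \sqrt{1+\gamma}\,\Bigl(W-\tau-\tfrac{\gamma}{1+\gamma}\Delta\Bigr)/\sigma_0, \qquad \cZ_2 = (D-\Delta)/\sigma.
\]
The scalings I need are $\sigma = \sigma_0\sqrt{(1+\gamma)/\gamma}$, so that $\Delta/\sigma = \sqrt{\gamma/(1+\gamma)}\,\Delta/\sigma_0$ and $\frac{\gamma}{1+\gamma}\sigma = \sqrt{\gamma/(1+\gamma)}\,\sigma_0$. With these, the event $\mathcal{A}$ becomes $\bigl\{\lvert \cZ_2 + \sqrt{\gamma/(1+\gamma)}\,\Delta/\sigma_0\rvert \le c_{\alpha/2}\bigr\}$, and $\sign(D) = \sign\bigl(\cZ_2 + \sqrt{\gamma/(1+\gamma)}\,\Delta/\sigma_0\bigr)$.

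Next, I will rewrite the estimator as
\[
\htau_{\rm ST} = W - \frac{\gamma}{1+\gamma}D\,\ind(\mathcal{A}^{\rm c}) + \frac{\gamma}{1+\gamma}\sigma c_{\alpha/2}\sign(D)\,\ind(\mathcal{A}^{\rm c}).
\]
Then I substitute $W - \tau = \sigma_0\cZ_1/\sqrt{1+\gamma} + \frac{\gamma}{1+\gamma}\Delta$ and $\frac{\gamma}{1+\gamma}D = \frac{\gamma}{1+\gamma}\Delta + \sqrt{\gamma/(1+\gamma)}\,\sigma_0\cZ_2$.

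Combining the two $\frac{\gamma}{1+\gamma}\Delta$ terms gives $\frac{\gamma}{1+\gamma}\Delta\,(1-\ind(\mathcal{A}^{\rm c})) = \frac{\gamma}{1+\gamma}\Delta\,\ind(\mathcal{A})$. The remaining terms on $\mathcal{A}^{\rm c}$ are $-\sqrt{\gamma/(1+\gamma)}\,\sigma_0\bigl(\cZ_2 - c_{\alpha/2}\sign(\cdot)\bigr)$, which gives the stated representation.

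The only step needing care is the covariance computation that yields independence of $W$ and $D$, together with keeping the constants $\sigma$ versus $\sigma_0$ straight. The rest is bookkeeping, identical to the argument for Lemma~\ref{lemma:clt_PT} plus the extra sign term.
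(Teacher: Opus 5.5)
Your proposal is correct and follows the paper's proof essentially line for line: the paper also passes to the independent Gaussian pair $\bigl(\htau_0+\frac{\gamma}{1+\gamma}(\htau_1-\htau_0)-\tau,\ \htau_1-\htau_0\bigr)$ with diagonal covariance, rewrites $\htau_{\rm ST}$ as the precision-weighted estimator minus a correction on $\mathcal{A}^{\rm c}$, and substitutes the standardized variables. Your explicit definition of $\cZ_1,\cZ_2$ as functions of the data is a slight improvement over the paper's ``there exist'' phrasing with marginal equalities in distribution, because it makes the representation hold almost surely and so justifies substituting both variables jointly.
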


By Lemma~\ref{lemma:clt_ST}, the distribution of $\hat\tau_{\rm ST}$ is a mixture of a normal distribution and a truncated normal distribution. Suppose $\lvert \Delta/\sigma_0 \rvert \leq b$ for some $b > 0$. We seek the shortest length $\hat{L}_{\rm ST} = \hat{L}_{\rm ST}(b, \zeta, \sigma_0^2, \sigma_1^2, \alpha)$ such that the confidence interval $[\htau_{\rm ST}- \hat{L}_{\rm ST} (1+\gamma)^{-1/2} \sigma_0, \htau_{\rm ST}+ \hat{L}_{\rm ST} (1+\gamma)^{-1/2} \sigma_0]$ achieves correct coverage uniformly over all $\Delta$ with $\lvert \Delta/\sigma_0 \rvert \le b$. By Definition~\ref{def:CI}, we can formulate $\hat{L}_{\rm ST}$ as the optimization problem:
\begin{align}\label{eq:L}
  \hat{L}_{\rm ST} = \inf \left\{ L \ge 0: \inf_{\Delta:\lvert \Delta/\sigma_0 \rvert \le b} \P_\Delta(\lvert \htau_{\rm ST}- \tau \rvert \le L (1+\gamma)^{-1/2} \sigma_0) \ge 1-\zeta\right\}. 
\end{align}

However, $\hat{L}_{\rm ST}$ generally does not admit a closed-form expression. Moreover, direct computation of $\hat{L}_{\rm ST}$ based on \eqref{eq:L} is computationally challenging since for each $L \ge 0$, the optimization problem \eqref{eq:L} involves finding the infimum of $\P_\Delta(\lvert \htau_{\rm ST} - \tau \rvert \le L (1+\gamma)^{-1/2} \sigma_0)$ over all $\Delta$ satisfying $\lvert \Delta/\sigma_0 \rvert \le b$. Nevertheless, we show that $\hat{L}_{\rm ST}$ can be computed efficiently due to the following theorem:
\begin{theorem}\label{thm:L_ST}
  For any $L > 0$, $\P_\Delta(\lvert \htau_{\rm ST}- \tau \rvert \le L (1+\gamma)^{-1/2} \sigma_0)$ as a function of $\Delta$ is symmetric about $\Delta = 0$ and monotonically decreasing in $\lvert \Delta \rvert$. Then $\hat{L}_{\rm ST} = \hat{L}_{\rm ST}(b, \zeta, \gamma, \alpha)$ in \eqref{eq:L} is the solution to the following equation of $L$:
  \begin{align*}
    \P_{\Delta/\sigma_0 = b}(\lvert \htau_{\rm ST}- \tau \rvert \le L (1+\gamma)^{-1/2} \sigma_0) = 1-\zeta,
  \end{align*} 
  where
  \begin{align}\label{eq:P_ST}
    &\P_{\Delta/\sigma_0 = t}(\lvert \htau_{\rm ST}- \tau \rvert \le L (1+\gamma)^{-1/2} \sigma_0) \nonumber\\
    =& \Big[\Phi\Big(c_{\alpha/2} - \sqrt{\frac{\gamma}{1+\gamma}}t \Big) - \Phi\Big(-c_{\alpha/2} - \sqrt{\frac{\gamma}{1+\gamma}}t \Big)\Big]\Big[\Phi\Big(L - \frac{\gamma}{\sqrt{1+\gamma}}t \Big) - \Phi\Big(-L - \frac{\gamma}{\sqrt{1+\gamma}}t \Big)\Big]\nonumber\\
    &+ \int_{-\infty}^{-c_{\alpha/2} - \sqrt{\frac{\gamma}{1+\gamma}}t} \Big[\Phi\Big(L + \sqrt{\gamma} (u+c_{\alpha/2})\Big) - \Phi\Big(-L + \sqrt{\gamma} (u+c_{\alpha/2})\Big)\Big] \phi(u) \d u\nonumber\\
    &+ \int_{c_{\alpha/2} - \sqrt{\frac{\gamma}{1+\gamma}}t}^\infty \Big[\Phi\Big(L + \sqrt{\gamma} (u-c_{\alpha/2})\Big) - \Phi\Big(-L + \sqrt{\gamma} (u-c_{\alpha/2})\Big)\Big] \phi(u) \d u.
  \end{align}
  $[\htau_{\rm ST} - \hat{L}_{\rm ST} (1+\gamma)^{-1/2} \sigma_0, \htau_{\rm ST} + \hat{L}_{\rm ST} (1+\gamma)^{-1/2} \sigma_0]$ is the shortest length symmetric centered confidence interval based on $\htau_{\rm ST}$ for $\tau$ satisfying \eqref{eq:coverage}.
\end{theorem}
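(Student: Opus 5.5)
Starting from Lemma~\ref{lemma:clt_ST}, I would first derive \eqref{eq:P_ST} by conditioning on $\cZ_2$. Write $s=\sqrt{\gamma/(1+\gamma)}\,t$ and $\Delta=t\sigma_0$, and divide by $(1+\gamma)^{-1/2}\sigma_0$. The normalized error is then $\cZ_1 + \gamma t/\sqrt{1+\gamma}$ on the event $\{|\cZ_2+s|\le c_{\alpha/2}\}$, and $\cZ_1 - \sqrt{\gamma}\,(\cZ_2 - c_{\alpha/2}\sign(\cZ_2+s))$ off it. Using the independence of $\cZ_1$ and $\cZ_2$, the first event contributes the product term. On $\{\cZ_2+s>c_{\alpha/2}\}$, substituting $u=\cZ_2$ gives $\P(|\cZ_1-\sqrt\gamma(u-c_{\alpha/2})|\le L)=\Phi(L+\sqrt\gamma(u-c_{\alpha/2}))-\Phi(-L+\sqrt\gamma(u-c_{\alpha/2}))$, which produces the third term; the second term is analogous. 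For symmetry, replacing $(\cZ_1,\cZ_2,\Delta)$ by $(-\cZ_1,-\cZ_2,-\Delta)$ negates the error variable, so $|\htau_{\rm ST}-\tau|$ has the same law under $\Delta$ and $-\Delta$.

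The core step is monotonicity in $t\ge0$, which I would establish by differentiating \eqref{eq:P_ST} in $t$. Write $g(x)=\Phi(L+x)-\Phi(-L+x)$, which is even and decreasing in $|x|$, and set $a=\gamma/\sqrt{1+\gamma}$. The two integral terms have $t$-dependent limits. Their boundary values are $g(\sqrt\gamma(-s))\phi(-c_{\alpha/2}-s)$ at the lower endpoint and $g(\sqrt\gamma(-s))\phi(c_{\alpha/2}-s)$ at the upper one. Since $\sqrt\gamma\, s=\gamma t/\sqrt{1+\gamma}=at$, these equal $g(at)\phi(c_{\alpha/2}+s)$ and $g(at)\phi(c_{\alpha/2}-s)$. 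This is exactly the continuity of soft-thresholding at the boundary. Differentiating the product term gives $[-\phi(c_{\alpha/2}-s)+\phi(c_{\alpha/2}+s)]\sqrt{\gamma/(1+\gamma)}\,g(at)$ plus $[\Phi(c_{\alpha/2}-s)-\Phi(-c_{\alpha/2}-s)]\,a\,g'(at)$. The boundary contributions from the integrals, each multiplied by $\sqrt{\gamma/(1+\gamma)}$ with the appropriate sign, should exactly cancel the first bracket. I expect this bookkeeping, with the signs of $ds/dt$ on both limits, to be the main obstacle. If it works, the derivative reduces to $[\Phi(c_{\alpha/2}-s)-\Phi(-c_{\alpha/2}-s)]\,a\,g'(at)$. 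This is $\le 0$ for $t\ge0$ because $g'(x)=\phi(L+x)-\phi(-L+x)\le0$ for $x\ge0$ and $L>0$. Monotone decrease in $|\Delta|$ follows. By contrast, for the pretest estimator the boundary values do not match, which explains why Theorem~\ref{thm:L_PT} needs a minimum over $t$.

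Given symmetry and monotonicity, $\inf_{|\Delta/\sigma_0|\le b}$ of the coverage equals its value at $t=b$, so \eqref{eq:L} becomes $\hat L_{\rm ST}=\inf\{L\ge0: F(L)\ge 1-\zeta\}$ with $F(L)=\P_{\Delta/\sigma_0=b}(\cdot)$. The function $F$ is continuous and strictly increasing in $L$, since each integrand $g$ is strictly increasing in $L$. It also satisfies $F(0)=0$ and $F(L)\to1$ as $L\to\infty$ by dominated convergence. Hence the equation $F(L)=1-\zeta$ has a unique solution, which equals $\hat L_{\rm ST}$. Shortest length among symmetric centered intervals follows because any smaller $L$ gives coverage $F(L)<1-\zeta$ at $\Delta=b\sigma_0$, which violates \eqref{eq:coverage}.
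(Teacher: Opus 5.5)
Your proof is correct. For the one nontrivial step, monotonicity in $\lvert\Delta\rvert$, you take a different route from the paper.

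The bookkeeping you flagged does close. Set $s=\sqrt{\gamma/(1+\gamma)}\,t$; there are three contributions:
\begin{itemize}
\item the moving endpoint $-c_{\alpha/2}-s$ of the first integral contributes $-\frac{ds}{dt}\,g(at)\phi(c_{\alpha/2}+s)$;
\item the endpoint $c_{\alpha/2}-s$ of the second integral contributes $+\frac{ds}{dt}\,g(at)\phi(c_{\alpha/2}-s)$;
\item the acceptance probability contributes $\frac{ds}{dt}\,[\phi(c_{\alpha/2}+s)-\phi(c_{\alpha/2}-s)]\,g(at)$.
\end{itemize}
These cancel, leaving
\[
\frac{d}{dt}\,\P_{\Delta/\sigma_0=t}\bigl(\lvert \htau_{\rm ST}-\tau\rvert\le L(1+\gamma)^{-1/2}\sigma_0\bigr)=a\,g'(at)\bigl[\Phi(c_{\alpha/2}-s)-\Phi(-c_{\alpha/2}-s)\bigr]\le 0,\qquad t\ge 0.
\]
Because the integrands in \eqref{eq:P_ST} do not depend on $t$, only these Leibniz boundary terms arise.

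The paper instead writes the normalized error as $Z+\mu_t(U)$, with $\mu_t(U)=at-\sqrt{\gamma}\,S(U)$ and $U\sim N(s,1)$. It then asserts, without checking, that increasing $t$ makes $\lvert\mu_t(U)\rvert$ stochastically larger. The claim is true. Under the coupling $U=s+\mathcal{Z}_2$,
\[
\mu_t(U)/\sqrt{\gamma}=\min\{\max(s+\mathcal{Z}_2,-c_{\alpha/2}),c_{\alpha/2}\}-\mathcal{Z}_2,
\]
and its absolute value is pointwise nondecreasing in $s\ge 0$.

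The coupling route is shorter and does not need the closed form \eqref{eq:P_ST}. Your route is fully explicit and gives a closed-form derivative, hence strict decrease for $t>0$. It also shows that the cancellation is exactly the continuity of soft-thresholding at the pretest boundary, which is what fails for $\htau_{\rm PT}$. In addition, you prove the symmetry via the sign flip and the existence and uniqueness of the root in $L$; the paper takes both for granted.
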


We have two main observations from Theorem~\ref{thm:L_ST}. First, for any $L>0$, the coverage probability $\P_\Delta(\lvert \htau_{\rm ST} - \tau \rvert \le L (1+\gamma)^{-1/2} \sigma_0)$ is symmetric about $\Delta=0$, unlike the pretest estimator, monotonically decreasing in $\lvert \Delta \rvert$. This monotonicity implies that the worst-case coverage over the bias $\Delta$ satisfying $\lvert \Delta/\sigma_0 \rvert \le b$ as in \eqref{eq:L} is always attained at the boundary $\Delta/\sigma_0 = b$. The monotonicity of the coverage probability $\P_\Delta(\lvert \htau_{\rm ST} - \tau \rvert \le L (1+\gamma)^{-1/2} \sigma_0)$ makes the computation of $\hat{L}_{\rm ST}$ more efficient than that of $\hat{L}_{\rm PT}$. Second, Theorem~\ref{thm:L_ST} provides an explicit expression for the coverage probability $\P_{\Delta/\sigma_0 = t}(\lvert \htau_{\rm ST} - \tau \rvert \le L (1+\gamma)^{-1/2} \sigma_0)$ as a function of $t$ and $L$. The first term in \eqref{eq:P_ST} corresponds to the event that the pretest fails to reject the null hypothesis ($\Delta = 0$), in which case $\hat\tau_{\rm ST}$ reduces to the precision-weighted estimator. The second and third terms in \eqref{eq:P_ST} integrate over the regions where the pretest rejects the null hypothesis, in which case $\hat\tau_{\rm ST}$ reduces to the unbiased estimator with a constant shift to ensure continuity at the pretest boundary.

Since $\P_{\Delta/\sigma_0 = b}(\lvert \htau_{\rm ST}- \tau \rvert \le L (1+\gamma)^{-1/2} \sigma_0)$ is monotonically increasing in $L$, $\hat{L}_{\rm ST}$ can be efficiently computed by, for example, the bisection method.

\subsection{Comparison of the point estimators and confidence intervals}

Compared with $\htau_{\rm PW}$, $\htau_{\rm ST}$ has bounded worst-case risk. The mean squared error of $\htau_{\rm ST}$ remains bounded regardless of the magnitude of the bias, whereas the mean squared error of $\htau_{\rm PW}$ grows without bound as the bias increases. Thus, $\htau_{\rm ST}$ offers a more balanced compromise between efficiency and robustness: it retains efficiency comparable to $\htau_{\rm PW}$ when the bias is small, yet its performance comparable to the unbiased estimator when the bias is large.

Compared with $\htau_{\rm PT}$, $\htau_{\rm ST}$ enjoys a desirable monotonicity property: for any $L > 0$, the probability $\P_{\Delta}(\lvert \htau_{\rm ST}- \tau \rvert \le L)$ decreases monotonically in $\lvert \Delta \rvert$ (Theorem~\ref{thm:L_ST}). This behavior parallels the result in \cite{bickel1983minimax}, which shows that the mean squared error of $\htau_{\rm ST}$ increases monotonically in $\lvert \Delta \rvert$. This monotonicity of $\htau_{\rm ST}$ allows us to compute the confidence interval based on $\htau_{\rm ST}$ more efficiently than $\htau_{\rm PT}$, as shown in Theorem~\ref{thm:L_ST}.

Figure~\ref{fig:length} compares the confidence intervals based on $\htau_{\rm ST}$, $\htau_{\rm PT}$, $\htau_{\rm PW}$, and the unbiased estimator $\htau_0$, as the bias bound $b$ varies. We fix $\sigma_0^2 = 1$ and $\htau_0 = 1$. Let $\zeta = 0.05$ and $\alpha = 0.05$. We set $\htau_1 = 2$ and examine two scenarios: $\gamma = 10$ and $\gamma = 100$. We observe that when the bias bound $b$ is small, the confidence intervals based on $\htau_{\rm ST}$, $\htau_{\rm PT}$, and $\htau_{\rm PW}$ are all shorter than that based on $\htau_0$. Importantly, when the bias bound $b$ is small, the confidence interval based on $\htau_{\rm ST}$ is shorter than $\htau_{\rm PT}$ and is comparable to $\htau_{\rm PW}$. Furthermore, the confidence interval length based on $\htau_{\rm ST}$ remains bounded, whereas that based on $\htau_{\rm PW}$ grows without bound as $b$ increases. Therefore, $\htau_{\rm ST}$ combines the efficiency gains of precision-weighted methods when bias is small with robustness to large biases, making it a superior choice for inference in practice.

\begin{figure}[htbp]
  \centering
  \includegraphics[width=\linewidth]{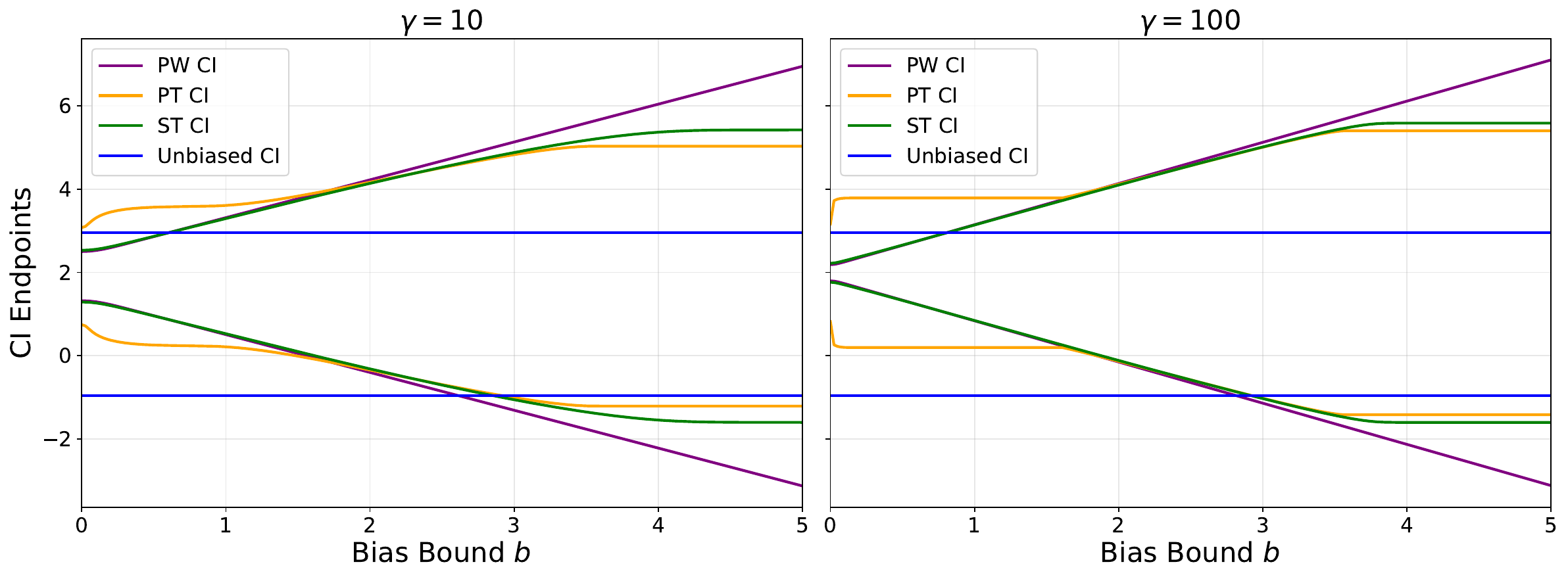}
  \caption{Confidence intervals against the maximum relative bias $\lvert \Delta/\sigma_0 \rvert \le b$}
  \label{fig:length}
\end{figure}

\begin{remark}[Computing the b-value]
  After plotting the estimators and their confidence intervals as in Figure~\ref{fig:length}, we are ready to read the b-value for each estimator given the significance level $\zeta$. Since the sequence of confidence intervals is different for different estimators, the b-values are also different. Alternatively, we can also compute the b-value directly. A naive way is to use bisection method by the monotonicity of the confidence interval, which involves computing the confidence interval for each possible bias level $b$, and see whether the confidence interval contains the null value. However, this procedure is computationally heavy since the computation of the confidence interval for a given $b$ also involves the bisection method when based on $\htau_{\rm PT}$ and $\htau_{\rm ST}$. In Appendix~\ref{sec:bvalue}, we propose a method to compute the b-value efficiently.
\end{remark}

\subsection{Generalization to the dependent case}\label{sec:dependent}

So far we assume independent unbiased and biased estimators. In this section, we generalize the above discussion to the case where $\htau_0$ and $\htau_1$ are jointly normal with known correlation $\rho$. We assume $\rho \sigma_1 \neq \sigma_0$, which trivially holds when $\sigma_0^2 > \sigma_1^2$, or equivalently $\gamma > 1$. The key is to construct a biased estimator which is independent of $\htau_0$ without losing information of $\htau_1$. To achieve that, we define the reparametrization
\begin{align}\label{eq:htau1_prime}
  \htau_1' = \frac{\htau_1 - (\rho \sigma_1/\sigma_0)\htau_0}{1 - \rho \sigma_1/\sigma_0}.
\end{align}
Then, $\htau_0$ and $\htau_1'$ are independent, with $\htau_1' \sim N(\tau + \Delta', \sigma_1'^2)$, where 
\[
  \Delta' = \frac{\Delta}{1 - \rho \sigma_1/\sigma_0}, \quad \sigma_1'^2 = \frac{(1-\rho^2)\sigma_1^2}{(1-\rho \sigma_1/\sigma_0)^2}.
\] 
Thus, the problem reduces to the independent case considered above.
% If $\rho \sigma_1 = \sigma_0$, $\htau_1'$ is not well defined. The reason is the precision-weighted estimator reduces to $\htau_0$ in this case, meaning we exclusively use $\htau_0$ regardless of the value of $\Delta$, or $\htau_1$ does not provide additional information beyond $\htau_0$.

In the reparametrization in \eqref{eq:htau1_prime}, we need to know how the transformation rescales the bias and how to interpret the relative bias. We can compute the confidence intervals and the b-value based on the reparametrization with independent unbiased estimator $\htau_0$ and transformed biased estimator $\htau_1'$. Let the confidence intervals and the b-value in the transformed problem be $\htau - \cI'(b, \zeta)$ and $b^{*’}$, respectively. Then the confidence intervals for the original problem are $\htau - \cI(b, \zeta) = \htau - \cI'(b/\lvert 1 - \rho \sigma_1/\sigma_0 \rvert, \zeta)$ and the b-value is $b^* = \lvert 1 - \rho \sigma_1/\sigma_0 \rvert b^{*'}$, respectively. We relegate the technical details to Appendix~\ref{sec:dependence}.

\section{Generalization to the multivariate case}\label{sec:multivariate}

In this section, we extend our framework to the multivariate case. Many applications involve vector-valued parameters rather than scalars. Consider two leading examples in causal inference. First, when a treatment has multiple levels, the parameter of interest is a vector of treatment effects across those levels. This setting also includes factorial designs, where researchers aim to estimate factorial effects jointly. Second, when the population is partitioned into subgroups, the focus is often on subgroup treatment effects, yielding a vector of conditional average treatment effects indexed by the subgroup variable \citep{schwartz2025harmonized}. Beyond causal inference, similar issues arise in regression settings. For instance, when combining OLS and IV estimators, the target parameter can be a vector of regression coefficients. Therefore, generalization to the multivariate case is essential for applications where researchers must make inferences about multiple parameters.

\subsection{Setup}

We consider the following multivariate setting:
\begin{assumption}\label{ass:normal_multi}
  Suppose we observe two independent random vectors: one unbiased estimator $\bm{\htau}_0 \sim N(\bm{\tau}, \bm{\Sigma}_0)$ and one biased estimator $\bm{\htau}_1 \sim N(\bm{\tau} + \bm{\Delta}, \bm{\Sigma}_1)$. Here $\bm{\tau} \in \mathbb{R}^d$ is the unknown parameter of interest. We assume that $\bm{\Sigma}_0$ and $\bm{\Sigma}_1$ are known but $\bm{\Delta}$ is unknown.
\end{assumption}

As in the univariate case, we present our analysis in the exact normality setting which asymptotically captures the essential structure of inference as in \cite{wald1943tests} and \cite{le1956asymptotic}. Our goal is to construct confidence regions for $\bm{\tau}$ at different bias levels. Although we assume independent unbiased and biased estimators, the extension to the dependent case is straightforward; see Section~\ref{sec:dependent}. Since $\bm{\htau}_0$ is unbiased for $\bm{\tau}$, given significance level $\zeta \in (0,1)$, a natural confidence region based solely on $\bm{\htau}_0$ is the ellipsoid:
\begin{align}
  \{\bm{\tau} \in \mathbb{R}^d : (\bm{\htau}_0 - \bm{\tau})^\top \bm{\Sigma}_0^{-1} (\bm{\htau}_0 - \bm{\tau}) \leq \chi^2_{d,1-\zeta}\},
\end{align}
where $\chi^2_{d,1-\zeta}$ is the $(1-\zeta)$ upper quantile of the chi-squared distribution with $d$ degrees of freedom.

Then we consider combining the two estimators $\bm{\htau}_0$ and $\bm{\htau}_1$, and we consider a generic combined estimator $\bm{\hat\tau} = \bm{\hat\tau}(\bm{\htau}_0, \bm{\htau}_1, \bm{\Sigma}_0, \bm{\Sigma}_1)$. We assume that $\lvert [\bm{\Sigma}^{-1/2} \bm{\Delta}]_j \rvert \le b_j$ for some $b_j \ge 0$ for all $j=1,2,\ldots,d$ and study how the confidence region changes with the maximum relative bias vector $\bm{b} = (b_1, b_2, \ldots, b_d)^\top$. Here $\bm{\Sigma} = \bm{\Sigma}(\bm{\Sigma}_0, \bm{\Sigma}_1) \in \bR^{d \times d}$ can be any fixed positive definite scaling matrix that depends on $\bm{\Sigma}_0$ and $\bm{\Sigma}_1$, which corresponds to $\sigma_0^2$ we used in the univariate case. For a given bias vector $\bm{b}$, different choices of $\bm{\Sigma}$ correspond to different regions for the bias $\Delta$. In this paper, we do not discuss how to choose the scaling matrix $\bm{\Sigma}$ optimally. For simplicity, one may take $\bm{\Sigma} = \bm{I}_d$, where $\bm{I}_d$ is the $d \times d$ identity matrix, or set $\bm{\Sigma} = \bm{\Sigma}_0$ as in the univariate case. Analogous to Definition~\ref{def:CI} for the univariate case, the confidence region is defined below.

\begin{definition}\label{def:CI_multi}
Given a significance level $\zeta \in (0,1)$ and the maximum relative bias vector $\bm{b}$ with $b_j \ge 0$ for all $j=1,2,\ldots,d$, we seek a region $\bm{\cI}(\bm{b}, \zeta) = \bm{\cI}(\bm{b}, \zeta, \bm{\htau}_0, \bm{\htau}_1, \bm{\Sigma}_0, \bm{\Sigma}_1)$ such that
\begin{align}\label{eq:coverage_multi}
  \inf_{\bm{\Delta}: \lvert \bm{\Sigma}^{-1/2} \bm{\Delta}\rvert \le \bm{b}} \P_{\bm{\Delta}}(\bm{\tau} \in \bm{\hat\tau} - \bm{\cI}(\bm{b}, \zeta)) 
  = \inf_{\bm{\Delta}: \lvert \bm{\Sigma}^{-1/2} \bm{\Delta}\rvert \le \bm{b}} \P_{\bm{\Delta}}(\bm{\hat\tau} - \bm{\tau} \in \bm{\cI}(\bm{b}, \zeta)) \ge 1 - \zeta.
\end{align}
The confidence region based on $\bm{\hat\tau}$ is then given by $\bm{\hat\tau}-\bm{\cI}(\bm{b},\zeta)$.
\end{definition}

Then we introduce the monotonicity conditions on the region $\bm{\cI}(\bm{b},\zeta)$ below, which generalizes Assumption~\ref{asp:monotonicity} to the multivariate case.

\begin{assumption}\label{asp:monotonicity_multi}
  We assume:
  \begin{enumerate}
    \item For fixed $(\bm{\htau}_0, \bm{\htau}_1, \bm{\Sigma}_0, \bm{\Sigma}_1)$ and $\zeta$, we have $\bm{\cI}(\bm{b}, \zeta) \subset \bm{\cI}(\bm{b'}, \zeta)$ whenever $\bm{b} \le \bm{b'}$, i.e., $b_j \le b'_j$ for all $j=1,2,\ldots,d$.
    \item For fixed $(\bm{\htau}_0, \bm{\htau}_1, \bm{\Sigma}_0, \bm{\Sigma}_1)$ and $\bm{b}$, we have $\bm{\cI}(\bm{b}, \zeta) \subset \bm{\cI}(\bm{b}, \zeta')$ whenever $\zeta \ge \zeta'$.
  \end{enumerate}
\end{assumption}

Assume $\bm{\cI}(\bm{b},\zeta)$ satisfies the monotonicity conditions of Assumption~\ref{asp:monotonicity_multi}. A common family of such regions is fixed-length centered ellipsoids: $\bm{\cI}(\bm{b}, \zeta) = \{\bm{h} \in \bR^d: \bm{h}^\top \bm{A}^{-1} \bm{h} \le c(\bm{b}, \zeta, \bm{\Sigma}_0, \bm{\Sigma}_1)\}$ with some constant $c(\bm{b}, \zeta, \bm{\Sigma}_0, \bm{\Sigma}_1) \ge 0$ and a positive definite matrix $\bm{A}$. In this case, the confidence region $\bm{\hat\tau}-\bm{\cI}(\bm{b},\zeta)$ in Definition~\ref{def:CI_multi} is given by $\{\bm{\tau} \in \bR^d: (\bm{\htau} - \bm{\tau})^\top \bm{A}^{-1} (\bm{\htau} - \bm{\tau}) \le c(\bm{b}, \zeta, \bm{\Sigma}_0, \bm{\Sigma}_1)\}$.

\begin{remark}
  Unlike the univariate case, the construction of confidence regions in the multivariate setting is more complicated. First, in one dimension, the only convex confidence set is an interval, whereas in higher dimensions, there exists a wide variety of admissible convex confidence regions. Although ellipsoidal regions enjoy optimality properties under specific conditions \citep{stein1962confidence, wald1949statistical}, the appropriate geometry depends on the family of contrasts and the norm used to measure uncertainty. Ellipsoidal regions based on the Mahalanobis distance \citep{hotelling1931generalization} arise naturally under the Gaussian model, but alternative geometries have been studied in the literature \citep{tukey1949comparing,scheffe1953method,vsidak1967rectangular}. Second, the optimal choice of the center of the confidence region is not straightforward. For $d \ge 3$, Stein’s phenomenon implies that recentering at shrinkage estimators such as the James–Stein estimator can yield confidence regions with smaller volume while maintaining nominal coverage \citep{stein1962confidence,hwang1982minimax,berger1985statistical}. In this section we follow one specific track: constructing ellipsoidal confidence regions under the Mahalanobis distance, centered at prespecified estimators, in parallel with our discussion for the univariate case in Section~\ref{sec:confidence}. We do not focus on the optimality among all possible shapes or centers, but rather focus on this formulation for its analytical tractability and interpretability within our framework.
\end{remark}

Then we define the multivariate b-value below.

\begin{definition}\label{def:b_multi}
  Define the b-value as the critical boundary $\bm{b}^*$ of testing $\bm{\tau} = \bm{0}$ versus $\bm{\tau} \neq \bm{0}$ as
  \begin{align}\label{eq:b_multi}
    \bm{b}^*(\zeta) = \bm{b}^*(\zeta, \bm{\htau}_0, \bm{\htau}_1, \bm{\Sigma}_0, \bm{\Sigma}_1) = \partial\left\{\bm{b} \ge \bm{0}: \bm{0} \in \bm{\hat\tau} - \bm{\cI}(\bm{b}, \zeta)\right\}.
  \end{align}
\end{definition}

Definition~\ref{def:b_multi} extends the univariate notion of the b-value (Definition~\ref{def:b}) to the multivariate setting. In higher dimensions $d > 1$, the set $\left\{\bm{b} \ge \bm{0}: \bm{0} \in \bm{\hat\tau} - \bm{\cI}(\bm{b}, \zeta)\right\}$ is a convex region in the first quadrant. The b-value is defined as the boundary of this region, $\partial\{\bm{b} \ge \bm{0}: \bm{0} \in \bm{\hat\tau} - \bm{\cI}(\bm{b}, \zeta)\}$, which is a $(d-1)$-dimensional surface, e.g., a curve when $d=2$, in the first quadrant. When $d=1$, this boundary reduces to the single left endpoint $\inf\left\{b \ge 0: 0 \in \hat\tau - \cI(b, \zeta)\right\}$, which is exactly Definition~\ref{def:b}. By the monotonicity conditions in Assumption~\ref{asp:monotonicity_multi}, the geometry of this boundary yields a natural decision rule: we reject the null hypothesis $\bm{\tau} = \bm{0}$ if the bias bound vector $\bm{b}$ lies to the left of the b-value surface, and fail to reject it if $\bm{b}$ lies on or to the right of the b-value surface. To compute and visualize the multivariate b-value, we can plot the estimators and their confidence regions, and then we are ready to read the b-value for each estimator given the significance level $\zeta$.

\subsection{Point estimation}\label{sec:point_multi}

First, we recall the precision-weighted estimator defined as
\[
  \bm{\htau}_{\rm PW} = (\bm{\Sigma}_0^{-1} + \bm{\Sigma}_1^{-1})^{-1}(\bm{\Sigma}_0^{-1} \bm{\htau}_0 + \bm{\Sigma}_1^{-1} \bm{\htau}_1) = \bm{\htau}_0 + (\bm{\Sigma}_0^{-1} + \bm{\Sigma}_1^{-1})^{-1} \bm{\Sigma}_1^{-1} (\bm{\htau}_1 - \bm{\htau}_0).
\]
The precision-weighted estimator is the maximum likelihood estimator of $\bm{\tau}$ if $\bm{\Delta}$ is known to be all zeros. However, its risk is large when $\lVert \bm{\Delta} \rVert_2$ is large. 
% To see that, since $\bm{\htau}_{\rm PW} \sim N(\bm{\tau} + (\bm{\Sigma}_0^{-1} + \bm{\Sigma}_1^{-1})^{-1}\bm{\Sigma}_1^{-1}\bm{\Delta}, (\bm{\Sigma}_0^{-1} + \bm{\Sigma}_1^{-1})^{-1})$, the bias of $\bm{\htau}_{\rm PW}$ in estimating $\bm{\tau}$ is $ (\bm{\Sigma}_0^{-1} + \bm{\Sigma}_1^{-1})^{-1}\bm{\Sigma}_1^{-1}\bm{\Delta}$, and its $L_2$ risk is
% \[
% \E_{\bm{\Delta}}[\lVert \bm{\htau}_{\rm PW} - \bm{\tau} \rVert_2^2] = \lVert (\bm{\Sigma}_0^{-1} + \bm{\Sigma}_1^{-1})^{-1}\bm{\Sigma}_1^{-1}\bm{\Delta} \rVert_2^2 + {\rm Tr}((\bm{\Sigma}_0^{-1} + \bm{\Sigma}_1^{-1})^{-1}),
% \]
% which diverges as $\lVert \bm{\Delta} \rVert_2 \to \infty$. 
Thus, we seek a combined estimator that performs nearly as well as $\bm{\htau}_{\rm PW}$ when the bias is small, and is robust to unknown bias $\bm{\Delta}$, ensuring that the maximum risk $\sup_{\bm{\Delta} \in \bR^d} \E_{\bm{\Delta}}[\lVert \bm{\htau}_{\rm PW} - \bm{\tau} \rVert_2^2]$ remains bounded. To develop estimators that perform well under zero bias yet remain robust to unknown bias, various estimators have been proposed in the empirical Bayes and shrinkage estimation literature \citep{j1981estimation,bickel1984parametric,green1991james,green2005improved,rosenman2023combining,rosenman2023empirical}. In this paper, we focus on the generic pretest estimator and the generic soft-thresholding estimator, as discussed below.

Second, we recall the pretest estimator, which involves incorporating a pretest for $\bm{\Delta} = \bm{0}$ versus $\bm{\Delta} \neq \bm{0}$. Under $\bm{\Delta} = \bm{0}$, given the independence between $\bm{\htau}_1$ and $\bm{\htau}_0$, their difference follows: $\bm{\htau}_1 - \bm{\htau}_0 \sim N(\bm{0}, \bm{\Sigma}_0 + \bm{\Sigma}_1)$. We consider the test statistic $\lVert (\bm{\Sigma}_0+ \bm{\Sigma}_1)^{-1/2}(\bm{\htau}_1 - \bm{\htau}_0) \rVert_2^2$ and the critical value $q \ge 0$. The critical value $q$ here plays a similar role as the significance level $\alpha$ in Section~\ref{sec:point}. Let $$\mathcal{A} = \{\lVert (\bm{\Sigma}_0+ \bm{\Sigma}_1)^{-1/2}(\bm{\htau}_1 - \bm{\htau}_0) \rVert_2^2 \le q\}$$ denote the event that the pretest fails to reject the null hypothesis ($\bm{\Delta} = \bm{0}$). If the pretest fails to reject the null hypothesis, i.e., $\lVert (\bm{\Sigma}_0+ \bm{\Sigma}_1)^{-1/2}(\bm{\htau}_1 - \bm{\htau}_0) \rVert_2^2 \le q$, the pretest estimator uses the precision-weighted estimator:
\[
  \bm{\htau}_0 + (\bm{\Sigma}_0^{-1} + \bm{\Sigma}_1^{-1})^{-1} \bm{\Sigma}_1^{-1} (\bm{\htau}_1 - \bm{\htau}_0).
\]
If the pretest rejects the null hypothesis, i.e., $\lVert (\bm{\Sigma}_0+ \bm{\Sigma}_1)^{-1/2}(\bm{\htau}_1 - \bm{\htau}_0) \rVert_2^2 > q$, the pretest estimator employs hard-thresholding by reverting to the unbiased estimator $\bm{\htau}_0$. Combining the two cases, the pretest estimator is:
\[
\bm{\htau}_{\rm PT} = \bm{\htau}_0 + (\bm{\Sigma}_0^{-1} + \bm{\Sigma}_1^{-1})^{-1} \bm{\Sigma}_1^{-1} (\bm{\htau}_1 - \bm{\htau}_0) \ind\left(\mathcal{A}\right).
\]

Third, we recall the soft-thresholding estimator, which ensures continuity at the pretest boundary ($\lVert (\bm{\Sigma}_0+ \bm{\Sigma}_1)^{-1/2}(\bm{\htau}_1 - \bm{\htau}_0) \rVert_2^2 = q$). If the pretest fails to reject the null hypothesis, the soft-thresholding estimator also uses the precision-weighted estimator. If the pretest rejects the null hypothesis, the soft-thresholding estimator employs soft-thresholding by:
\[
  \bm{\htau}_0 + h_q^*(\lVert (\bm{\Sigma}_0+ \bm{\Sigma}_1)^{-1/2}(\bm{\htau}_1 - \bm{\htau}_0) \rVert_2^2)(\bm{\Sigma}_0^{-1} + \bm{\Sigma}_1^{-1})^{-1}\bm{\Sigma}_1^{-1}(\bm{\htau}_1 - \bm{\htau}_0),
\]
where $h_q^*(\cdot): [q,\infty) \to [0,1]$ is a non-increasing function with $h_q^*(q)=1$. Combining the two cases, the soft-thresholding estimator is:
\[
  \bm{\hat\tau}_{\rm ST} = \bm{\htau}_0 + h_q(\lVert (\bm{\Sigma}_0+ \bm{\Sigma}_1)^{-1/2}(\bm{\htau}_1 - \bm{\htau}_0) \rVert_2^2)(\bm{\Sigma}_0^{-1} + \bm{\Sigma}_1^{-1})^{-1}\bm{\Sigma}_1^{-1}(\bm{\htau}_1 - \bm{\htau}_0),
\]
where $h_q(\cdot): [0,\infty) \to [0,1]$ is defined as $h_q(r) = \ind(0 \le r \le q) + h_q^*(r) \ind(r > q)$. Here both $q$ and $h_q^*(\cdot)$ can depend on $\bm{\Sigma}_0$ and $\bm{\Sigma}_1$.

The generic soft-thresholding estimator $\bm{\hat\tau}_{\rm ST}$ contains many classical estimators as special cases, such as the estimator studied in \citet[Theorem 3]{j1981estimation} and \citet[Section 4]{bickel1984parametric}, which generalizes the univariate soft-thresholding estimator $\hat\tau_{\rm ST}$ to the multivariate setting. We provide a detailed discussion of the relationship between the generic soft-thresholding estimator and the estimator in \citet{j1981estimation, bickel1984parametric} in Appendix~\ref{sec:efron--morris}. In this paper, we focus on general $\bm{\Sigma}_0$, $\bm{\Sigma}_1$, and $h_q^*(\cdot)$.

% In this paper, we do not impose structural assumptions on the covariance matrices $\bm{\Sigma}_0$ and $\bm{\Sigma}_1$ as \cite{j1981estimation} and \cite{bickel1984parametric}, and focus on general $\bm{\Sigma}_0$, $\bm{\Sigma}_1$, and $h_q^*(\cdot)$.

% The distribution of $\bm{\htau}$ is given in the following theorem.

% \begin{theorem}
%   Let $\cZ_1, \cZ_2$ be two independent random vectors from $N(\bm{0},\bm{I}_d)$. The distribution of $\bm{\htau}$ is
%   \begin{align*}
%   (\bm{\htau} - \bm{\tau}) \stackrel{\sf d}{=} (1+\gamma)^{-1/2}\sigma_0\cZ_1 +  (1+\gamma)^{-1} \gamma\bm{\Delta}\ind(\lVert \bm{\htau}_1 - \bm{\htau}_0 \rVert_2^2/\sigma_0^2 \le b) - \gamma^{1/2}(1+\gamma)^{-1/2} \sigma_0\cZ_2\ind(\lVert \bm{\htau}_1 - \bm{\htau}_0 \rVert_2^2/\sigma_0^2 > b)
%   \end{align*}
% \end{theorem}

\subsection{Confidence intervals}

First, we construct the confidence regions based on the precision-weighted estimator $\bm{\htau}_{\rm PW}$. Under Assumption~\ref{ass:normal_multi}, we have $\bm{\htau}_{\rm PW} \sim N(\bm{\tau} + (\bm{\Sigma}_0^{-1} + \bm{\Sigma}_1^{-1})^{-1} \bm{\Sigma}_1^{-1} \bm{\Delta}, (\bm{\Sigma}_0^{-1} + \bm{\Sigma}_1^{-1})^{-1} )$. By using the ellipsoidal confidence region under the Mahalanobis distance and the covariance matrix of $\bm{\htau}_{\rm PW}$, we consider the confidence region based on $\bm{\htau}_{\rm PW}$ with the following form:
\[
\{\bm{\tau}\in\mathbb{R}^d:(\bm{\htau}_{\rm PW}-\bm{\tau})^\top (\bm{\Sigma}_0^{-1} + \bm{\Sigma}_1^{-1})(\bm{\htau}_{\rm PW}-\bm{\tau})\le M\},
\]
for some $M \ge 0$. The following theorem explicitly characterizes the confidence region based on $\bm{\htau}_{\rm PW}$ as a function of bias bound $\bm{b}$.
\begin{theorem}\label{prop:CI_multi}
  Let $\hat{M}_{\rm PW} = \hat{M}_{\rm PW}(\bm{b}, \zeta, \bm{\Sigma}_0, \bm{\Sigma}_1, \bm{\Sigma}) \ge 0$ be the $(1-\zeta)$ upper quantile of the noncentral chi-squared distribution with $d$ degrees of freedom and noncentrality parameter $$\sup_{\bm{s} \in \{\pm 1\}^d }\left\lVert (\bm{\Sigma}_0^{-1} + \bm{\Sigma}_1^{-1})^{-1/2} \bm{\Sigma}_1^{-1} \bm{\Sigma}^{1/2} \bm{b} \odot \bm{s}\right\rVert_2^2.$$ The confidence region for $\bm{\tau}$ satisfying \eqref{eq:coverage_multi} is given by $\{\bm{\tau} \in \bR^d: (\bm{\htau}_{\rm PW} - \bm{\tau})^\top (\bm{\Sigma}_0^{-1} + \bm{\Sigma}_1^{-1}) (\bm{\htau}_{\rm PW} - \bm{\tau}) \le \hat{M}_{\rm PW}\}$.
\end{theorem}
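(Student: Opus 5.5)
The plan is to reduce the coverage probability to a noncentral chi-squared probability, show it depends on $\bm{\Delta}$ only through the noncentrality parameter, and then maximize that parameter over the bias box. Write $\bm{P} = \bm{\Sigma}_0^{-1} + \bm{\Sigma}_1^{-1}$ and $\bm{W} = \bm{P}^{1/2}(\bm{\htau}_{\rm PW} - \bm{\tau})$. Under Assumption~\ref{ass:normal_multi}, $\bm{\htau}_{\rm PW} \sim N(\bm{\tau} + \bm{P}^{-1}\bm{\Sigma}_1^{-1}\bm{\Delta}, \bm{P}^{-1})$. Hence
\[
\bm{W} \sim N(\bm{\mu}(\bm{\Delta}), \bm{I}_d), \qquad \bm{\mu}(\bm{\Delta}) = \bm{P}^{-1/2}\bm{\Sigma}_1^{-1}\bm{\Delta}.
\]
The quadratic form in the confidence region is $\lVert \bm{W}\rVert_2^2$. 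This is noncentral chi-squared with $d$ degrees of freedom and noncentrality $\lambda(\bm{\Delta}) = \lVert \bm{\mu}(\bm{\Delta})\rVert_2^2$. So for any $M \ge 0$ the coverage probability equals $\Psi_d(M; \lambda(\bm{\Delta}))$, which depends on $\bm{\Delta}$ only through $\lambda(\bm{\Delta})$.

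The first key step is that $\Psi_d(M;\lambda)$ is non-increasing in $\lambda$ for each fixed $M$. I would prove this with the Poisson-mixture representation: $\chi^2_d(\lambda)$ is distributed as $\chi^2_{d+2N}$ with $N \sim \mathrm{Poisson}(\lambda/2)$. Central chi-squared variables are stochastically increasing in their degrees of freedom, and Poisson laws are stochastically increasing in their mean, so $\chi^2_d(\lambda)$ is stochastically increasing in $\lambda$. An alternative route is Anderson's lemma: the ball $\{\lVert \bm{w}\rVert_2^2 \le M\}$ is symmetric and convex, so the probability decreases as the mean moves outward along rays. This reduces $\inf_{\bm{\Delta}}$ of the coverage in \eqref{eq:coverage_multi} to $\Psi_d(M;\lambda^*)$, where $\lambda^* = \sup\{\lambda(\bm{\Delta}) : \lvert \bm{\Sigma}^{-1/2}\bm{\Delta}\rvert \le \bm{b}\}$. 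If the supremum is attained, which it will be, the infimum of coverage is attained as well.

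The second step computes $\lambda^*$. Substitute $\bm{\Delta} = \bm{\Sigma}^{1/2}\bm{v}$ with $\lvert \bm{v}\rvert \le \bm{b}$, and set $\bm{B} = \bm{P}^{-1/2}\bm{\Sigma}_1^{-1}\bm{\Sigma}^{1/2}$. Then $\lambda = \lVert \bm{B}\bm{v}\rVert_2^2$, which is a convex function of $\bm{v}$ on the box $\prod_j[-b_j,b_j]$. A convex function on a polytope attains its maximum at a vertex, and the vertices of this box are exactly $\bm{b}\odot\bm{s}$ with $\bm{s}\in\{\pm1\}^d$. Therefore $\lambda^* = \sup_{\bm{s}}\lVert \bm{B}(\bm{b}\odot\bm{s})\rVert_2^2$, which is the noncentrality stated in Theorem~\ref{prop:CI_multi}.

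Finally, choose $\hat{M}_{\rm PW}$ as the $(1-\zeta)$ quantile of $\chi^2_d(\lambda^*)$. Then $\Psi_d(\hat{M}_{\rm PW};\lambda^*) = 1-\zeta$ by continuity of this distribution. Consequently the infimum of coverage over the bias box equals $1-\zeta$, so \eqref{eq:coverage_multi} holds, and $\hat{M}_{\rm PW}$ is the smallest $M$ with this property within the ellipsoid family. Continuity and strict monotonicity of $\Psi_d(\cdot;\lambda^*)$ give existence and uniqueness of this quantile.

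I expect the main obstacle to be the rigorous monotonicity of $\Psi_d(M;\lambda)$ in $\lambda$. I would handle it via the Poisson mixture, using the identity $\Psi_{k+2}(M;0) = \Psi_k(M;0) - 2 g_{k+2}(M)$, where $g_{k+2}$ is the $\chi^2_{k+2}$ density. This shows the central chi-squared CDFs decrease in the degrees of freedom. The vertex argument and the distributional computation are then routine.
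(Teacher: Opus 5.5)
Your proposal is correct and follows the paper's proof essentially step for step: standardize $\bm{\htau}_{\rm PW}$ to get a noncentral $\chi^2_d$ with noncentrality $\lVert(\bm{\Sigma}_0^{-1}+\bm{\Sigma}_1^{-1})^{-1/2}\bm{\Sigma}_1^{-1}\bm{\Delta}\rVert_2^2$, maximize this convex quadratic over the box (the maximum is at a vertex $\bm{b}\odot\bm{s}$), and take the $(1-\zeta)$ quantile. The one addition is that you explicitly prove $\Psi_d(M;\lambda)$ is nonincreasing in $\lambda$, via the Poisson mixture or Anderson's lemma; the paper uses this fact without proof when it replaces the infimum over $\bm{\Delta}$ by the probability at the largest noncentrality.
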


Theorem~\ref{prop:CI_multi} extends the univariate result in Theorem~\ref{thm:L_PW} to the multivariate setting. In the univariate case, taking $\bm{\Sigma} = \sigma_0^2$ and $\bm{b} = b$ reduces the multivariate confidence region in Theorem~\ref{prop:CI_multi} exactly to the confidence interval in Theorem~\ref{thm:L_PW}. In the univariate case, the bias constraint $\{\Delta: \lvert \Delta/\sigma_0 \rvert \le b\}$ has only two boundary points, and the absolute bias of $\htau_{\rm PW}$ is the same at both endpoints. By contrast, the multivariate bias constraint $\{\bm{\Delta}: \lvert \bm{\Sigma}^{-1/2} \bm{\Delta}\rvert \le \bm{b}\}$ is a convex hyperrectangle with $2^d$ vertices, and the maximum $L_2$ norm of the bias of $\bm{\htau}_{\rm PW}$ may occur at any of these vertices. Theorem~\ref{prop:CI_multi} therefore characterizes the maximum possible bias of $\bm{\htau}_{\rm PW}$ over the entire bias constraint.

In the special case when $\bm{b}=0$, we have $\hat{M}_{\rm PW} = \chi^2_{d, 1-\zeta}$. Compared with the ellipsoid confidence region based on $\bm{\htau}_0$, the scaling matrix in the definition of ellipsoid confidence region reduces from $\bm{\Sigma}_0^{-1}$ to $\bm{\Sigma}_0^{-1} + \bm{\Sigma}_1^{-1} = (\bm{\Sigma}_1^{-1} \bm{\Sigma}_0 + \bm{I}_d)\bm{\Sigma}_0^{-1}$, analogous to the efficiency gain in the univariate case, where $\bm{\Sigma}_1^{-1} \bm{\Sigma}_0$ can be regarded as the variance ratio $\gamma$ in the univariate case. As $\bm{b}$ increases, $\hat{M}_{\rm PW}$ increases, and goes to infinity as $\bm{b} \to \infty$.

Second, we construct the confidence regions based on the generic pretest estimator $\bm{\hat\tau}_{\rm PT}$. Similar to the confidence regions based on the precision-weighted estimator, we consider the confidence regions based on $\bm{\hat\tau}_{\rm PT}$ with the following form:
\[
\{\bm{\tau}\in\mathbb{R}^d:(\bm{\htau}_{\rm PT}-\bm{\tau})^\top (\bm{\Sigma}_0^{-1} + \bm{\Sigma}_1^{-1})(\bm{\htau}_{\rm PT}-\bm{\tau})\le M\},
\]
for some $M \ge 0$. Here, we focus on the ellipsoid confidence region defined using the same scaling matrix, $\bm{\Sigma}_0^{-1} + \bm{\Sigma}_1^{-1}$, as that used in the precision-weighted estimator. This choice allows for a direct comparison with the confidence region based on the precision-weighted estimator through the upper bound $M$ in the confidence region. To ensure the confidence regions satisfy Definition~\ref{def:CI_multi}, we need to find the minimal value $\hat{M}_{\rm PT} = \hat{M}_{\rm PT}(\bm{b}, \zeta, \bm{\Sigma}_0, \bm{\Sigma}_1, \bm{\Sigma})$ such that the confidence region
\[
\{\bm{\tau}\in\mathbb{R}^d:(\bm{\htau}_{\rm PT}-\bm{\tau})^\top (\bm{\Sigma}_0^{-1} + \bm{\Sigma}_1^{-1})(\bm{\htau}_{\rm PT}-\bm{\tau})\le \hat{M}_{\rm PT}\}
\]
achieves correct coverage for all $\bm{\Delta}$ satisfying $\lvert [\bm{\Sigma}^{-1/2} \bm{\Delta}]_j \rvert \le b_j$ for all $j=1,2,\ldots,d$. We can formulate $\hat{M}_{\rm PT}$ as the optimization problem
\begin{align*}
  \hat{M}_{\rm PT} = \inf \left\{ M \ge 0: \inf_{\bm{\Delta}: \lvert \bm{\Sigma}^{-1/2} \bm{\Delta}\rvert \le \bm{b}} \P_{\bm{\Delta}}((\bm{\htau}_{\rm PT} - \bm{\tau})^\top (\bm{\Sigma}_0^{-1} + \bm{\Sigma}_1^{-1}) (\bm{\htau}_{\rm PT} - \bm{\tau}) \le M) \ge 1-\zeta \right\}.
\end{align*}

We show how to compute $\hat{M}_{\rm PT}$ in the following theorem.
\begin{theorem}\label{prop:CI_multi_PT}
  $\hat{M}_{\rm PT}$ is the solution to the equation of $M$:
  \begin{align*}
    \inf_{\bm{\Delta}: \lvert \bm{\Sigma}^{-1/2} \bm{\Delta}\rvert \le \bm{b}} \P_{\bm{\Delta}}((\bm{\htau}_{\rm PT} - \bm{\tau})^\top (\bm{\Sigma}_0^{-1} + \bm{\Sigma}_1^{-1}) (\bm{\htau}_{\rm PT} - \bm{\tau}) \le M) = 1-\zeta,
  \end{align*} 
  with an explicit form given by:
  \begin{align}\label{eq:P_multi_PT}
    &\P_{\bm{\Sigma}^{-1/2}\bm{\Delta} = \bm{t}}\left((\bm{\htau}_{\rm PT}-\bm{\tau})^\top (\bm{\Sigma}_0^{-1} + \bm{\Sigma}_1^{-1})(\bm{\htau}_{\rm PT}-\bm{\tau})\le M\right) \nonumber\\
    =& \Psi_d\left(M;\left\lVert (\bm{\Sigma}_0^{-1} + \bm{\Sigma}_1^{-1})^{-1/2} \bm{\Sigma}_1^{-1} \bm{\Sigma}^{1/2} \bm{t}\right\rVert_2^2\right) \Psi_d\left(q; \left\lVert (\bm{\Sigma}_0 + \bm{\Sigma}_1)^{-1/2} \bm{\Sigma}^{1/2} \bm{t}\right\rVert_2^2\right) \nonumber\\
    &+ \int_{\lVert \bm{u} \rVert_2^2 > q} \Psi_d\left(M;\left\lVert (\bm{\Sigma}_0^{-1} + \bm{\Sigma}_1^{-1})^{-1/2} \bm{\Sigma}_1^{-1} [\bm{\Sigma}^{1/2} \bm{t} - (\bm{\Sigma}_0+ \bm{\Sigma}_1)^{1/2} \bm{u}]\right\rVert_2^2\right) \phi_{(\bm{\Sigma}_0 + \bm{\Sigma}_1)^{-1/2} \bm{\Sigma}^{1/2} \bm{t}, \bm{I}_d}\left(\bm{u}\right) \d \bm{u}.
  \end{align}
\end{theorem}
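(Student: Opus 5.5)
The plan is to decompose $\bm{\htau}_{\rm PT}-\bm{\tau}$ into two independent Gaussian pieces, mirroring Lemma~\ref{lemma:clt_PT} and Theorem~\ref{thm:L_PT}, and then compute the coverage probability by conditioning on the pretest statistic. Write $\bm{W}=(\bm{\Sigma}_0^{-1}+\bm{\Sigma}_1^{-1})^{-1}\bm{\Sigma}_1^{-1}$, $\bm{D}=\bm{\htau}_1-\bm{\htau}_0\sim N(\bm{\Delta},\bm{\Sigma}_0+\bm{\Sigma}_1)$, and let $\bm{V}=\bm{\htau}_{\rm PW}-\bm{\tau}-\bm{W}\bm{\Delta}$.

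\textbf{Independence.} First verify that $\bm{V}\sim N(\bm{0},(\bm{\Sigma}_0^{-1}+\bm{\Sigma}_1^{-1})^{-1})$ is independent of $\bm{D}$. Since the pair is jointly Gaussian, it suffices to check that the cross-covariance vanishes:
\[
\mathrm{Cov}(\bm{\htau}_{\rm PW},\bm{D})=(\bm{\Sigma}_0^{-1}+\bm{\Sigma}_1^{-1})^{-1}(\bm{\Sigma}_0^{-1}(-\bm{\Sigma}_0)+\bm{\Sigma}_1^{-1}\bm{\Sigma}_1)=\bm{0}.
\]
Because $\bm{\htau}_0=\bm{\htau}_{\rm PW}-\bm{W}\bm{D}$, we obtain
\[
\bm{\htau}_{\rm PT}-\bm{\tau}=\bm{V}+\bm{W}\bm{\Delta}-\bm{W}\bm{D}\,\ind(\mathcal{A}^{\rm c}).
\]
On $\mathcal{A}$ this equals $\bm{V}+\bm{W}\bm{\Delta}$. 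On $\mathcal{A}^{\rm c}$ it equals $\bm{V}+\bm{W}(\bm{\Delta}-\bm{D})$.

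\textbf{Standardizing the pretest statistic.} Set $\bm{U}=(\bm{\Sigma}_0+\bm{\Sigma}_1)^{-1/2}\bm{D}$ and $\bm{t}=\bm{\Sigma}^{-1/2}\bm{\Delta}$. Then $\bm{U}\sim N\big((\bm{\Sigma}_0+\bm{\Sigma}_1)^{-1/2}\bm{\Sigma}^{1/2}\bm{t},\bm{I}_d\big)$ and $\mathcal{A}=\{\lVert\bm{U}\rVert_2^2\le q\}$.

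\textbf{Conditioning.} Condition on $\bm{U}=\bm{u}$, using the independence of $\bm{V}$ and $\bm{U}$. Given any fixed shift $\bm{m}$, the quantity $(\bm{V}+\bm{m})^\top(\bm{\Sigma}_0^{-1}+\bm{\Sigma}_1^{-1})(\bm{V}+\bm{m})$ equals $\lVert \bm{Z}+(\bm{\Sigma}_0^{-1}+\bm{\Sigma}_1^{-1})^{1/2}\bm{m}\rVert_2^2$ with $\bm{Z}\sim N(\bm{0},\bm{I}_d)$. It is therefore noncentral $\chi^2_d$ with noncentrality parameter $\lVert(\bm{\Sigma}_0^{-1}+\bm{\Sigma}_1^{-1})^{1/2}\bm{m}\rVert_2^2$. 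We can rewrite this as the form in \eqref{eq:P_multi_PT} using $(\bm{\Sigma}_0^{-1}+\bm{\Sigma}_1^{-1})^{1/2}\bm{W}=(\bm{\Sigma}_0^{-1}+\bm{\Sigma}_1^{-1})^{-1/2}\bm{\Sigma}_1^{-1}$.

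The two cases are handled as follows.
\begin{itemize}
\item On $\mathcal{A}$ the shift $\bm{m}=\bm{W}\bm{\Sigma}^{1/2}\bm{t}$ does not depend on $\bm{u}$, so the probability factorizes. It becomes $\Psi_d(M;\cdot)$ times $\P(\lVert\bm{U}\rVert_2^2\le q)$, and the latter equals $\Psi_d\big(q;\lVert(\bm{\Sigma}_0+\bm{\Sigma}_1)^{-1/2}\bm{\Sigma}^{1/2}\bm{t}\rVert_2^2\big)$ since $\lVert\bm{U}\rVert_2^2$ is noncentral $\chi^2_d$.
\item On $\mathcal{A}^{\rm c}$ the shift is $\bm{m}=\bm{W}[\bm{\Sigma}^{1/2}\bm{t}-(\bm{\Sigma}_0+\bm{\Sigma}_1)^{1/2}\bm{u}]$. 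Integrating against the density of $\bm{U}$ over $\{\lVert\bm{u}\rVert_2^2>q\}$ gives the second term.
\end{itemize}
Summing the two cases yields \eqref{eq:P_multi_PT}.

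\textbf{Characterizing $\hat{M}_{\rm PT}$.} Define $F(M)=\inf_{\bm{t}}\P_{\bm{t}}(\cdot\le M)$ over the hyperrectangle. Each $\P_{\bm{t}}(\cdot\le M)$ is nondecreasing in $M$, so $F$ is nondecreasing. We need $F$ to be continuous, with $F(0)=0$ and $F(M)\to1$ as $M\to\infty$; then $\hat{M}_{\rm PT}$, being the infimum of $\{M:F(M)\ge1-\zeta\}$, solves $F(M)=1-\zeta$.

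\begin{itemize}
\item Continuity of $\P_{\bm{t}}(\cdot\le M)$ in $(\bm{t},M)$ follows from dominated convergence applied to the explicit formula, since $\Psi_d$ is jointly continuous. The hyperrectangle is compact, so the infimum is attained and $F$ is continuous.
\item For the limit, the squared norm is a.s.\ finite and the family over the compact set of $\bm{t}$ is tight, so $F(M)\to1$.
\item $F(0)=0$ because the distribution has no atom at $0$, as $\bm{V}$ is absolutely continuous.
\end{itemize}

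\textbf{Main obstacle.} The only delicate step is the limit $F(M)\to1$, which requires uniform tightness over $\bm{t}$. This follows from the bound $\lVert\bm{\htau}_{\rm PT}-\bm{\tau}\rVert\le\lVert\bm{V}\rVert+\lVert\bm{W}\rVert(\lVert\bm{\Delta}\rVert+\lVert\bm{D}\rVert)$, where $\bm{\Delta}$ is bounded on the hyperrectangle. Uniqueness of the solution is not needed, since the infimum is taken.
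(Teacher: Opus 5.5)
Your proposal is correct and follows essentially the same route as the paper: independence of $\bm{\htau}_{\rm PW}$ and $\bm{\htau}_1-\bm{\htau}_0$, the split into the pretest acceptance and rejection events, and conditioning on the standardized pretest statistic to obtain the two noncentral $\chi^2_d$ terms. You also argue that $F$ is continuous with $F(0)=0$ and $F(M)\to 1$, so that the infimum defining $\hat{M}_{\rm PT}$ actually solves the equation; the paper leaves this step implicit and simply says ``choosing $\hat{M}_{\rm PT}$ to satisfy'' the equation completes the proof.
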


Theorem~\ref{prop:CI_multi_PT} extends the univariate result in Theorem~\ref{thm:L_PT} to the multivariate setting. Theorem~\ref{prop:CI_multi_PT} provides an explicit expression for the coverage probability $\P_{\bm{\Sigma}^{-1/2}\bm{\Delta} = \bm{t}}((\bm{\htau}_{\rm PT}-\bm{\tau})^\top (\bm{\Sigma}_0^{-1} + \bm{\Sigma}_1^{-1})(\bm{\htau}_{\rm PT}-\bm{\tau})\le M)$ as a function of $M$ and $\bm{t}$. The first term in \eqref{eq:P_multi_PT} corresponds to the event that the pretest fails to reject the null hypothesis ($\bm{\Delta} = \bm{0}$), in which case $\bm{\htau}_{\rm PT}$ reduces to the precision-weighted estimator. The second term in \eqref{eq:P_multi_PT} integrates over the regions where the pretest rejects the null hypothesis, in which case $\bm{\htau}_{\rm PT}$ reduces to the unbiased estimator.

Third, we construct the confidence regions based on the generic soft-thresholding estimator $\bm{\hat\tau}_{\rm ST}$. Similar to the confidence regions based on the precision-weighted estimator, we consider the confidence regions based on $\bm{\hat\tau}_{\rm ST}$ with the following form:
\[
\{\bm{\tau}\in\mathbb{R}^d:(\bm{\htau}_{\rm ST}-\bm{\tau})^\top (\bm{\Sigma}_0^{-1} + \bm{\Sigma}_1^{-1})(\bm{\htau}_{\rm ST}-\bm{\tau})\le M\},
\]
for some $M \ge 0$. To ensure the confidence regions satisfy Definition~\ref{def:CI_multi}, we need to find the minimal value $\hat{M}_{\rm ST} = \hat{M}_{\rm ST}(\bm{b}, \zeta, \bm{\Sigma}_0, \bm{\Sigma}_1, \bm{\Sigma})$ such that the confidence region
\[
\{\bm{\tau}\in\mathbb{R}^d:(\bm{\htau}_{\rm ST}-\bm{\tau})^\top (\bm{\Sigma}_0^{-1} + \bm{\Sigma}_1^{-1})(\bm{\htau}_{\rm ST}-\bm{\tau})\le \hat{M}_{\rm ST}\}
\]
achieves correct coverage for all $\bm{\Delta}$ satisfying $\lvert [\bm{\Sigma}^{-1/2} \bm{\Delta}]_j \rvert \le b_j$ for all $j=1,2,\ldots,d$. We can formulate $\hat{M}_{\rm ST}$ as the optimization problem
\begin{align*}
  \hat{M}_{\rm ST} = \inf \left\{ M \ge 0: \inf_{\bm{\Delta}: \lvert \bm{\Sigma}^{-1/2} \bm{\Delta}\rvert \le \bm{b}} \P_{\bm{\Delta}}((\bm{\htau}_{\rm ST} - \bm{\tau})^\top (\bm{\Sigma}_0^{-1} + \bm{\Sigma}_1^{-1}) (\bm{\htau}_{\rm ST} - \bm{\tau}) \le M) \ge 1-\zeta \right\}.
\end{align*}

We show that $\hat{M}_{\rm ST}$ can be computed efficiently in the following theorem.

\begin{theorem}\label{prop:CI_multi_ST}
  $\hat{M}_{\rm ST}$ is the solution to the equation of $M$:
  \begin{align*}
    \inf_{\bm{\Delta}: \bm{\Sigma}^{-1/2} \bm{\Delta} = \bm{b} \odot \bm{s}, \bm{s} \in \{-1,1\}^d} \P_{\bm{\Delta}}((\bm{\htau}_{\rm ST} - \bm{\tau})^\top (\bm{\Sigma}_0^{-1} + \bm{\Sigma}_1^{-1}) (\bm{\htau}_{\rm ST} - \bm{\tau}) \le M) = 1-\zeta,
  \end{align*} 
  with an explicit form given by:
  \begin{align}\label{eq:P_multi_ST}
    &\P_{\bm{\Sigma}^{-1/2}\bm{\Delta} = \bm{t}}\left((\bm{\htau}_{\rm ST}-\bm{\tau})^\top (\bm{\Sigma}_0^{-1} + \bm{\Sigma}_1^{-1})(\bm{\htau}_{\rm ST}-\bm{\tau})\le M\right) \nonumber\\
    =& \Psi_d\left(M;\left\lVert (\bm{\Sigma}_0^{-1} + \bm{\Sigma}_1^{-1})^{-1/2} \bm{\Sigma}_1^{-1} \bm{\Sigma}^{1/2} \bm{t}\right\rVert_2^2\right) \Psi_d\left(q; \left\lVert (\bm{\Sigma}_0 + \bm{\Sigma}_1)^{-1/2} \bm{\Sigma}^{1/2} \bm{t}\right\rVert_2^2\right) \nonumber\\
    &+ \int_{\lVert \bm{u} \rVert_2^2 > q} \Psi_d\left(M;\left\lVert (\bm{\Sigma}_0^{-1} + \bm{\Sigma}_1^{-1})^{-1/2} \bm{\Sigma}_1^{-1} [\bm{\Sigma}^{1/2} \bm{t} - (1-h_q^*(\lVert \bm{u} \rVert_2^2))(\bm{\Sigma}_0+ \bm{\Sigma}_1)^{1/2} \bm{u}]\right\rVert_2^2\right) \nonumber\\
    &\phi_{(\bm{\Sigma}_0 + \bm{\Sigma}_1)^{-1/2} \bm{\Sigma}^{1/2} \bm{t}, \bm{I}_d}\left(\bm{u}\right) \d \bm{u}.
  \end{align}
\end{theorem}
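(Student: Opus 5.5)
The plan has two parts. First I derive the explicit formula \eqref{eq:P_multi_ST} by an orthogonal decomposition of $(\bm{\htau}_0,\bm{\htau}_1)$ and conditioning. Second I show that the infimum over the hyperrectangle $\{\lvert\bm{\Sigma}^{-1/2}\bm{\Delta}\rvert\le\bm{b}\}$ is attained at one of its $2^d$ vertices $\bm{\Sigma}^{-1/2}\bm{\Delta}=\bm{b}\odot\bm{s}$. The characterization of $\hat M_{\rm ST}$ as the root of the displayed equation then follows exactly as in Theorem~\ref{thm:L_ST}: for every $\bm{\Delta}$ the coverage probability is continuous and nondecreasing in $M$ (it is an integral of $\Psi_d(M;\cdot)$), and the minimum of finitely many such functions inherits both properties.

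\textbf{Explicit formula.} Write $\bm{P}=\bm{\Sigma}_0^{-1}+\bm{\Sigma}_1^{-1}$, $\bm{B}=\bm{P}^{-1}\bm{\Sigma}_1^{-1}$ and $\bm{D}=\bm{\htau}_1-\bm{\htau}_0\sim N(\bm{\Delta},\bm{\Sigma}_0+\bm{\Sigma}_1)$. Define $\bm{Z}=\bm{\htau}_{\rm PW}-\bm{\tau}-\bm{B}\bm{\Delta}\sim N(\bm{0},\bm{P}^{-1})$. The standard BLUE computation gives $\mathrm{Cov}(\bm{\htau}_{\rm PW},\bm{D})=\bm{0}$, and since the vector is jointly Gaussian, $\bm{Z}$ and $\bm{D}$ are independent. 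From the definition of $\bm{\hat\tau}_{\rm ST}$,
\[
\bm{\hat\tau}_{\rm ST}-\bm{\tau}=\bm{Z}+\bm{B}\bigl(\bm{\Delta}-(1-h_q(\lVert\bm{u}\rVert_2^2))(\bm{\Sigma}_0+\bm{\Sigma}_1)^{1/2}\bm{u}\bigr),
\]
where $\bm{u}=(\bm{\Sigma}_0+\bm{\Sigma}_1)^{-1/2}\bm{D}\sim N\bigl((\bm{\Sigma}_0+\bm{\Sigma}_1)^{-1/2}\bm{\Sigma}^{1/2}\bm{t},\bm{I}_d\bigr)$. Conditional on $\bm{u}$, the vector $\bm{P}^{1/2}(\bm{\hat\tau}_{\rm ST}-\bm{\tau})$ is $N(\bm{P}^{1/2}\bm{B}(\cdots),\bm{I}_d)$. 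Hence the conditional coverage is $\Psi_d(M;\cdot)$, with noncentrality equal to the squared norm of that mean, and $\bm{P}^{1/2}\bm{B}=\bm{P}^{-1/2}\bm{\Sigma}_1^{-1}$. I then split on $\{\lVert\bm{u}\rVert_2^2\le q\}$:
\begin{itemize}
\item On this event $h_q=1$, so the noncentrality does not depend on $\bm{u}$. The term factorizes as $\Psi_d(M;\cdot)$ times $\P(\lVert\bm{u}\rVert_2^2\le q)=\Psi_d\bigl(q;\lVert(\bm{\Sigma}_0+\bm{\Sigma}_1)^{-1/2}\bm{\Sigma}^{1/2}\bm{t}\rVert_2^2\bigr)$.
\item On the complement $h_q=h_q^*$, and integrating against the Gaussian density of $\bm{u}$ gives the second term of \eqref{eq:P_multi_ST}.
\end{itemize}
This mirrors Lemma~\ref{lemma:clt_ST} and Theorem~\ref{thm:L_ST}.

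\textbf{Reduction to vertices.} This is the main obstacle. In the univariate case, Theorem~\ref{thm:L_ST} gives symmetry and monotonicity in $\lvert\Delta\rvert$. In $d$ dimensions, monotonicity along rays would only push the minimizer to the boundary of the box, not to a vertex. What I would aim for instead is that the coverage $g(\bm{\Delta})$ is \emph{quasi-concave} in $\bm{\Delta}$, meaning its superlevel sets are convex. The minimum of a quasi-concave function over a polytope is attained at a vertex, because any point is a convex combination of vertices and $g$ at that point is at least the minimum of $g$ over those vertices.

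To get quasi-concavity I would write
\[
g(\bm{\Delta})=\int F\bigl(\bm{\Delta}-\bm{d}+\psi(\bm{d})\bigr)\,\phi_{\bm{\Delta},\bm{\Sigma}_0+\bm{\Sigma}_1}(\bm{d})\,\d\bm{d},
\]
where $\psi(\bm{d})=h_q(\lVert(\bm{\Sigma}_0+\bm{\Sigma}_1)^{-1/2}\bm{d}\rVert_2^2)\,\bm{d}$ is the shrinkage map and $F(\bm{v})=\P(\lVert\bm{P}^{1/2}(\bm{Z}+\bm{B}\bm{v})\rVert_2^2\le M)$. The function $F$ is symmetric and log-concave in $\bm{v}$, by Anderson's theorem and Prékopa. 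Substituting $\bm{e}=\bm{d}-\bm{\Delta}$, I would try to show that the joint integrand is log-concave in $(\bm{\Delta},\bm{e})$ and then apply Prékopa--Leindler to conclude that $g$ is log-concave, hence quasi-concave.

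The delicate point is that $\bm{v}=\psi(\bm{\Delta}+\bm{e})-\bm{e}$ is not affine in $(\bm{\Delta},\bm{e})$. I would first exploit that $\bm{v}$ is a monotone, nonexpansive map of $\bm{\Delta}+\bm{e}$ plus a linear term. I would also use $1-h_q\in[0,1]$ together with the univariate monotonicity argument, applied coordinatewise along lines parallel to each axis of $\bm{\Sigma}^{-1/2}\bm{\Delta}$. If log-concavity cannot be established for general $h_q^*$, the fallback is to prove coordinatewise quasi-concavity of $g$, that is, along each edge direction the function is unimodal, so that the minimum on each edge is at an endpoint. Iterating this over the $d$ coordinates also lands the minimizer at a vertex.

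Finally, I would note that $g(\bm{\Delta})=g(-\bm{\Delta})$, since $\psi$ is odd and the Gaussian is symmetric. So only $2^{d-1}$ sign patterns $\bm{s}$ need to be evaluated in practice.
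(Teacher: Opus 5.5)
Your derivation of \eqref{eq:P_multi_ST} is correct and coincides with the paper's: independence of $\bm{\htau}_{\rm PW}$ and $\bm{\htau}_1-\bm{\htau}_0$, conditioning on $\bm{u}$, and splitting on $\{\lVert\bm{u}\rVert_2^2\le q\}$. The gap is the vertex reduction, and your main route there targets a property that is false. Take $d=1$ and $h_q^*(r)=\sqrt{q/r}$ with $q=c_{\alpha/2}^2$, which reproduces the univariate $\htau_{\rm ST}$. Write $a=\sqrt{\gamma/(1+\gamma)}\,t$, $W\sim N(0,1)$, $f(\mu)=\Phi(L-\mu)-\Phi(-L-\mu)$ and $\mathrm{clip}_c(x)=\max\{-c,\min\{c,x\}\}$. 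The coverage is then $g(a)=\E f\bigl(\sqrt{\gamma}\,(\mathrm{clip}_{c_{\alpha/2}}(a+W)-W)\bigr)$. Hence $g'(a)=\sqrt{\gamma}\,f'(\sqrt{\gamma}\,a)\,\P(\lvert a+W\rvert<c_{\alpha/2})<0$ for $a>0$, while $g(a)\to\E f\bigl(\sqrt{\gamma}\,(c_{\alpha/2}-W)\bigr)>0$ as $a\to\infty$. A concave function on $[0,\infty)$ that is bounded below is nondecreasing, so $\log g$ is not concave. The same obstruction appears in every dimension whenever the shrinkage map $\psi$ is bounded, which is exactly the bounded-risk regime that motivates the estimator: symmetry plus a positive lower bound would force a log-concave $g$ to be constant along every line through the origin. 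So Pr\'ekopa--Leindler cannot deliver the conclusion, however you handle the non-affine $\psi$.

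More fundamentally, the vertex reduction itself fails for an arbitrary non-increasing $h_q^*$. So neither quasi-concavity nor your coordinatewise-unimodality fallback can be proved at the stated generality. The choice $h_q^*(r)=\ind(r=q)$ is admissible and gives $h_q(r)=\ind(r\le q)$, i.e., $\bm{\hat\tau}_{\rm ST}=\bm{\hat\tau}_{\rm PT}$. Continuous $h_q^*$ that drop steeply from $1$ to $0$ just above $q$ approximate its coverage pointwise, by dominated convergence. The pretest coverage has an interior minimum in $t$ already for $d=1$; this is why Theorem~\ref{thm:L_PT} keeps $\min_{0\le t\le b}$. For instance, with $\gamma=100$, $c_{\alpha/2}=1.96$ and $L=3\sqrt{101}$, the coverage \eqref{eq:P_PT} is roughly $0.90$ at $a=3.1$ but roughly $0.997$ at $a=6$. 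So for the $b$ with $a=6$, the minimum over $[-b,b]$ is not attained at $\pm b$.

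The univariate argument you intend to reuse works only because the clip map is constant on the rejection region, so that region contributes nothing to $g'$. For general $h_q^*$ that region contributes terms of either sign, and for $d\ge 2$ even the radial choice $h_q^*(r)=\sqrt{q/r}$ has a nonzero tangential Jacobian there. A correct proof therefore needs an extra hypothesis on $h_q^*$, plausibly also on the covariance structure, together with a genuinely new argument.

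The paper's own proof does not supply this step either. It simply asserts $p(\bm{t})=p(\bm{t}\odot\bm{s})$ and monotonicity in each $\lvert t_j\rvert$. You are right that only $p(\bm{t})=p(-\bm{t})$ holds in general. With non-diagonal matrices and large $q$, $p(\bm{t})$ is close to $\Psi_d\bigl(M;\lVert\bm{A}\bm{\Sigma}^{1/2}\bm{t}\rVert_2^2\bigr)$, where $\bm{A}=(\bm{\Sigma}_0^{-1}+\bm{\Sigma}_1^{-1})^{-1/2}\bm{\Sigma}_1^{-1}$. The cross terms of this quadratic form violate both assertions.
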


Theorem~\ref{prop:CI_multi_ST} extends the univariate result in Theorem~\ref{thm:L_ST} to the multivariate setting. Theorem~\ref{prop:CI_multi_ST} provides an explicit expression for the coverage probability $\P_{\bm{\Sigma}^{-1/2}\bm{\Delta} = \bm{t}}((\bm{\htau}_{\rm ST}-\bm{\tau})^\top (\bm{\Sigma}_0^{-1} + \bm{\Sigma}_1^{-1})(\bm{\htau}_{\rm ST}-\bm{\tau})\le M)$ as a function of $M$ and $\bm{t}$. The first term in \eqref{eq:P_multi_ST} corresponds to the event that the pretest fails to reject the null hypothesis ($\bm{\Delta} = \bm{0}$), in which case $\bm{\htau}_{\rm ST}$ reduces to the precision-weighted estimator. The second term in \eqref{eq:P_multi_ST} integrates over the regions where the pretest rejects the null hypothesis, in which case $\bm{\htau}_{\rm ST}$ reduces to the unbiased estimator with a nonconstant shift to ensure continuity at the pretest boundary. As in the univariate case, the monotonicity of the coverage probability makes the computation of $\hat{M}_{\rm ST}$ more efficient than that of $\hat{M}_{\rm PT}$. For any $M>0$, the coverage probability is minimized at the boundary of the bias region, namely at the vertices $\bm{\Delta}$ satisfying $\bm{\Sigma}^{-1/2} \bm{\Delta} = \bm{b} \odot \bm{s}$ for some $\bm{s} \in \{-1,1\}^d$.

Using Theorems~\ref{prop:CI_multi}--\ref{prop:CI_multi_ST}, we can construct confidence regions based on $\bm{\hat\tau}_{\rm PW}$, $\bm{\hat\tau}_{\rm PT}$, and $\bm{\hat\tau}_{\rm ST}$ for various bias bounds $\bm{b}$ and significance levels $\zeta$.

\section{Generalization to multiple estimators}\label{sec:data_fusion}

In this section, we extend our framework to the multiple estimators setting. In empirical research, analysts often face the challenge of integrating evidence from datasets of heterogeneous quality. A leading example in causal inference is to combine an RCT with multiple observational studies. While the RCT provides unbiased but often noisy estimates, observational studies can offer much larger sample sizes but are subject to hidden bias due to unmeasured confounding. Beyond causal inference, similar problems occur when synthesizing results from multiple registries, surveys, or administrative databases, where no single source is sufficient on its own. We present our theory below.

\subsection{Setup}

We consider the following multiple estimators setting with one unbiased estimator and $K$ potentially biased estimators. We can generalize to the case where there are multiple unbiased estimators estimating the same parameter. In that case, we can first combine all the unbiased estimators into a single unbiased estimator with smaller variance using the precision-weighted estimator. Then we can apply our current framework to combine this aggregated unbiased estimator with multiple biased estimators.
\begin{assumption}\label{ass:normal_multi_data}
  Suppose we observe $K+1$ independent random variables: one unbiased estimator $\htau_0 \sim N(\tau,\sigma_0^2)$ and $K$ potentially biased estimators $\htau_j \sim N(\tau + \Delta_j,\sigma_j^2)$ for $j = 1,\ldots,K$. Here $\tau \in \bR$ is the unknown parameter of interest. We assume that $\sigma_0^2, \sigma_1^2, \ldots, \sigma_K^2$ are known, whereas $\Delta_1, \ldots, \Delta_K$ are unknown. 
\end{assumption}
Let $\gamma_j = \sigma_0^2/\sigma_j^2$ be the variance ratio between the unbiased estimator and the $j$-th biased estimator. Let $\bm{\gamma} = (\gamma_1,\ldots,\gamma_K)^\top$ be the vector of variance ratios.

We consider a generic combined estimator $\hat\tau = \hat\tau(\htau_0, \htau_1, \ldots, \htau_K, \sigma_0^2, \sigma_1^2, \ldots, \sigma_K^2)$. We assume that $\lvert \Delta_j/\sigma_0 \rvert \le b_j$ for some $b_j > 0$ for all $j = 1,\ldots,K$ and  study how the confidence interval changes with the maximum relative bias vector $\bm{b} = (b_1,\ldots,b_K)$. Let $\bm{\Delta} = (\Delta_1,\ldots,\Delta_K)$ be the vector of unknown biases. Analogous to Definition~\ref{def:CI} for the univariate case, the confidence interval is defined below.

\begin{definition}\label{def:CI_multi_data}
  Given a significance level $\zeta \in (0,1)$ and the maximum relative bias vector $\bm{b}$ with $b_j \ge 0$ for all $j=1,\ldots,K$, we want to construct an interval $\cI(\bm{b}, \zeta) = \cI(\bm{b}, \zeta, \htau_0, \htau_1, \ldots, \htau_K, \sigma_0^2, \sigma_1^2, \ldots, \sigma_K^2)$ such that
  \begin{align}\label{eq:coverage_multi_data}
    \inf_{\bm{\Delta}: \lvert \bm{\Delta}/\sigma_0 \rvert \le \bm{b}} \P_{\bm{\Delta}}(\tau \in \hat\tau - \cI(\bm{b}, \zeta)) 
    = \inf_{\bm{\Delta}: \lvert \bm{\Delta}/\sigma_0 \rvert \le \bm{b}} \P_{\bm{\Delta}}(\hat\tau - \tau \in \cI(\bm{b}, \zeta)) \ge 1 - \zeta,
  \end{align}
The confidence interval for $\tau$ based on $\hat\tau$ is then given by $\hat\tau - \cI(\bm{b}, \zeta)$.
\end{definition}

Then we introduce the monotonicity conditions on the interval $\cI(\bm{b}, \zeta)$ below, which generalizes Assumption~\ref{asp:monotonicity} to the multiple estimators setting.

\begin{assumption}\label{asp:monotonicity_multi_data}
  We assume:
  \begin{enumerate}
    \item For fixed $(\htau_0, \htau_1, \ldots, \htau_K, \sigma_0^2, \sigma_1^2, \ldots, \sigma_K^2)$ and $\zeta$, we have $\cI(\bm{b}, \zeta) \subset \cI(\bm{b'}, \zeta)$ whenever $\bm{b} \le \bm{b'}$, i.e., $b_j \le b'_j$ for all $j=1,\ldots,K$.
    \item For fixed $(\htau_0, \htau_1, \ldots, \htau_K, \sigma_0^2, \sigma_1^2, \ldots, \sigma_K^2)$ and $\bm{b}$, we have $\cI(\bm{b}, \zeta) \subset \cI(\bm{b}, \zeta')$ whenever $\zeta \ge \zeta'$.
  \end{enumerate}
\end{assumption}

Assume $\cI(\bm{b},\zeta)$ satisfies the monotonicity conditions of Assumption~\ref{asp:monotonicity_multi_data}. A common family of such intervals is fixed-length centered intervals: $\cI(\bm{b}, \zeta) = [-c(\bm{b}, \zeta, \sigma_0^2, \sigma_1^2, \ldots, \sigma_K^2), c(\bm{b}, \zeta, \sigma_0^2, \sigma_1^2, \ldots, \sigma_K^2)]$ with some constant $c(\bm{b}, \zeta, \sigma_0^2, \sigma_1^2, \ldots, \sigma_K^2) \ge 0$. In this case, the confidence interval $\hat\tau - \cI(\bm{b}, \zeta)$ in Definition~\ref{def:CI_multi_data} is given by $[\hat\tau - c(\bm{b}, \zeta, \sigma_0^2, \sigma_1^2, \ldots, \sigma_K^2), \hat\tau + c(\bm{b}, \zeta, \sigma_0^2, \sigma_1^2, \ldots, \sigma_K^2)]$.

Then we define the b-value with multiple estimators below.
\begin{definition}\label{def:b_multi_data}
  Define the b-value as the critical boundary $\bm{b}^*$ of testing $\tau = 0$ versus $\tau \neq 0$ as
  \begin{align}\label{eq:b_multi_data}
    \bm{b}^*(\zeta) = \bm{b}^*(\zeta, \htau_0, \htau_1, \ldots, \htau_K, \sigma_0^2, \sigma_1^2, \ldots, \sigma_K^2) = \partial\left\{\bm{b} \ge \bm{0}: 0 \in \hat\tau - \bm{\cI}(\bm{b}, \zeta)\right\}.
  \end{align}
\end{definition}

Definition~\ref{def:b_multi_data} extends the univariate notion of the b-value (Definition~\ref{def:b}) to the multiple estimators setting. As in the multivariate b-value defined in Definition~\ref{def:b_multi}, in higher dimensions $K > 1$, the set $\left\{\bm{b} \ge \bm{0}: \bm{0} \in \bm{\hat\tau} - \bm{\cI}(\bm{b}, \zeta)\right\}$ is a convex region in the first quadrant, and the b-value is the boundary of this region, which is a $(K-1)$-dimensional surface. See Definition~\ref{def:b_multi} and the subsequent discussion for more details.

\subsection{Point estimation}

First, we recall the precision-weighted estimator defined as
\[
  \htau_{\rm PW} = \frac{\sigma_0^{-2}}{\sigma_0^{-2} + \sum_{\ell=1}^K \sigma_\ell^{-2}} \htau_0 + \sum_{j=1}^K \frac{\sigma_j^{-2}}{\sigma_0^{-2} + \sum_{\ell=1}^K \sigma_\ell^{-2}} \htau_j = \htau_0 + \sum_{j=1}^K \frac{\gamma_j}{1+\lVert \bm{\gamma} \rVert_1} (\htau_j - \htau_0).
\]
If $\Delta_j=0$ for all $j=1,\ldots,K$, the precision-weighted estimator is the maximum likelihood estimator and best linear unbiased estimator of $\tau$. Moreover, by the classical Wald–Le Cam asymptotic decision theory \citep{wald1949statistical, le1956asymptotic}, it is asymptotically admissible and minimax under the $L_2$ risk, i.e., $\E[(\hat\tau - \tau)^2]$, among all regular estimators under standard regularity conditions.

Second, we introduce a pretest estimator. We incorporate the pretest for $\Delta_j = 0$ separately using $\htau_j$ and the unbiased estimator $\htau_0$, and combine those $\htau_j$'s that fail to reject the null. Here we assume the pretests share the same significance level $\alpha$ for simplicity. The generalization to different significance levels is straightforward. Since $\htau_j - \htau_0 \sim N(\Delta_j, \sigma_0^2 + \sigma_j^2)$ and $\sigma_0^2 + \sigma_j^2 = (1+\gamma_j^{-1}) \sigma_0^2$, let $$\mathcal{A}_j = \{\lvert \htau_j - \htau_0 \rvert \le (1+\gamma_j^{-1})^{1/2}\sigma_0 c_{\alpha/2}\}$$ denote the event that the pretest fails to reject the null hypothesis ($\Delta_j=0$). The pretest estimator is:
\begin{align*}
  \htau_{\rm PT} = \htau_0 + \sum_{j=1}^K \frac{\gamma_j}{1+\lVert \bm{\gamma} \rVert_1} (\htau_j - \htau_0) \ind\left(\mathcal{A}_j\right).
\end{align*}

Third, we introduce a soft-thresholding estimator, which ensures continuity at the pretest boundary. Instead of dropping a biased estimator entirely when the pretest rejects, soft-thresholding shifts the unbiased estimator by making the estimator continuous at the pretest boundary. The soft-thresholding estimator is:
\begin{align*}
  \htau_{\rm ST}= \htau_0 + \sum_{j=1}^K \frac{\gamma_j}{1+\lVert \bm{\gamma} \rVert_1} \left[(\htau_j - \htau_0) \ind\left(\mathcal{A}_j\right) + (1+\gamma_j^{-1})^{1/2}\sigma_0 c_{\alpha/2} {\rm sign}(\htau_j - \htau_0) \ind\left(\mathcal{A}_j^\textup{c}\right) \right].
\end{align*}

By the definitions of the pretest estimator and the soft-thresholding estimator, when the pretests fail to reject for all $j=1,\ldots,K$, both the pretest estimator and the soft-thresholding estimator reduce to the precision-weighted estimator. When $K=1$, i.e., there is only one biased estimator, both pretest estimator and soft-thresholding estimator reduce to the estimators introduced in Section~\ref{sec:point}. Similar to the discussion in Section~\ref{sec:point}, the choice of $\alpha$ is a tuning parameter.

The pretest estimator and the soft-thresholding estimator considered above are not the only way to combine $\hat\tau_0$ and $\hat\tau_1,\ldots,\hat\tau_K$. For example, one could apply shrinkage or soft-thresholding directly between the precision-weighted estimator $\hat\tau_{\rm PW}$ and the unbiased estimator $\hat\tau_0$, rather than at the level of individual biased components. We focus on thresholding at the component level because it admits a transparent interpretation in terms of testing and controlling each bias component $\Delta_j$ separately, which aligns naturally with our sensitivity analysis perspective. Our framework can accommodate other shrinkage schemes in principle, but we do not pursue them here.

\subsection{Confidence intervals}

First, we construct the confidence intervals based on the precision-weighted estimator $\htau_{\rm PW}$. Under Assumption~\ref{ass:normal_multi_data}, we have $\htau_{\rm PW} \sim N(\tau + (1+\lVert \bm{\gamma} \rVert_1)^{-1} \langle \bm{\gamma}, \bm{\Delta} \rangle, (1+\lVert \bm{\gamma} \rVert_1)^{-1} \sigma_0^2)$. The following theorem provides the confidence intervals based on $\htau_{\rm PW}$ as a function of bias bound $\bm{b}$.

\begin{theorem}\label{prop:L_PW_multi}
  Let $\hat{L}_{\rm PW} = \hat{L}_{\rm PW}(\bm{b}, \zeta, \bm{\gamma}) \ge 0$ denote the solution to the equation of $L$:
  \[
    \Phi\left(L - \frac{\langle \bm{\gamma}, \bm{b} \rangle}{\sqrt{1+\lVert \bm{\gamma} \rVert_1}}  \right) - \Phi\left(-L - \frac{\langle \bm{\gamma}, \bm{b} \rangle}{\sqrt{1+\lVert \bm{\gamma} \rVert_1}} \right) = 1-\zeta.
  \]
  The $\hat{L}_{\rm PW}$ always exists and is unique. The shortest length symmetric centered confidence interval based on $\htau_{\rm PW}$ for $\tau$ satisfying \eqref{eq:coverage_multi_data} is given by $[\htau_{\rm PW} - \hat{L}_{\rm PW} (1+\lVert \bm{\gamma} \rVert_1)^{-1/2} \sigma_0, \htau_{\rm PW} + \hat{L}_{\rm PW} (1+\lVert \bm{\gamma} \rVert_1)^{-1/2} \sigma_0]$.
\end{theorem}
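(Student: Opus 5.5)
The plan is to reduce the statement to the scalar argument behind Theorem~\ref{thm:L_PW}. Under Assumption~\ref{ass:normal_multi_data}, $\htau_{\rm PW}$ is a linear combination of independent normals. Hence $\htau_{\rm PW} - \tau \sim N(\beta(\bm{\Delta}), s^2)$, where $\beta(\bm{\Delta}) = (1+\lVert \bm{\gamma} \rVert_1)^{-1}\langle \bm{\gamma}, \bm{\Delta}\rangle$ and $s^2 = (1+\lVert \bm{\gamma} \rVert_1)^{-1}\sigma_0^2$. The variance follows from $\mathrm{Var} = (\sigma_0^{-2}+\sum_\ell \sigma_\ell^{-2})^{-1}$ together with $\sigma_\ell^{-2} = \gamma_\ell \sigma_0^{-2}$. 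We write a candidate symmetric centered interval as $[-Ls, Ls]$ for $L \ge 0$. Then the coverage is
\[
g(L,\mu) := \P_{\bm{\Delta}}(\lvert \htau_{\rm PW}-\tau\rvert \le Ls) = \Phi(L - \mu) - \Phi(-L-\mu),
\]
with standardized bias
\[
\mu = \beta(\bm{\Delta})/s = \frac{\langle \bm{\gamma}, \bm{\Delta}/\sigma_0\rangle}{\sqrt{1+\lVert \bm{\gamma} \rVert_1}}.
\]

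\textbf{Step 1: the worst case over the bias box.} Consider $g(L,\mu)$ as a function of $\mu$. It is even, because $\Phi(L+\mu)-\Phi(-L+\mu)$ equals the same expression by $\Phi(-x)=1-\Phi(x)$. Its derivative is $\partial_\mu g = \phi(L+\mu) - \phi(L-\mu)$, which is $\le 0$ for $\mu \ge 0$ since $\lvert L+\mu\rvert \ge \lvert L-\mu\rvert$. So $g$ is nonincreasing in $\lvert\mu\rvert$. Over the box $\lvert \bm{\Delta}/\sigma_0\rvert \le \bm{b}$, the quantity $\lvert\langle \bm{\gamma}, \bm{\Delta}/\sigma_0\rangle\rvert$ is maximized at $\bm{\Delta}/\sigma_0 = \bm{b}$, with value $\langle\bm{\gamma},\bm{b}\rangle$, because every $\gamma_j>0$. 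Therefore
\[
\inf_{\bm{\Delta}} g(L,\mu) = g(L, \mu^*), \qquad \mu^* = \langle \bm{\gamma}, \bm{b} \rangle/\sqrt{1+\lVert \bm{\gamma} \rVert_1},
\]
and the infimum is attained.

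\textbf{Step 2: existence, uniqueness and minimality of $L$.} For fixed $\mu^*$, the map $L \mapsto g(L,\mu^*)$ is continuous and strictly increasing, since its derivative $\phi(L-\mu^*)+\phi(L+\mu^*)$ is positive. It equals $0$ at $L=0$ and tends to $1$ as $L\to\infty$. By the intermediate value theorem, the equation $g(L,\mu^*) = 1-\zeta$ has exactly one root $\hat L_{\rm PW}$. The set of admissible $L$, namely those with $\inf_{\bm{\Delta}} g \ge 1-\zeta$, is then exactly $[\hat L_{\rm PW},\infty)$. So the shortest interval satisfying \eqref{eq:coverage_multi_data} has half-length $\hat L_{\rm PW}\, s = \hat L_{\rm PW} (1+\lVert \bm{\gamma} \rVert_1)^{-1/2}\sigma_0$, which is the claim.

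I expect no real obstacle here. The only step needing care is Step 1. Maximizing the absolute linear functional over the box reduces to the single vertex $\bm{b}$ because the weights $\gamma_j$ are positive; this is unlike Theorem~\ref{prop:CI_multi}, where a supremum over sign patterns was needed. That step must be combined with the monotonicity of the folded-normal coverage in $\lvert\mu\rvert$. The remaining computations repeat those of Theorem~\ref{thm:L_PW}, which is the case $K=1$.
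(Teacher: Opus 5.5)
Your proposal is correct and follows the paper's proof: you standardize $\htau_{\rm PW}-\tau$ to $N(\mu,1)$ with $\mu=\langle\bm{\gamma},\bm{\Delta}/\sigma_0\rangle/\sqrt{1+\lVert\bm{\gamma}\rVert_1}$, then evaluate the worst-case coverage at the vertex $\bm{\Delta}/\sigma_0=\bm{b}$. You also make explicit two steps the paper leaves implicit: that the folded-normal coverage is even and nonincreasing in $\lvert\mu\rvert$, and that $L\mapsto\Phi(L-\mu^*)-\Phi(-L-\mu^*)$ increases strictly from $0$ to $1$, which gives the claimed existence and uniqueness of $\hat L_{\rm PW}$.
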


Theorem~\ref{prop:L_PW_multi} extends the single estimator result in Theorem~\ref{thm:L_PW} to the multiple estimators setting. In the single estimator case, the bias constraint $\{\Delta: \lvert \Delta/\sigma_0 \rvert \le b\}$ has only two boundary points, and the absolute bias of $\htau_{\rm PW}$ is the same at both endpoints. By contrast, the multiple estimators bias constraint $\{\bm{\Delta}: \lvert \bm{\Delta}/\sigma_0 \rvert \le \bm{b}\}$ is a hyperrectangle with $2^K$ vertices, and the maximum absolute bias of $\htau_{\rm PW}$ occurs only at $\bm{\Delta}/\sigma_0 = \pm \bm{b}$. Theorem~\ref{prop:L_PW_multi} therefore characterizes the maximum possible bias of $\htau_{\rm PW}$ over the entire bias constraint.

Second, we construct the confidence intervals based on the pretest estimator $\htau_{\rm PT}$. We seek the shortest length $\hat{L}_{\rm PT} = \hat{L}_{\rm PT}(\bm{b}, \zeta, \bm{\gamma}, \alpha)$ such that the confidence interval $[\htau_{\rm PT}- \hat{L}_{\rm PT} (1+\lVert \bm{\gamma} \rVert_1)^{-1/2} \sigma_0, \htau_{\rm PT}+ \hat{L}_{\rm PT} (1+\lVert \bm{\gamma} \rVert_1)^{-1/2} \sigma_0]$ achieves correct coverage uniformly over all $\bm{\Delta}$ with $\lvert \Delta_j/\sigma_0 \rvert \le b_j$ for all $j=1,\ldots,K$. We can formulate $\hat{L}_{\rm PT}$ as the optimization problem:
\begin{align}\label{eq:L_PT_multi}
  \hat{L}_{\rm PT} = \inf \left\{ L \ge 0: \inf_{\bm{\Delta}: \lvert \bm{\Delta}/\sigma_0 \rvert \le \bm{b}} \P_{\bm{\Delta}}(\lvert \htau_{\rm PT}- \tau \rvert \le L (1+\lVert \bm{\gamma} \rVert_1)^{-1/2} \sigma_0) \ge 1-\zeta \right\}. 
\end{align}

We show how to compute $\hat{L}_{\rm PT}$ in the following theorem.

\begin{theorem}\label{thm:L_PT_multi}
  $\hat{L}_{\rm PT} = \hat{L}_{\rm PT}(\bm{b}, \zeta, \bm{\gamma}, \alpha)$ in \eqref{eq:L_PT_multi} is the solution to the equation of $L$:
  \begin{align*}
    \inf_{\bm{\Delta}: \lvert \bm{\Delta}/\sigma_0 \rvert \le \bm{b}} \P_{\bm{\Delta}}(\lvert \htau_{\rm PT}- \tau \rvert \le L (1+\lVert \bm{\gamma} \rVert_1)^{-1/2} \sigma_0) = 1-\zeta,
  \end{align*} 
  with an explicit form given by:
  \begin{align*}
    &\P_{\bm{\Delta}/\sigma_0 = \bm{t}}(\lvert \htau_{\rm PT}- \tau \rvert \le L (1+\lVert \bm{\gamma} \rVert_1)^{-1/2} \sigma_0) \\
    =& \int_{\bR^K} \left[\Phi\left(L - \frac{\langle \bm{\gamma}, \bm{t} - \bm{u}' \rangle}{\sqrt{1+\lVert \bm{\gamma} \rVert_1}}  \right) - \Phi\left(-L - \frac{\langle \bm{\gamma}, \bm{t} - \bm{u}' \rangle}{\sqrt{1+\lVert \bm{\gamma} \rVert_1}} \right)\right] \phi_{\bm{t}, \bm{V}}(\bm{u}) \d \bm{u},
  \end{align*}
  where for any $\bm{u} \in \bR^K$, $\bm{u}' \in \bR^K$ is defined as
  \[
    u_j' = u_j \ind(\lvert u_j \rvert > (1+\gamma_j^{-1})^{1/2} c_{\alpha/2}) \text{ for all } j=1,\ldots,K,
  \]
  and $\bm{V}_{ij} = 1 + \gamma_i^{-1} \ind (i=j)$ for $i,j=1,\ldots,K$. $[\htau_{\rm PT}- \hat{L}_{\rm PT} (1+\lVert \bm{\gamma} \rVert_1)^{-1/2} \sigma_0, \htau_{\rm PT}+ \hat{L}_{\rm PT} (1+\lVert \bm{\gamma} \rVert_1)^{-1/2} \sigma_0]$ is the shortest length symmetric centered confidence interval based on $\htau_{\rm PT}$ for $\tau$ satisfying \eqref{eq:coverage_multi_data}.
\end{theorem}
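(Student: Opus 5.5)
We reduce $\htau_{\rm PT}-\tau$ to a function of independent Gaussians. Standardize by writing $\htau_0=\tau+\sigma_0\cZ_0$ and $\htau_j=\tau+\Delta_j+\sigma_j\cZ_j$, with $\cZ_0,\ldots,\cZ_K$ i.i.d.\ $N(0,1)$. Set $U_j=(\htau_j-\htau_0)/\sigma_0=t_j+\gamma_j^{-1/2}\cZ_j-\cZ_0$, where $t_j=\Delta_j/\sigma_0$. The vector $\bm{U}=(U_1,\ldots,U_K)^\top$ is Gaussian with mean $\bm{t}$. Its covariance is $\mathrm{Cov}(U_i,U_j)=1+\gamma_i^{-1}\ind(i=j)$, which is exactly $\bm{V}$. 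The acceptance event is $\mathcal{A}_j=\{\lvert U_j\rvert\le(1+\gamma_j^{-1})^{1/2}c_{\alpha/2}\}$, so it is a deterministic function of $U_j$. Therefore
\[
\htau_{\rm PT}-\tau=\sigma_0\cZ_0+\frac{\sigma_0}{1+\lVert\bm{\gamma}\rVert_1}\sum_j\gamma_jU_j\ind(\mathcal{A}_j)=\sigma_0\Big[\cZ_0+\frac{\langle\bm{\gamma},\bm{U}-\bm{U}'\rangle}{1+\lVert\bm{\gamma}\rVert_1}\Big],
\]
where $\bm{U}'$ is defined componentwise as in the statement, since $U_j\ind(\mathcal{A}_j)=U_j-U_j'$.

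The key step is to decompose $\cZ_0$ into a part independent of $\bm{U}$ plus a linear function of $\bm{U}$. Take $W=\sigma_0\cZ_0+(1+\lVert\bm{\gamma}\rVert_1)^{-1}\sigma_0\langle\bm{\gamma},\bm{U}-\bm{t}\rangle$. Up to the centering at $\bm{t}$, this is $\htau_{\rm PW}-\tau$. We check $\mathrm{Cov}(W,U_i)=0$:
\[
\mathrm{Cov}\big(\cZ_0+\textstyle(1+\lVert\bm{\gamma}\rVert_1)^{-1}\sum_j\gamma_j(\gamma_j^{-1/2}\cZ_j-\cZ_0),\ \gamma_i^{-1/2}\cZ_i-\cZ_0\big)=-1+\tfrac{\lVert\bm{\gamma}\rVert_1}{1+\lVert\bm{\gamma}\rVert_1}+\tfrac{1}{1+\lVert\bm{\gamma}\rVert_1}=0.
\]
The variance is $\sigma_0^2(1+\lVert\bm{\gamma}\rVert_1)^{-1}$. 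This is the standard fact that the MLE under $\bm{\Delta}=0$ is independent of the contrasts; one can confirm it through the analogous computation already used in Lemma~\ref{lemma:clt_PT}. Joint Gaussianity then makes $W\sim N(0,\sigma_0^2/(1+\lVert\bm{\gamma}\rVert_1))$ independent of $\bm{U}$. Substituting gives
\[
\htau_{\rm PT}-\tau=W+\frac{\sigma_0\langle\bm{\gamma},\bm{t}-\bm{U}'\rangle}{1+\lVert\bm{\gamma}\rVert_1}.
\]
Condition on $\bm{U}=\bm{u}$ and divide by $\sigma_0(1+\lVert\bm{\gamma}\rVert_1)^{-1/2}$. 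The conditional coverage probability is then $\Phi(L-m)-\Phi(-L-m)$ with $m=\langle\bm{\gamma},\bm{t}-\bm{u}'\rangle/\sqrt{1+\lVert\bm{\gamma}\rVert_1}$. Integrating against $\phi_{\bm{t},\bm{V}}$ yields the displayed formula. The case $K=1$ reproduces \eqref{eq:P_PT}, which serves as a sanity check.

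For the characterization of $\hat L_{\rm PT}$, let $g(L)=\inf_{\lvert\bm{t}\rvert\le\bm{b}}\P_{\bm{t}}(\cdot)$. Each integrand is nondecreasing and continuous in $L$. It tends to $1$ as $L\to\infty$ and equals $0$ at $L=0$. By dominated convergence, each $\P_{\bm{t}}$ is continuous in $(L,\bm{t})$, because $\bm{u}\mapsto\bm{u}'$ is discontinuous only on a null set. The box $\{\lvert\bm{t}\rvert\le\bm{b}\}$ is compact, so $g$ is continuous and nondecreasing, with $g(0)=0$ and $g(L)\to1$. By the intermediate value theorem, $g(L)=1-\zeta$ has a solution, and the infimum in \eqref{eq:L_PT_multi} equals the smallest solution. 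This is the main subtle point: uniqueness needs strict monotonicity, which holds because $\Phi(L-m)-\Phi(-L-m)$ is strictly increasing in $L$ for every $m$. Strict increase survives the infimum: at any $L_1<L_2$ the infimum at $L_2$ is attained at some $\bm{t}^*$, and there $\P_{\bm{t}^*}(L_1)<\P_{\bm{t}^*}(L_2)$. Finally, any symmetric centered interval of half-width $L'(1+\lVert\bm{\gamma}\rVert_1)^{-1/2}\sigma_0$ with $L'<\hat L_{\rm PT}$ has $g(L')<1-\zeta$ and so violates \eqref{eq:coverage_multi_data}. This gives shortest length.
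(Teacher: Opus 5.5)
Your proof is correct and follows the same route as the paper. Like the paper, you write $\htau_{\rm PT}$ as the precision-weighted estimator minus the rejected contrast terms, use the independence of the centered precision-weighted error $W$ from the contrast vector $\bm U\sim N(\bm t,\bm V)$, and integrate the conditional folded-normal coverage $\Phi(L-m)-\Phi(-L-m)$ against $\phi_{\bm t,\bm V}$.

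You go further than the paper in two places. You verify the independence explicitly through the covariance computation. You also justify that $\hat L_{\rm PT}$ exists and is unique, using joint continuity, compactness of the bias box, and strict monotonicity in $L$; the paper simply takes $\hat L_{\rm PT}$ to solve the equation. The only step left implicit is that $g(L)\to 1$ uniformly over the box, which follows from Dini's theorem.
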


Theorem~\ref{thm:L_PT_multi} extends the single estimator result in Theorem~\ref{thm:L_PT} to the multiple estimators setting. Theorem~\ref{thm:L_PT_multi} provides an explicit expression for the coverage probability $\P_{\bm{\Delta}/\sigma_0 = \bm{t}}(\lvert \htau_{\rm PT}- \tau \rvert \le L (1+\lVert \bm{\gamma} \rVert_1)^{-1/2} \sigma_0)$ as a function of $\bm{t}$ and $L$.

Third, we construct the confidence intervals based on the soft-thresholding estimator $\htau_{\rm ST}$. We seek the shortest length $\hat{L}_{\rm ST} = \hat{L}_{\rm ST}(\bm{b}, \zeta, \bm{\gamma}, \alpha)$ such that the confidence interval $[\htau_{\rm ST}- \hat{L}_{\rm ST} (1+\lVert \bm{\gamma} \rVert_1)^{-1/2} \sigma_0, \htau_{\rm ST}+ \hat{L}_{\rm ST} (1+\lVert \bm{\gamma} \rVert_1)^{-1/2} \sigma_0]$ achieves correct coverage uniformly over all $\bm{\Delta}$ with $\lvert \Delta_j/\sigma_0 \rvert \le b_j$ for all $j=1,\ldots,K$. We can formulate $\hat{L}_{\rm ST}$ as the optimization problem:
\begin{align}\label{eq:L_multi}
  \hat{L}_{\rm ST} = \inf \left\{ L \ge 0: \inf_{\bm{\Delta}: \lvert \bm{\Delta}/\sigma_0 \rvert \le \bm{b}} \P_{\bm{\Delta}}(\lvert \htau_{\rm ST}- \tau \rvert \le L (1+\lVert \bm{\gamma} \rVert_1)^{-1/2} \sigma_0) \ge 1-\zeta \right\}. 
\end{align}

We show that $\hat{L}_{\rm ST}$ can be computed efficiently in the following theorem.

\begin{theorem}\label{prop:L_ST_multi}
  For any $L > 0$, $\P_{\bm{\Delta}}(\lvert \htau_{\rm ST}- \tau \rvert \le L (1+\lVert \bm{\gamma} \rVert_1)^{-1/2} \sigma_0)$ is symmetric about $\bm{\Delta} = \bm{0}$ and monotonically decreasing in $\lvert \Delta_j \rvert$ for all $j=1,\ldots,K$. Then $\hat{L}_{\rm ST} = \hat{L}_{\rm ST}(\bm{b}, \zeta, \bm{\gamma}, \alpha)$ in \eqref{eq:L_multi} is the solution to the equation of $L$:
  \begin{align*}
    \P_{\bm{\Delta}/\sigma_0 = \bm{b}}(\lvert \htau_{\rm ST}- \tau \rvert \le L (1+\lVert \bm{\gamma} \rVert_1)^{-1/2} \sigma_0) = 1-\zeta,
  \end{align*}
  with an explicit form given by:
  \begin{align*}
    &\P_{\bm{\Delta}/\sigma_0 = \bm{t}}(\lvert \htau_{\rm ST}- \tau \rvert \le L (1+\lVert \bm{\gamma} \rVert_1)^{-1/2} \sigma_0) \\
    =& \int_{\bR^K} \left[\Phi\left(L - \frac{\langle \bm{\gamma}, \bm{t} - \bm{u}' \rangle}{\sqrt{1+\lVert \bm{\gamma} \rVert_1}}  \right) - \Phi\left(-L - \frac{\langle \bm{\gamma}, \bm{t} - \bm{u}' \rangle}{\sqrt{1+\lVert \bm{\gamma} \rVert_1}} \right)\right] \phi_{\bm{t}, \bm{V}}(\bm{u}) \d \bm{u},
  \end{align*}
  where for any $\bm{u} \in \bR^K$, $\bm{u}' \in \bR^K$ is defined as
  \[
    u_j' = [u_j  - (1+\gamma_j^{-1})^{1/2} c_{\alpha/2} {\rm sign}(u_j)] \ind(\lvert u_j \rvert > (1+\gamma_j^{-1})^{1/2} c_{\alpha/2}) \text{ for all } j=1,\ldots,K,
  \]
  and $\bm{V}_{ij} = 1 + \gamma_i^{-1} \ind (i=j)$ for $i,j=1,\ldots,K$. $[\htau_{\rm ST}- \hat{L}_{\rm ST} (1+\lVert \bm{\gamma} \rVert_1)^{-1/2} \sigma_0, \htau_{\rm ST}+ \hat{L}_{\rm ST} (1+\lVert \bm{\gamma} \rVert_1)^{-1/2} \sigma_0]$ is the shortest length symmetric centered confidence interval based on $\htau_{\rm ST}$ for $\tau$ satisfying \eqref{eq:coverage_multi_data}.
\end{theorem}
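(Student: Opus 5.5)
We first derive a distributional representation analogous to Lemma~\ref{lemma:clt_ST}, then reduce monotonicity in each $\lvert\Delta_j\rvert$ to a one-dimensional statement that is Theorem~\ref{thm:L_ST} in disguise.

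\textbf{Representation.} Write $D_j = \htau_j - \htau_0$ and let $u_j = D_j/\sigma_0$. Then $\bm{u} \sim N(\bm{t}, \bm{V})$, where $\bm{t}=\bm{\Delta}/\sigma_0$ and $\bm{V}_{ij} = 1+\gamma_i^{-1}\ind(i=j)$, because $\htau_0$ is shared across all differences. Take the precision-weighted estimator as the reference point. The residual $\htau_{\rm PW}-\tau$ minus its bias is $\htau_0 + \sum_j \gamma_j(1+\lVert\bm\gamma\rVert_1)^{-1}(D_j-\Delta_j)-\tau$. This quantity is Gaussian with variance $\sigma_0^2/(1+\lVert\bm\gamma\rVert_1)$, and it is independent of $\bm{u}$, since $\htau_{\rm PW}$ is the MLE under $\bm\Delta=\bm 0$ and the $D_j$ are ancillary contrasts; a direct covariance computation also confirms this. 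Next, $\htau_{\rm ST}-\htau_{\rm PW} = -\sum_j \gamma_j(1+\lVert\bm\gamma\rVert_1)^{-1}\sigma_0 u_j'$, where $u_j'$ is the soft-threshold remainder defined in the statement. Hence
\[
\htau_{\rm ST}-\tau = (1+\lVert\bm\gamma\rVert_1)^{-1/2}\sigma_0\,\cZ_1 + (1+\lVert\bm\gamma\rVert_1)^{-1}\sigma_0\langle\bm\gamma,\bm t-\bm u'\rangle,
\]
with $\cZ_1$ independent of $\bm u$. Conditioning on $\bm u$ gives the stated integral formula.

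\textbf{Symmetry.} Replacing $\bm t$ by $-\bm t$ is equivalent to $\bm u\mapsto-\bm u$, and $\bm u'$ is odd in $\bm u$. The integrand $\Phi(L-x)-\Phi(-L-x)$ is even in $x$, so the coverage is symmetric about $\bm\Delta=\bm0$.

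\textbf{Monotonicity.} Fix $j$ and all $\Delta_k$ with $k\ne j$. The natural step is to condition on everything except $\htau_j$: condition on $\htau_0$ and on $\htau_k$ for $k\neq j$. Given these, $\htau_{\rm ST}-\tau$ equals a constant $a$ (depending on the conditioning variables and on $\Delta_k$, $k\ne j$) plus $w_j\,g(\htau_j-\htau_0)$, where $w_j=\gamma_j/(1+\lVert\bm\gamma\rVert_1)$ and $g$ is the clipping map $x\mapsto \max(-c,\min(x,c))$ with $c=(1+\gamma_j^{-1})^{1/2}\sigma_0c_{\alpha/2}$. Conditionally, $\htau_j\sim N(\tau+\Delta_j,\sigma_j^2)$. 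The conditional coverage is therefore $\P(w_j g(X+\Delta_j)\in[-\ell-a,\ell-a])$ with $X$ Gaussian centered at $\htau_0$-dependent values. This form does not by itself give monotonicity in $\lvert\Delta_j\rvert$ pointwise, because $a$ is not centered.

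The cleaner route, and the one I would pursue, is to mirror the univariate proof of Theorem~\ref{thm:L_ST}. First decompose $\htau_0$ into the part $Z$ independent of $D_j$ and the part correlated with it. Specifically, let $\tilde\tau$ be the precision-weighted combination of $\htau_0$ and $\htau_j$ alone. Then $\tilde\tau$ is independent of $D_j$, and all other terms depend on $D_j$ only through $\htau_0 = \tilde\tau - \gamma_j(1+\gamma_j)^{-1}D_j$. Writing everything in terms of $(\tilde\tau,D_j,\{\htau_k\}_{k\ne j})$, we get $\htau_{\rm ST}-\tau = Y + h(D_j)$. Here $Y$ is a Gaussian mixture, symmetric and unimodal after conditioning on the others as $\Delta_k$ varies, though it is not centered in general. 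The function $h$ is piecewise linear with slope pattern (flat/linear/flat) shifted by $\Delta_j$.

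The key lemma is then the following. If $Y$ has a symmetric unimodal law centered at $m$, and $h(D_j)$ is a nondecreasing function of $D_j$ with $D_j\sim N(\Delta_j,\cdot)$, then $\P(\lvert Y+h\rvert\le\ell)$ is decreasing in $\lvert\Delta_j\rvert$. The obstacle is precisely that the conditional center $m$ depends on the other coordinates, which breaks symmetry. I would handle this by averaging over the $\htau_k$ ($k\ne j$) only at the end, using the joint symmetry $\bm u\mapsto-\bm u$ established above together with an Anderson-type inequality: the density of $\htau_{\rm ST}-\tau$ is a convolution of a centered Gaussian with a law that is symmetric and unimodal in the $\Delta_j$-direction. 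Applying the univariate argument of Theorem~\ref{thm:L_ST} coordinate by coordinate, with the other coordinates integrated symmetrically in pairs $(\bm t_{-j},-\bm t_{-j})$, yields monotonicity in each $\lvert\Delta_j\rvert$. This is the step I expect to be hardest.

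\textbf{Conclusion.} Given monotonicity and symmetry, the infimum over the box is attained at $\bm t=\pm\bm b$. Coverage is continuous and strictly increasing in $L$, from $0$ to $1$, so the solution of the stated equation exists, is unique, and gives the shortest symmetric centered interval.
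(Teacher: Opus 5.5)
Your representation step and your symmetry argument are correct and match the paper's derivation. The representation step uses the independence of the precision-weighted residual from the scaled contrasts $\bm u\sim N(\bm t,\bm V)$, then conditions on $\bm u$. The genuine gap is the monotonicity step. The obstruction you identify is not a technicality: once $\bm t_{-j}\neq\bm 0$, the other components contribute a non-centered term, and the claimed monotonicity in $\lvert t_j\rvert$ is false in general.

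Two parts of your sketch fail for this reason. Your key lemma holds only for $m=0$. Pairing $\bm t_{-j}$ with $-\bm t_{-j}$ is not available, because $\bm t_{-j}$ is a fixed parameter, not something you integrate over. By central symmetry, such an average only controls $p(t_j,\bm t_{-j})+p(-t_j,\bm t_{-j})$, where $p(\bm t)$ is the coverage at $\bm\Delta/\sigma_0=\bm t$.

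Here is a concrete counterexample. Take $K=2$, $\gamma_1=\gamma_2=\gamma$, $w=\gamma/(1+2\gamma)$, $\kappa=(1+\gamma^{-1})^{1/2}c_{\alpha/2}$ and $g_\kappa(x)=\max\{-\kappa,\min\{x,\kappa\}\}$. Then $(\htau_{\rm ST}-\tau)/\sigma_0=Z_0+w\,g_\kappa(u_1)+w\,g_\kappa(u_2)$ with $Z_0=(\htau_0-\tau)/\sigma_0$.
\begin{itemize}
\item Let $t_2=-s$ with $s\to\infty$, so $g_\kappa(u_2)\to-\kappa$.
\item At $t_1=0$, the error tends to $X-w\kappa$, where $X=Z_0+w\,g_\kappa(u_1)$ is symmetric about $0$.
\item At $t_1=s$, the two thresholded shifts cancel and $\htau_{\rm ST}\to\htau_0$.
\item Take $\alpha=0.05$ and $\gamma=1$. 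Then $X$ equals $\tfrac23 Z_0+\tfrac13 Z_1$ (with $Z_1=\htau_1-\tau$ standardized, independent of $Z_0$) except on an event of probability $0.05$, so it is close to $N(0,5/9)$; also $w\kappa\approx0.92$.
\item With half-width $2\sigma_0$, i.e.\ $L=2\sqrt3$, the limiting coverages are about $0.92$ at $t_1=0$ and $0.95$ at $t_1=s$. The relative gap is even larger for small $L$.
\end{itemize}
Hence $p(s,-s)>p(0,-s)$ for large $s$, so coverage is not decreasing in $\lvert t_1\rvert$.

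A derivative identity shows exactly where the univariate argument breaks. Write $f(\mu)=\Phi(L-\mu)-\Phi(-L-\mu)$ and $\kappa_j=(1+\gamma_j^{-1})^{1/2}c_{\alpha/2}$, and hold the noise $\bm u-\bm t$ fixed. Then
\[
\frac{\partial p(\bm t)}{\partial t_j}=\frac{\gamma_j}{\sqrt{1+\lVert\bm\gamma\rVert_1}}\,\E\bigl[f'(M)\,\ind(\lvert u_j\rvert\le\kappa_j)\bigr],
\]
where $M=\langle\bm\gamma,\bm t-\bm u'\rangle/\sqrt{1+\lVert\bm\gamma\rVert_1}$. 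On the event $\{\lvert u_j\rvert\le\kappa_j\}$, $M$ equals $\{\gamma_jt_j+\sum_{k\neq j}\gamma_k(t_k-u_k')\}/\sqrt{1+\lVert\bm\gamma\rVert_1}$.
\begin{itemize}
\item For $K=1$ the sum is empty and $M$ is deterministic, which gives the monotonicity in Theorem~\ref{thm:L_ST} immediately.
\item For $K\ge2$, the sign of the derivative is governed by $\bm t_{-j}$ as much as by $t_j$.
\end{itemize}

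There is a second gap. Even granting coordinatewise monotonicity, ``monotonicity plus symmetry gives the infimum at $\pm\bm b$'' does not follow. You would only get a minimizer at some vertex $\bm b\odot\bm s$. The identity $p(\bm t)=p(-\bm t)$ identifies $\bm b$ with $-\bm b$ but says nothing about mixed-sign vertices.

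The paper's own proof does not supply the missing argument either: it asserts the monotonicity only as ``analogous to the proof of Theorem~\ref{thm:L_ST}''. The conclusion that the worst case is at $\pm\bm b$, where all biases are aligned, is plausible. To prove it, however, you would need a claim of a different kind that compares sign patterns directly. One example is $p(\bm t)\ge p(\lvert\bm t\rvert)$ together with monotonicity restricted to the positive orthant. Nothing in your sketch implies such a claim. The remaining steps, continuity and strict monotonicity of the coverage in $L$, are fine.
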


Theorem~\ref{prop:L_ST_multi} extends the single estimator result in Theorem~\ref{thm:L_ST} to the multiple estimators setting. Theorem~\ref{prop:L_ST_multi} provides an explicit expression for the coverage probability $\P_{\bm{\Delta}/\sigma_0 = \bm{t}}(\lvert \htau_{\rm ST}- \tau \rvert \le L (1+\lVert \bm{\gamma} \rVert_1)^{-1/2} \sigma_0)$ as a function of $\bm{t}$ and $L$. As in the single estimator case, the monotonicity makes the computation of $\hat{L}_{\rm ST}$ more efficient than that of $\hat{L}_{\rm PT}$. For any $L>0$, the coverage probability is minimized at the vertices $\bm{\Delta}$ satisfying $\bm{\Delta}/\sigma_0 = \pm \bm{b}$.

\section{Empirical studies}\label{sec:empirical}

In this section, we use the example from \cite{angrist1991does} to demonstrate how our framework works in practice, where the authors studied the effect of years of schooling on earnings. \cite{angrist1991does} used quarter of birth as an instrument to obtain the IV estimate. They also reported the OLS estimate, which would be biased in the presence of endogeneity.

% In this section, we use two empirical examples to demonstrate how our framework works in practice. The first example is from \cite{angrist1991does}, where the authors studied the effect of years of schooling on earnings. \cite{angrist1991does} used quarter of birth as an instrument to obtain the IV estimate. They also reported the OLS estimate, which would be biased in the presence of endogeneity. The second example is from \cite{lalonde1986evaluating}, where the authors compared experimental and observational estimates of the causal effect of job training using the National Supported Work demonstration. We consider two estimators that both use the experimental treatment group but differ in how the control group is constructed. The first estimator (RCT) uses the experimental control group and therefore yields an unbiased estimate of the treatment effect. The second estimator (OC) uses observational controls from \cite{lalonde1986evaluating}’s ``CPS-1'' sample, as reconstructed by \cite{dehejia1999causal}, which is subject to selection bias. Thus, the RCT estimator is unbiased, while the OC estimator is biased but more precise since the CPS-1 sample is larger than the experimental sample. We analyze the \cite{angrist1991does} example in the remainder of this section, and relegate the analysis of the \cite{lalonde1986evaluating} example to Appendix~\ref{appendix:empirical}.

We consider the OLS estimator and the IV estimator. The OLS estimator of the return to schooling is potentially biased because schooling decisions are endogenous: unobserved factors may influence both schooling and earnings. By contrast, the IV estimator uses quarter of birth as a source of exogenous variation, yielding a consistent estimate of the causal effect under standard IV assumptions. We take the IV estimator as the unbiased estimator $\htau_0$ and the OLS estimator as the biased estimator $\htau_1$. Using these two estimators, we construct the three combined estimators: the precision-weighted estimator $\htau_{\rm PW}$, the pretest estimator $\htau_{\rm PT}$, and the soft-thresholding estimator $\htau_{\rm ST}$.

Figure~\ref{fig:ols_iv} visualizes the confidence intervals of the IV, OLS, and combined estimators against the maximum relative bias $\lvert \Delta/\sigma_0 \rvert \le b$.

\begin{figure}[htbp]
  \centering
  \includegraphics[width=\linewidth]{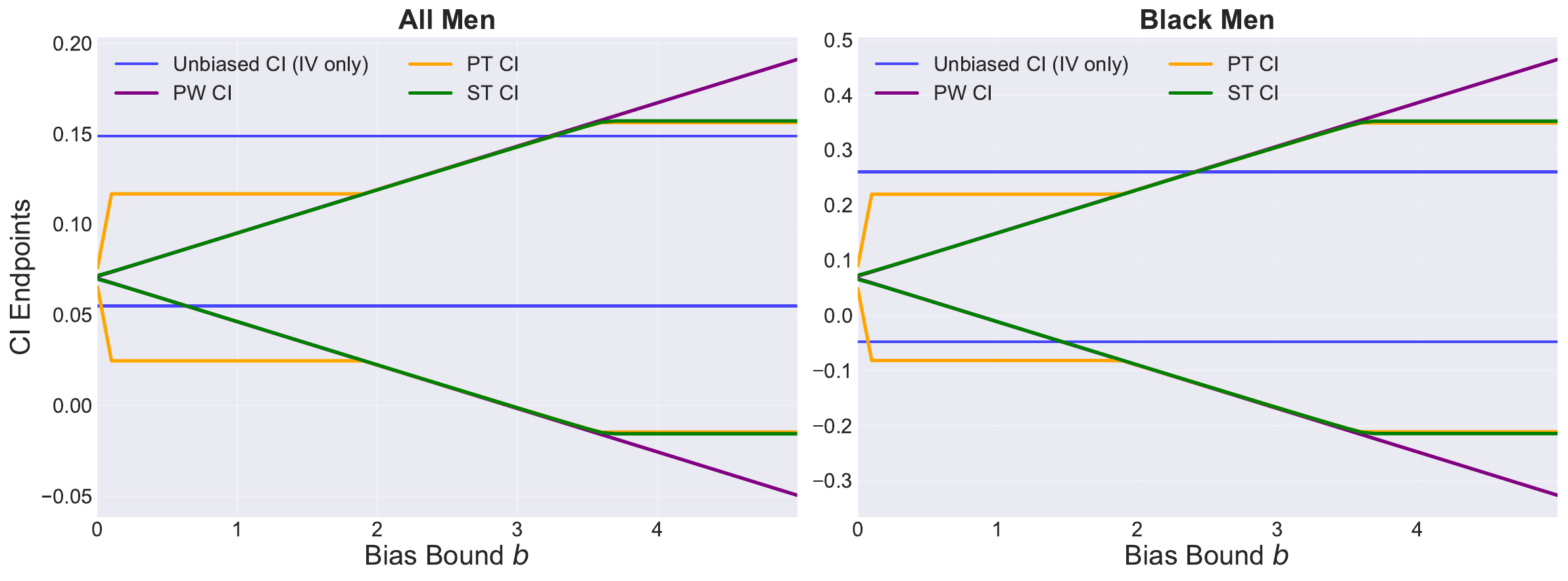}
  \caption{Confidence intervals against the maximum relative bias $\lvert \Delta/\sigma_0 \rvert \le b$}
  \label{fig:ols_iv}
\end{figure}

% Table~\ref{tab:iv_ols} reports the estimates and standard errors of the IV and OLS estimators for men (black men and white men) born between 1930-39, replicating the results in \citet[Panel B, Table III]{angrist1991does}. Following Section II.C in \cite{angrist1991does}, Table~\ref{tab:iv_ols} further reports the estimates and standard errors of the IV and OLS estimators for black men born between 1930-39. For each sample, Table~\ref{tab:iv_ols} also reports, for each combined estimator, the point estimate and the half-length of the 95\% ($\alpha=0.05$) confidence interval under two scenarios: no bias ($b=0$) and infinite bias ($b=\infty$). For comparison, the half-lengths of the confidence intervals based solely on OLS and IV are also provided. For IV, the half-length is always $c_{\alpha/2} \sigma_0$, while for OLS, the half-length is $c_{\alpha/2} \sigma_1$ when $b=0$ and $\infty$ when $b=\infty$.

In both the full sample (black and white men) and the subsample (black men), we observe that the standard error of the OLS estimator is much smaller than that of the IV estimator. Moreover, although we compute the OLS estimator and the IV estimator using the same sample, their correlation is very low. Therefore, the new biased estimator constructed following Section~\ref{sec:dependent} is nearly identical to the original biased estimator (OLS). The combined estimators behave as expected. When the bias bound is small, $\htau_{\rm PW}$, $\htau_{\rm PT}$, and $\htau_{\rm ST}$ all yield confidence intervals substantially shorter than that of the unbiased estimator (IV), reflecting efficiency gains from incorporating the more precise OLS estimator. Notably, the confidence interval of $\htau_{\rm ST}$ is nearly identical to that of $\htau_{\rm PW}$ and much shorter than that of $\htau_{\rm PT}$ when the bias bound is small. Even when the bias bound becomes large, the confidence intervals of $\htau_{\rm ST}$ and $\htau_{\rm PT}$ remain nearly the same in length. These behaviors arise because the OLS estimator is far more precise than the IV estimator.

However, there are differences between the full sample (black men and white men) and the subsample (black men). In the full sample, the sample size is very large, and the IV estimator is sufficiently precise that the null hypothesis of zero returns to schooling can be rejected using IV alone. In contrast, for the subsample of black men, the sample size is much smaller, leading to a much larger standard error for the IV estimator. As a result, the IV-based confidence interval is wide enough that the null hypothesis cannot be rejected at the 5\% significance level. This contrast illustrates a common empirical challenge: when the unbiased estimator is noisy, inference based solely on it can be inconclusive, even when the overall dataset is large. In such settings, the combined estimators, particularly the precision-weighted estimator $\hat\tau_{\rm PW}$ and the soft-thresholding estimator $\hat\tau_{\rm ST}$, yield much tighter confidence intervals under small biases. Under small biases, these estimators provide sufficient evidence to reject the null hypothesis of zero returns to schooling, while still allowing researchers to explicitly assess sensitivity to potential bias.

% \begin{table}[htbp]
%   \centering
%   \begin{threeparttable}
%   \caption{Estimates of the return to schooling (men born between 1930--39)}
%   \label{tab:iv_ols}
%   \begin{tabular}{llcccc}
%   \toprule
%   \textbf{Sample} & \textbf{Estimator} & \textbf{Estimate} & \textbf{SE} & \textbf{CI Half-Length ($b=0$)} & \textbf{CI Half-Length ($b=\infty$)} \\
%   \midrule
%   \multirow{5}{*}{All men} 
%     & IV ($\htau_0$)  & 0.1019 & 0.0239 & 0.0469 & 0.0469 \\
%     & OLS ($\htau_1$) & 0.0709 & 0.0003 & 0.0007 & $\infty$ \\
%     & PW ($\htau_{\rm PW}$) & 0.0709 & -- & 0.0007 & $\infty$ \\
%     & PT ($\htau_{\rm PT}$) & 0.0709 & -- & 0.0053 & 0.0855 \\
%     & ST ($\htau_{\rm ST}$) & 0.0709 & -- & 0.0010 & 0.0863 \\
%   \midrule
%   \multirow{5}{*}{Black men} 
%     & IV ($\htau_0$)  & 0.1066 & 0.0787 & 0.1542 & 0.1542 \\
%     & OLS ($\htau_1$) & 0.0694 & 0.0013 & 0.0025 & $\infty$ \\
%     & PW ($\htau_{\rm PW}$) & 0.0694 & -- & 0.0025 & $\infty$ \\
%     & PT ($\htau_{\rm PT}$) & 0.0694 & -- & 0.0204 & 0.2804 \\
%     & ST ($\htau_{\rm ST}$) & 0.0694 & -- & 0.0036 & 0.2836 \\
%   \bottomrule
%   \end{tabular}
%   \footnotesize
%   \begin{tablenotes}
%     \item Notes: The full sample (black and white men) size is $n=329{,}509$ and the subsample (black men) size is $n=26{,}913$. 
%     The correlation between OLS and IV is $\rho=0.0141$ for the full sample and $\rho=0.0165$ for the subsample.
%   \end{tablenotes}
% \end{threeparttable}
% \end{table}

\section{Discussion}\label{sec:discussion}

This paper develops a strategy to combine unbiased and biased estimators from
a sensitivity analysis perspective. In particular, we construct a sequence of confidence intervals indexed by the
magnitude of bias and propose the notion of the b-value to quantify the maximum relative bias so that combining estimators yields an insignificant result. 

\section*{Acknowledgment}

We thank Avi Feller and Liyang Sun for helpful comments. Lin was partially supported by the Two Sigma PhD Fellowship. Ding was supported by the U.S. National Science Foundation (1945136, 2514234).

{%\small
\bibliographystyle{apalike}
\bibliography{AMS}
}

\newpage

\appendix

\clearpage
\setcounter{page}{1}
\pagenumbering{arabic}
\renewcommand{\thepage}{S\arabic{page}}

\section*{Supplementary materials for “Introducing the b-value: combining unbiased and biased estimators from a sensitivity analysis perspective”}

Appendix~\ref{sec:discussions} contains several additional discussions that complement the main text. Section~\ref{sec:bvalue} discusses how to compute the b-value efficiently. Section~\ref{sec:one_sided} extends our framework to one-sided confidence bounds for $\tau$. Section~\ref{sec:bias_dependent} discusses the advantages of using prespecified point estimators instead of bias-dependent point estimators. Section~\ref{sec:dependence} provides further details on the generalization to dependent case, complementing the discussion in Section~\ref{sec:dependent}. Section~\ref{sec:efron--morris} discusses further details on the relationship between the generic soft-thresholding estimator and the estimator in in \cite{j1981estimation, bickel1984parametric}, complementing the discussion in Section~\ref{sec:point_multi}.

% Appendix~\ref{sec:discussions} contains several additional discussions that complement the main text. Section~\ref{sec:bvalue} discusses how to compute the b-value efficiently. Section~\ref{sec:pvalue} discusses the definition and computation of the p-value for testing $\tau = 0$ versus $\tau \neq 0$ using the generic combined estimator $\hat\tau$ and its corresponding interval $\hat\tau - \cI(b, \zeta)$, and discusses its relationship with the b-value. Section~\ref{sec:one_sided} extends our framework to one-sided confidence bounds for $\tau$. Section~\ref{sec:bias_dependent} discusses the advantages of using prespecified point estimators instead of bias-dependent point estimators. Section~\ref{sec:any_bias_level} discusses how to construct confidence intervals for $\tau$ which hold for any bias level. Section~\ref{sec:dependence} provides further details on the generalization to dependent case, complementing the discussion in Section~\ref{sec:dependent}. Section~\ref{sec:efron--morris} discusses further details on the relationship between the generic soft-thresholding estimator and the Efron--Morris estimator, complementing the discussion in Section~\ref{sec:point_multi}. Section~\ref{appendix:empirical} provides further details on the empirical studies, complementing the discussion in Section~\ref{sec:empirical}.

Appendix~\ref{sec:proof_main} contains the proofs of the results in the main paper, and Appendix~\ref{sec:proof_additional} contains the proofs of the results in the appendix.

\section{More discussions}\label{sec:discussions}

\subsection{Computing the b-value efficiently}\label{sec:bvalue}

In this section, we discuss how to compute the b-value efficiently.

Consider a general combined estimator $\hat\tau$ with symmetric fixed-length centered confidence interval $[\hat\tau - c(b,\zeta, \sigma_0^2, \sigma_1^2), \hat\tau + c(b,\zeta, \sigma_0^2, \sigma_1^2)]$. By Definition~\ref{def:b}, the b-value $b^*(\zeta)  = b^*(\zeta, \hat\tau, \sigma_0^2, \sigma_1^2) \ge 0$ is the smallest bias level at which the null value $0$ is contained in the confidence interval. Equivalently, it is the solution to the equation of $b$:
\begin{align*}
  b^*(\zeta) =& \inf\left\{b \ge 0: 0 \in [\hat\tau - c(b,\zeta, \sigma_0^2, \sigma_1^2), \hat\tau + c(b,\zeta, \sigma_0^2, \sigma_1^2)]\right\} \\
  =& \inf\left\{b \ge 0: c(b,\zeta, \sigma_0^2, \sigma_1^2) \ge \lvert \hat\tau \rvert\right\}.
\end{align*}
Therefore, the b-value $b^*(\zeta)$ is $0$ if $c(0,\zeta, \sigma_0^2, \sigma_1^2) > \lvert \hat\tau \rvert$, or is $\infty$ if $c(\infty,\zeta, \sigma_0^2, \sigma_1^2) < \lvert \hat\tau \rvert$. Otherwise, if we further assume $c(b,\zeta, \sigma_0^2, \sigma_1^2)$ is strictly increasing in $b$ for given $\zeta, \sigma_0^2, \sigma_1^2$, then the b-value $b^*(\zeta)$ is the unique solution to the equation of $b$: $c(b, \zeta, \sigma_0^2, \sigma_1^2) = \lvert \hat\tau \rvert$. In this case, we can compute $b^*(\zeta)$ efficiently using standard one-dimensional root-finding methods such as the bisection algorithm. In many settings, the function $c(b,\zeta,\sigma_0^2,\sigma_1^2)$ does not admit a closed-form expression. Instead, we solve an equation of $L$: $g(L;b,\zeta, \sigma_0^2, \sigma_1^2) = 0$ to obtain $c(b,\zeta,\sigma_0^2,\sigma_1^2)$ for some function $g$ depending on $b, \zeta, \sigma_0^2, \sigma_1^2$. In such cases, we can compute the b-value by solving $g(\lvert \hat\tau \rvert;b,\zeta, \sigma_0^2, \sigma_1^2) = 0$ as an equation of $b$. This reduces the computation of the b-value to a single one-dimensional root-finding problem.

The above discussion provides a general method for computing the b-value for any combined estimator with a symmetric fixed-length centered confidence interval. Now we apply this method to compute the b-value for the precision-weighted estimator $\htau_{\rm PW}$, the pretest estimator $\htau_{\rm PT}$, and the soft-thresholding estimator $\htau_{\rm ST}$.

First, the following theorem provides the b-value $b^*_{\rm PW}(\zeta)$ for the precision-weighted estimator $\htau_{\rm PW}$ with confidence interval $[\htau_{\rm PW} - \hat{L}_{\rm PW} (1+\gamma)^{-1/2} \sigma_0, \htau_{\rm PW} + \hat{L}_{\rm PW} (1+\gamma)^{-1/2} \sigma_0]$.

\begin{theorem}\label{thm:b_PW}
  The b-value $b^*_{\rm PW}(\zeta)$ is $0$ if
  \[
    \Phi\left(\frac{\lvert \htau_{\rm PW} \rvert}{(1+\gamma)^{-1/2} \sigma_0}\right) - \Phi\left(-\frac{\lvert \htau_{\rm PW} \rvert}{(1+\gamma)^{-1/2} \sigma_0}\right) < 1-\zeta.
  \]
  Otherwise, the b-value $b^*_{\rm PW}(\zeta)$ can be equivalently written as the solution to the equation of $b$:
  \[
  \Phi\left(\frac{\lvert \htau_{\rm PW} \rvert}{(1+\gamma)^{-1/2} \sigma_0} - \frac{\gamma}{\sqrt{1+\gamma}} b \right) - \Phi\left(-\frac{\lvert \htau_{\rm PW} \rvert}{(1+\gamma)^{-1/2} \sigma_0}  - \frac{\gamma}{\sqrt{1+\gamma}} b \right) = 1-\zeta.
\]
\end{theorem}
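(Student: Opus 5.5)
We combine the general reduction described just above with the explicit form of $\hat{L}_{\rm PW}$ from Theorem~\ref{thm:L_PW}. The confidence interval is $[\htau_{\rm PW} \pm c(b)]$ with $c(b) = \hat{L}_{\rm PW}(b,\zeta,\gamma)(1+\gamma)^{-1/2}\sigma_0$. Hence
\[
b^*_{\rm PW}(\zeta) = \inf\{b\ge 0: c(b)\ge \lvert\htau_{\rm PW}\rvert\} = \inf\{b \ge 0 : \hat{L}_{\rm PW}(b,\zeta,\gamma) \ge x\},
\qquad x = \frac{\lvert \htau_{\rm PW}\rvert}{(1+\gamma)^{-1/2}\sigma_0}.
\]
The proof then has two parts: first, monotonicity of $\hat L_{\rm PW}$ in $b$; second, translating the condition $\hat L_{\rm PW}\ge x$ into a condition on the coverage function $G(L,\mu)=\Phi(L-\mu)-\Phi(-L-\mu)$ with $\mu = \gamma b/\sqrt{1+\gamma}$.

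\textbf{Step 1: properties of $G$.} Write $G(L,\mu)=\P(\lvert N(\mu,1)\rvert\le L)$. It is strictly increasing in $L\ge 0$, since $\partial_L G = \phi(L-\mu)+\phi(L+\mu) > 0$. It is strictly decreasing in $\mu>0$, since
\[
\partial_\mu G = -\phi(L-\mu)+\phi(L+\mu) < 0 \quad \text{for } L,\mu>0,
\]
because $\lvert L+\mu\rvert > \lvert L-\mu\rvert$. Moreover, $G(L,\mu) \to 0$ as $\mu\to\infty$ for fixed $L$.

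\textbf{Step 2: monotonicity of $\hat L_{\rm PW}$.} The defining equation $G(\hat L,\mu)=1-\zeta$ from Theorem~\ref{thm:L_PW}, together with Step~1, shows that $\hat L_{\rm PW}$ is continuous and strictly increasing in $b$. It is also unbounded as $b\to\infty$. Finally, since $G$ is increasing in $L$,
\[
\hat L_{\rm PW}(b) \ge x \iff G(x,\mu_b) \le 1-\zeta .
\]

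\textbf{Step 3: the case split.}
\begin{itemize}
\item If $G(x,0) < 1-\zeta$, which is exactly the displayed condition of the theorem, then $\hat L_{\rm PW}(0) > x$. Thus $0$ already lies in the interval at $b=0$, and $b^*_{\rm PW}=0$.
\item Otherwise $G(x,0)\ge 1-\zeta$. Since $b\mapsto G(x,\gamma b/\sqrt{1+\gamma})$ is continuous, strictly decreasing (when $x>0$) and tends to $0$, there is a unique $b$ with $G(x,\mu_b)=1-\zeta$. By the equivalence in Step~2, the set $\{b:\hat L_{\rm PW}(b)\ge x\}$ is exactly $[b_0,\infty)$, where $b_0$ is that root. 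So $b^*_{\rm PW}$ equals this root, which is the stated equation.
\end{itemize}

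The only delicate point is the strict monotonicity of $G$ in $\mu$, which is what gives uniqueness of the root. Two edge cases need care. If $x=0$, then $G(0,\mu)=0<1-\zeta$ and we are in the first case. Equality $G(x,0)=1-\zeta$ gives the root $b=0$, so the two branches are consistent there.
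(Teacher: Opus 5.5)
Your proof is correct and follows the same route as the paper: reduce $b^*_{\rm PW}$ to $\inf\{b\ge 0:\hat L_{\rm PW}(b)\ge x\}$, split on whether the interval at $b=0$ already contains $0$, and otherwise substitute $L=x$ into the defining equation of $\hat L_{\rm PW}$. Your version is tighter than the paper's in three places: the equivalence $\hat L_{\rm PW}(b)\ge x \iff G(x,\mu_b)\le 1-\zeta$ replaces the paper's unjustified boundary-condition step at the infimum, strict monotonicity of $G$ in $\mu$ gives uniqueness of the root, and you correctly note that the equality case $G(x,0)=1-\zeta$ gives $b^*_{\rm PW}=0$, whereas the paper claims $b^*_{\rm PW}\in(0,\infty)$ there.
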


Second, the following theorem provides the b-value $b^*_{\rm PT}(\zeta)$ for the pretest estimator $\htau_{\rm PT}$ with confidence interval $[\htau_{\rm PT} - \hat{L}_{\rm PT} (1+\gamma)^{-1/2} \sigma_0, \htau_{\rm PT} + \hat{L}_{\rm PT} (1+\gamma)^{-1/2} \sigma_0]$.

\begin{theorem}\label{thm:b_PT}
  The b-value $b^*_{\rm PT}(\zeta)$ is $0$ if
  \[
    \P_{\Delta/\sigma_0 = 0}(\lvert \tilde\tau_{\rm PT} - \tau \rvert \le \lvert \htau_{\rm PT} \rvert \given \htau_{\rm PT}) < 1-\zeta,
  \]
  and is $\infty$ if
  \[
    \min_{t \ge 0}\P_{\Delta/\sigma_0 = t}(\lvert \tilde\tau_{\rm PT} - \tau \rvert \le \lvert \htau_{\rm PT} \rvert \given \htau_{\rm PT}) > 1-\zeta.
  \]
  Otherwise, the b-value $b^*_{\rm PT}(\zeta)$ can be equivalently written as the solution to the equation of $b$:
  \[
    \min_{0 \le t \le b} \P_{\Delta/\sigma_0 = t}(\lvert \tilde\tau_{\rm PT} - \tau \rvert \le \lvert \htau_{\rm PT} \rvert \given \htau_{\rm PT}) = 1-\zeta,
  \]
  where $\tilde\tau_{\rm PT}$ is independent and identically distributed as $\htau_{\rm PT}$, with
  \begin{align*}
    &\P_{\Delta/\sigma_0 = t}(\lvert \tilde\tau_{\rm PT} - \tau \rvert \le \lvert \htau_{\rm PT} \rvert \given \htau_{\rm PT}) \\
    =& \Big[\Phi\Big(c_{\alpha/2} - \sqrt{\frac{\gamma}{1+\gamma}}t \Big) - \Phi\Big(-c_{\alpha/2} - \sqrt{\frac{\gamma}{1+\gamma}}t \Big)\Big]\\
    &\Big[\Phi\Big(\frac{\lvert  \htau_{\rm PT}\rvert}{(1+\gamma)^{-1/2} \sigma_0} - \frac{\gamma}{\sqrt{1+\gamma}} t \Big) - \Phi\Big(-\frac{\lvert  \htau_{\rm PT}\rvert}{(1+\gamma)^{-1/2} \sigma_0}- \frac{\gamma}{\sqrt{1+\gamma}} t\Big)\Big]\\
    &+ \int_{-\infty}^{-c_{\alpha/2} - \sqrt{\frac{\gamma}{1+\gamma}}t} \Big[\Phi\Big(\frac{\lvert  \htau_{\rm PT}\rvert}{(1+\gamma)^{-1/2} \sigma_0} + \sqrt{\gamma} u\Big) - \Phi\Big(-\frac{\lvert  \htau_{\rm PT}\rvert}{(1+\gamma)^{-1/2} \sigma_0} +  \sqrt{\gamma} u\Big)\Big] \phi(u) \d u\\
    &+ \int_{c_{\alpha/2} - \sqrt{\frac{\gamma}{1+\gamma}}t}^\infty \Big[\Phi\Big(\frac{\lvert  \htau_{\rm PT}\rvert}{(1+\gamma)^{-1/2} \sigma_0} + \sqrt{\gamma} u\Big) - \Phi\Big(-\frac{\lvert  \htau_{\rm PT}\rvert}{(1+\gamma)^{-1/2} \sigma_0} + \sqrt{\gamma} u\Big)\Big] \phi(u) \d u.
  \end{align*}
\end{theorem}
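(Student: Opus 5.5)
The plan is to apply the general reduction from the start of this section. The confidence interval for $\htau_{\rm PT}$ has half-width $c(b,\zeta) = \hat L_{\rm PT}(b,\zeta,\gamma,\alpha)(1+\gamma)^{-1/2}\sigma_0$. The b-value is therefore
\[
b^*_{\rm PT}(\zeta) = \inf\{b \ge 0 : \hat L_{\rm PT}(b)\,(1+\gamma)^{-1/2}\sigma_0 \ge \lvert\htau_{\rm PT}\rvert\}.
\]
Write $\ell = \lvert\htau_{\rm PT}\rvert/((1+\gamma)^{-1/2}\sigma_0)$ and
\[
F(L,b) = \min_{0\le t\le b} \P_{\Delta/\sigma_0=t}\bigl(\lvert\htau_{\rm PT}-\tau\rvert \le L(1+\gamma)^{-1/2}\sigma_0\bigr).
\]
By Theorem~\ref{thm:L_PT}, $\hat L_{\rm PT}(b)$ is the solution of $F(L,b)=1-\zeta$. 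We will use two monotonicity facts.

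First, $F$ is nondecreasing and continuous in $L$, since each term of \eqref{eq:P_PT} is. This gives
\[
\hat L_{\rm PT}(b) \ge \ell \iff F(\ell,b) \le 1-\zeta,
\]
with the boundary case handled by continuity and the infimum definition of $\hat L_{\rm PT}$ in \eqref{eq:L_PT}. Second, $F(\ell,b)$ is nonincreasing in $b$, because the minimum is taken over the growing set $[0,b]$. Since $t\mapsto\P_t$ is continuous, $F(\ell,\cdot)$ is also continuous.

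Combining the two facts, $b^*_{\rm PT} = \inf\{b\ge0 : F(\ell,b) \le 1-\zeta\}$. This splits into three cases.
\begin{itemize}
\item If $F(\ell,0) < 1-\zeta$, the condition already holds at $b=0$, so $b^*=0$.
\item If $\lim_{b\to\infty}F(\ell,b) = \inf_{t\ge0}\P_t > 1-\zeta$, the condition never holds, so $b^*=\infty$. The statement writes this as a min over $t\ge0$; for a strict inequality the inf is what matters, and I would note that the limit as $t\to\infty$ exists.
\item Otherwise, continuity and monotonicity give that $b^*$ solves $F(\ell,b)=1-\zeta$, namely the smallest such root.
\end{itemize}

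To match the notation of the statement, I will introduce an independent copy $\tilde\tau_{\rm PT}$. Then $\P_t(\lvert\tilde\tau_{\rm PT}-\tau\rvert\le\lvert\htau_{\rm PT}\rvert \given \htau_{\rm PT})$ is exactly $\P_t(\lvert\htau_{\rm PT}-\tau\rvert\le L(1+\gamma)^{-1/2}\sigma_0)$ evaluated at $L=\ell$. Substituting $L=\ell$ into \eqref{eq:P_PT} gives the displayed formula directly.

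The main obstacle is the edge behaviour. The first issue is equality versus strict inequality at the threshold: $F(\ell,0)=1-\zeta$ exactly falls into the "otherwise" case with root $b=0$, which is consistent. The second issue is ensuring the root characterization gives the infimum even when $F(\ell,\cdot)$ is flat on an interval. Because the pretest coverage is not monotone in $t$, $F$ may plateau. I would resolve this by defining the solution as the smallest root, in the same way as for $\hat L_{\rm PT}$ in Theorem~\ref{thm:L_PT}.
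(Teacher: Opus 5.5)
Your proposal is correct and follows essentially the same route as the paper. Both reduce $b^*_{\rm PT}$ to $\inf\{b\ge 0:\hat L_{\rm PT}(b)\ge \ell\}$, dispose of the $b=0$ and $b=\infty$ cases, and substitute $L=\ell$ into \eqref{eq:P_PT}; your handling of the edge cases (equality at $b=0$, taking the smallest root when $F(\ell,\cdot)$ plateaus, $\inf$ versus $\min$ over $t\ge 0$) is more careful than the paper's. One small fix is needed: the direction $F(\ell,b)\le 1-\zeta \Rightarrow \hat L_{\rm PT}(b)\ge \ell$ requires $F(\cdot,b)$ to be strictly increasing, not merely nondecreasing. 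This does hold, because each $\P_t$ is strictly increasing in $L$ and the minimum over the compact set $[0,b]$ is attained.
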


Third, the following theorem provides the b-value $b^*_{\rm ST}(\zeta)$ for the soft-thresholding estimator $\htau_{\rm ST}$ with confidence interval $[\htau_{\rm ST}- \hat{L}_{\rm ST} (1+\gamma)^{-1/2} \sigma_0, \htau_{\rm ST}+ \hat{L}_{\rm ST} (1+\gamma)^{-1/2} \sigma_0]$.

\begin{theorem}\label{thm:b_ST}
  The b-value $b^*_{\rm ST}(\zeta)$ is $0$ if
  \[
    \P_{\Delta/\sigma_0 = 0}(\lvert \tilde\tau_{\rm ST} - \tau \rvert \le \lvert \htau_{\rm ST} \rvert \given \htau_{\rm ST}) < 1-\zeta,
  \]
  and is $\infty$ if
  \[
    \P_{\Delta/\sigma_0 = \infty}(\lvert \tilde\tau_{\rm ST} - \tau \rvert \le \lvert \htau_{\rm ST} \rvert \given \htau_{\rm ST}) > 1-\zeta.
  \]
  Otherwise, the b-value $b^*_{\rm ST}(\zeta)$ can be equivalently written as the solution to the equation of $b$:
  \[
    \P_{\Delta/\sigma_0 = b}(\lvert \tilde\tau_{\rm ST} - \tau \rvert \le \lvert \hat\tau_{\rm ST}\rvert \given \hat\tau_{\rm ST}) = 1-\zeta,
  \]
  where $\tilde\tau_{\rm ST}$ is independent and identically distributed as $\hat\tau_{\rm ST}$, with
  \begin{align*}
    &\P_{\Delta/\sigma_0 = t}(\lvert \tilde\tau_{\rm ST} - \tau \rvert \le \lvert \htau_{\rm ST} \rvert \given \htau_{\rm ST}) \\
    =& \Big[\Phi\Big(c_{\alpha/2} - \sqrt{\frac{\gamma}{1+\gamma}}t \Big) - \Phi\Big(-c_{\alpha/2} - \sqrt{\frac{\gamma}{1+\gamma}}t \Big)\Big]\\
    &\Big[\Phi\Big(\frac{\lvert  \hat\tau_{\rm ST}\rvert}{(1+\gamma)^{-1/2} \sigma_0} - \frac{\gamma}{\sqrt{1+\gamma}} t \Big) - \Phi\Big(-\frac{\lvert  \hat\tau_{\rm ST}\rvert}{(1+\gamma)^{-1/2} \sigma_0}- \frac{\gamma}{\sqrt{1+\gamma}} t\Big)\Big]\\
    &+ \int_{-\infty}^{-c_{\alpha/2} - \sqrt{\frac{\gamma}{1+\gamma}}t} \Big[\Phi\Big(\frac{\lvert  \hat\tau_{\rm ST}\rvert}{(1+\gamma)^{-1/2} \sigma_0} + \sqrt{\gamma} (u+c_{\alpha/2})\Big) - \Phi\Big(-\frac{\lvert  \hat\tau_{\rm ST}\rvert}{(1+\gamma)^{-1/2} \sigma_0} +  \sqrt{\gamma} (u+c_{\alpha/2})\Big)\Big] \phi(u) \d u\\
    &+ \int_{c_{\alpha/2} - \sqrt{\frac{\gamma}{1+\gamma}}t}^\infty \Big[\Phi\Big(\frac{\lvert  \hat\tau_{\rm ST}\rvert}{(1+\gamma)^{-1/2} \sigma_0} + \sqrt{\gamma} (u-c_{\alpha/2})\Big) - \Phi\Big(-\frac{\lvert  \hat\tau_{\rm ST}\rvert}{(1+\gamma)^{-1/2} \sigma_0} + \sqrt{\gamma} (u-c_{\alpha/2})\Big)\Big] \phi(u) \d u.
  \end{align*}
\end{theorem}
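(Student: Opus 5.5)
We reduce the b-value for $\htau_{\rm ST}$ to a single root-finding problem in $b$, following the general recipe at the start of this appendix. The interval is $[\htau_{\rm ST} \pm \hat{L}_{\rm ST}(1+\gamma)^{-1/2}\sigma_0]$, so $c(b,\zeta) = \hat{L}_{\rm ST}(b)(1+\gamma)^{-1/2}\sigma_0$. Hence
\[
b^*_{\rm ST}(\zeta)=\inf\{b\ge 0: \hat{L}_{\rm ST}(b) \ge \ell\}, \qquad \ell := \lvert \htau_{\rm ST}\rvert/((1+\gamma)^{-1/2}\sigma_0).
\]
Let $F(L,t)$ denote the coverage probability \eqref{eq:P_ST} evaluated at $\Delta/\sigma_0=t$. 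By Theorem~\ref{thm:L_ST}, $\hat{L}_{\rm ST}(b)$ solves $F(L,b)=1-\zeta$, where $F$ is nondecreasing in $L$ and nonincreasing in $t\ge 0$. We also use that $F(L,t)$ is continuous and strictly increasing in $L$. This follows because every bracket $\Phi(L+a)-\Phi(-L+a)$ in \eqref{eq:P_ST} has positive derivative in $L$.

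The key equivalence we will show is
\[
\hat{L}_{\rm ST}(b)\ge \ell \iff F(\ell,b)\le 1-\zeta.
\]
Indeed, $F(\hat{L}_{\rm ST}(b),b)=1-\zeta$, and $F(\cdot,b)$ is strictly increasing. Next, the conditional probability in the statement equals $F(\ell,t)$. This is because $\tilde\tau_{\rm ST}$ is an independent copy, so conditioning on $\htau_{\rm ST}$ simply plugs $L=\ell$ into \eqref{eq:P_ST}. Since $F(\ell,\cdot)$ is nonincreasing in $t$ by Theorem~\ref{thm:L_ST}, the set $\{b: F(\ell,b)\le 1-\zeta\}$ is an up-interval $[b_0,\infty)$, possibly empty or all of $[0,\infty)$. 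Therefore $b^*_{\rm ST}$ is its left endpoint.

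We then treat three cases. If $F(\ell,0)<1-\zeta$, the up-interval is all of $[0,\infty)$, so $b^*=0$. Here we must be careful about the boundary case with equality, which is a measure-zero event and can be absorbed into the "otherwise" case. Next let $F(\ell,\infty):=\lim_{t\to\infty}F(\ell,t)$, which exists by monotonicity and boundedness. If $F(\ell,\infty)>1-\zeta$, then $F(\ell,b)>1-\zeta$ for every finite $b$, so the set is empty and $b^*=\infty$. In the remaining case, $F(\ell,\cdot)$ is continuous in $t$ by dominated convergence, since the integrands are bounded and the limits of integration move continuously. By the intermediate value theorem, it crosses $1-\zeta$, and the infimum $b^*$ satisfies $F(\ell,b^*)=1-\zeta$ by continuity. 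This is exactly the displayed equation.

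The main obstacle is that $F(\ell,\cdot)$ is only weakly monotone, so the root may not be unique. We resolve this by interpreting "the solution" as the smallest root. This interpretation is consistent with the definition of $b^*$ as an infimum, and it is well defined by continuity. We would also verify that $F(\ell,\infty)$ is the limit of \eqref{eq:P_ST}. As $t\to\infty$, the first term vanishes and the second integral tends to zero. The third integral tends to $\int_{\bR}[\Phi(\ell+\sqrt\gamma(u-c_{\alpha/2}))-\Phi(-\ell+\sqrt\gamma(u-c_{\alpha/2}))]\phi(u)\,\d u$, which makes the $\infty$ condition explicit.
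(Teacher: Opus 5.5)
Your proposal is correct and follows essentially the same route as the paper. Both rewrite $b^*_{\rm ST}(\zeta)$ as $\inf\{b\ge 0:\hat L_{\rm ST}(b)\ge \ell\}$, identify the conditional probability with \eqref{eq:P_ST} at $L=\ell$, and split into the $b^*=0$, $b^*=\infty$, and interior cases, with your strict monotonicity in $L$, continuity in $t$, and smallest-root convention making explicit what the paper leaves implicit. One small caveat: in the null boundary case $F(\ell,\infty)=1-\zeta$ the limit need not be attained, so the intermediate value step need not produce a finite root ($b^*_{\rm ST}$ can be $\infty$ there), and this equality should be set aside rather than absorbed into the ``otherwise'' case --- an imprecision the paper's statement and proof share.
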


\subsection{One-sided confidence bounds}\label{sec:one_sided}

In this section, we discuss how to construct one-sided confidence bounds within our framework. We only focus on the lower confidence bound since the upper confidence bound can be constructed in an analogous manner. By using the lower confidence bound, we can conduct one-sided hypothesis testing, such as testing $\tau = 0$ versus $\tau > 0$ or testing $\tau \le 0$ versus $\tau > 0$. The computation of the b-value in the one-sided setting follows the same logic as in the two-sided case, so here we focus on the construction of the one-sided confidence bounds.

First, the following theorem provides the sequence of lower confidence bounds based on $\htau_{\rm PW}$, analogous to Theorem~\ref{thm:L_PW}.

\begin{theorem}\label{thm:oneside_PW}
  Let $\hat{L}_{\rm PW}' = \hat{L}_{\rm PW}'(b, \zeta, \gamma) = c_\zeta + \frac{\gamma}{\sqrt{1+\gamma}} b$. The shortest length lower confidence bound based on $\htau_{\rm PW}$ for $\tau$ satisfying \eqref{eq:coverage} is given by $[\htau_{\rm PW} - \hat{L}_{\rm PW}' (1+\gamma)^{-1/2} \sigma_0, \infty)$.
\end{theorem}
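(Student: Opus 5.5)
The plan is to reduce the problem to a one-dimensional Gaussian location shift and then take the worst case over the bias. Under Assumption~\ref{ass:normal}, $\htau_{\rm PW} \sim N(\tau + \frac{\gamma}{1+\gamma}\Delta, (1+\gamma)^{-1}\sigma_0^2)$, so we standardize:
\[
Z = \frac{\htau_{\rm PW}-\tau}{(1+\gamma)^{-1/2}\sigma_0} \sim N\Big(\frac{\gamma}{\sqrt{1+\gamma}}\frac{\Delta}{\sigma_0},\,1\Big).
\]
A lower bound $[\htau_{\rm PW}-\ell(1+\gamma)^{-1/2}\sigma_0,\infty)$ covers $\tau$ exactly when $Z \le \ell$. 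Hence for a given $\Delta$ the coverage probability is
\[
\Phi\Big(\ell - \frac{\gamma}{\sqrt{1+\gamma}}\frac{\Delta}{\sigma_0}\Big).
\]

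This expression is decreasing in $\Delta/\sigma_0$. Over the constraint set $\lvert\Delta/\sigma_0\rvert\le b$, the infimum is therefore attained at the endpoint $\Delta/\sigma_0 = b$, where it equals $\Phi(\ell - \frac{\gamma}{\sqrt{1+\gamma}}b)$. Only the positive-bias endpoint matters, which is the one-sided analogue of the symmetry argument behind Theorem~\ref{thm:L_PW}.

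The coverage requirement \eqref{eq:coverage}, adapted to the set $\cI(b,\zeta)=(-\infty,\ell(1+\gamma)^{-1/2}\sigma_0]$, then reads $\Phi(\ell - \frac{\gamma}{\sqrt{1+\gamma}}b)\ge 1-\zeta$. Equivalently, $\ell \ge c_\zeta + \frac{\gamma}{\sqrt{1+\gamma}}b$. By strict monotonicity and continuity of $\Phi$, the smallest admissible $\ell$ is exactly $\hat L'_{\rm PW}$. This gives both validity and minimality of the resulting bound, i.e. it is the shortest (least conservative) lower bound of this centered form.

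The argument is routine. The only point needing care is the interpretation of ``shortest length'' for a one-sided bound: it should be read as the smallest offset $\ell$ within the family of bounds of the form $\htau_{\rm PW} - \ell(1+\gamma)^{-1/2}\sigma_0$. Minimality within that family follows from the equality case at $\Delta/\sigma_0=b$. Any smaller $\ell$ gives coverage strictly below $1-\zeta$ at that bias value.
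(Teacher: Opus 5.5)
Your proposal is correct and follows essentially the same route as the paper's proof. Both standardize $\htau_{\rm PW}$, observe that the coverage $\Phi\bigl(L - \frac{\gamma}{\sqrt{1+\gamma}}\Delta/\sigma_0\bigr)$ is smallest at $\Delta/\sigma_0 = b$, and solve $\Phi\bigl(L - \frac{\gamma}{\sqrt{1+\gamma}}b\bigr) = 1-\zeta$; your explicit minimality argument within the family of bounds centered at $\htau_{\rm PW}$ just spells out what the paper leaves implicit.
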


Second, the following theorem provides the sequence of lower confidence bounds based on $\htau_{\rm PT}$, analogous to Theorem~\ref{thm:L_PT}.
\begin{theorem}\label{thm:oneside_PT}
    $\hat{L}_{\rm PT}' = \hat{L}_{\rm PT}'(b, \zeta, \gamma, \alpha)$ is the solution to the following equation of $L$:
    \begin{align*}
      \min_{0 \le t \le b}\P_{\Delta/\sigma_0 = t}(\htau_{\rm PT} - \tau \le L (1+\gamma)^{-1/2} \sigma_0) = 1-\zeta,
    \end{align*}
    where
    \begin{align*}
      &\P_{\Delta/\sigma_0 = t}(\htau_{\rm PT} - \tau \le L (1+\gamma)^{-1/2} \sigma_0) \\
      =& \Big[\Phi\Big(c_{\alpha/2} - \sqrt{\frac{\gamma}{1+\gamma}}t \Big) - \Phi\Big(-c_{\alpha/2} - \sqrt{\frac{\gamma}{1+\gamma}}t \Big)\Big]\Phi\Big(L - \frac{\gamma}{\sqrt{1+\gamma}}t \Big)\\
      &+ \int_{-\infty}^{-c_{\alpha/2} - \sqrt{\frac{\gamma}{1+\gamma}}t} \Phi\Big(L + \sqrt{\gamma} u \Big)\phi(u) \d u + \int_{c_{\alpha/2} - \sqrt{\frac{\gamma}{1+\gamma}}t}^\infty \Phi\Big(L + \sqrt{\gamma} u\Big)\phi(u) \d u.
    \end{align*}
  The shortest length lower confidence bound based on $\htau_{\rm PT}$ for $\tau$ satisfying \eqref{eq:coverage} is given by $[\htau_{\rm PT} - \hat{L}_{\rm PT}' (1+\gamma)^{-1/2} \sigma_0, \infty)$.
\end{theorem}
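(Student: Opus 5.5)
The plan is to reuse the distributional representation of Lemma~\ref{lemma:clt_PT} and repeat the conditioning argument behind Theorem~\ref{thm:L_PT}, with the two-sided event replaced by the one-sided event $\{\htau_{\rm PT}-\tau \le L(1+\gamma)^{-1/2}\sigma_0\}$. Write $\Delta/\sigma_0=t$ and $\kappa=\sqrt{\gamma/(1+\gamma)}$. Set $W=\cZ_2+\kappa t$, which is $N(\kappa t,1)$ and independent of $\cZ_1$. Lemma~\ref{lemma:clt_PT} then gives
\[
\frac{\htau_{\rm PT}-\tau}{(1+\gamma)^{-1/2}\sigma_0}=\cZ_1+\frac{\gamma}{\sqrt{1+\gamma}}\,t\,\ind(\lvert W\rvert\le c_{\alpha/2})-\sqrt{\gamma}\,(W-\kappa t)\,\ind(\lvert W\rvert>c_{\alpha/2}).
\]
I will condition on $\cZ_2$ and split according to whether the pretest event $\mathcal{A}=\{\lvert W\rvert\le c_{\alpha/2}\}$ occurs.

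On $\mathcal{A}$, the standardized error equals $\cZ_1+\gamma t/\sqrt{1+\gamma}$ and does not depend on $\cZ_2$. Its conditional probability of being at most $L$ is $\Phi(L-\gamma t/\sqrt{1+\gamma})$. Multiplying by $\P(\mathcal{A})=\Phi(c_{\alpha/2}-\kappa t)-\Phi(-c_{\alpha/2}-\kappa t)$ gives the first term of the display.

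On $\mathcal{A}^{\rm c}$, I write $u=\cZ_2$. The event $\lvert u+\kappa t\rvert>c_{\alpha/2}$ is exactly $u<-c_{\alpha/2}-\kappa t$ or $u>c_{\alpha/2}-\kappa t$. Given $u$, the error is $\cZ_1-\sqrt{\gamma}\,u$, so its conditional probability of being at most $L$ is $\Phi(L+\sqrt\gamma u)$. Integrating against $\phi(u)$ over the two tails gives the remaining two integrals. This step is routine, the same calculation as in Theorem~\ref{thm:L_PT} with $\Phi(L+\cdot)$ in place of the difference of two $\Phi$'s.

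Next I would reduce the validity requirement. Fix $b$ and $\zeta$. A lower bound $[\htau_{\rm PT}-L(1+\gamma)^{-1/2}\sigma_0,\infty)$ has uniform coverage at least $1-\zeta$ over $\lvert t\rvert\le b$ if and only if $F(L):=\inf_{\lvert t\rvert\le b}\P_t(\htau_{\rm PT}-\tau\le L(1+\gamma)^{-1/2}\sigma_0)\ge 1-\zeta$. The shortest bound therefore corresponds to $\hat L'_{\rm PT}=\inf\{L:F(L)\ge 1-\zeta\}$.

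The coverage function is continuous in $(t,L)$, by dominated convergence on the explicit formula. It is strictly increasing in $L$, because every integrand is strictly increasing. It tends to $1$ as $L\to\infty$ and to $0$ as $L\to-\infty$, uniformly over the compact set of $t$. Hence $F$ is continuous and strictly increasing with these limits, so $\hat L'_{\rm PT}$ is the unique root of $F(L)=1-\zeta$ and the infimum over $t$ is attained as a minimum.

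The main obstacle is that the statement minimizes only over $0\le t\le b$, whereas the one-sided coverage is not symmetric in $t$. Flipping the sign of $t$ maps the one-sided event to its mirror image, not to itself, so the symmetry argument of Theorem~\ref{thm:L_PT} does not carry over. I would first try to show that for $t\ge 0$ the coverage at $-t$ is at least the coverage at $t$. The $\mathcal{A}$-term favors $-t$, since $\P(\mathcal{A})$ is symmetric in $t$ and $\Phi(L+\gamma t/\sqrt{1+\gamma})\ge\Phi(L-\gamma t/\sqrt{1+\gamma})$. For the tail terms I would use $u\mapsto-u$, under which the pair of tail regions at $-t$ maps to the pair at $t$. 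I would then compare $\Phi(L-\sqrt\gamma u)$ with $\Phi(L+\sqrt\gamma u)$ region by region.

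If this comparison fails in general, I would handle it in one of two ways. One option is to read $t$ in the statement as ranging over $\lvert t\rvert\le b$, that is, to state the minimization over the full bias set. The other is to note that for the upper-tailed event the worst case is $t\ge0$, because a positive bias shifts $\htau_{\rm PT}$ upward. In either case the rest of the argument is unchanged.
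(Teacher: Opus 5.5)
Your derivation of the coverage formula is the paper's own argument. You split on $\mathcal{A}$, and the standardized error is $N(\gamma t/\sqrt{1+\gamma},1)$ on acceptance and $N(-\sqrt{\gamma}u,1)$ given $\mathcal{Z}_2=u$ on rejection. Your continuity and strict-monotonicity-in-$L$ argument for existence and uniqueness of the root is more careful than the paper's.

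The real problem is the step you flagged, and it cannot be closed. Write $C(t,L)=\P_{\Delta/\sigma_0=t}(\htau_{\rm PT}-\tau\le L(1+\gamma)^{-1/2}\sigma_0)$. The comparison $C(-t,L)\ge C(t,L)$ for $t\ge 0$ is false in general. So is the reduction from $\lvert t\rvert\le b$ to $0\le t\le b$, and the statement as written fails for some parameter values. The paper's proof makes the same unjustified move. It asserts that the worst case over $\lvert\Delta/\sigma_0\rvert\le b$ is attained in $[0,b]$. In Theorem~\ref{thm:L_PT} that followed from symmetry, but no such symmetry holds for the one-sided event.

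To see the failure, differentiate the displayed formula at $t=0$, where $\P(\mathcal{A})$ has zero derivative:
\[
\partial_t C(0,L)=\sqrt{\frac{\gamma}{1+\gamma}}\Big[\phi(c_{\alpha/2})\big\{\Phi(L+\sqrt{\gamma}c_{\alpha/2})-\Phi(L-\sqrt{\gamma}c_{\alpha/2})\big\}-(1-\alpha)\sqrt{\gamma}\,\phi(L)\Big].
\]

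The second term is your fallback (b) heuristic: a positive bias pushes the accept branch upward. The first term works the other way. A positive bias moves rejection mass from the lower tail, where reverting to $\htau_0$ gives standardized error about $\mathcal{Z}_1+\sqrt{\gamma}c_{\alpha/2}$, to the upper tail, where it is about $\mathcal{Z}_1-\sqrt{\gamma}c_{\alpha/2}$. Put differently, a negative bias makes the pretest reject more often exactly when $\htau_0$ overshoots $\tau$, which is what hurts a lower bound.

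The first term dominates for large $L$. The difference $\Phi(L+a)-\Phi(L-a)$ decays like $\phi(L-a)/(L-a)$, while $\phi(L)$ decays faster. A concrete case is $\alpha=0.05$ and $\gamma=2.25$:
\begin{enumerate}
\item One checks the bracket is positive for every $L\ge 2.3$.
\item For $\zeta=0.01$, the root of $C(0,L)=1-\zeta$ exceeds $2.3$, since already $(1-\alpha)\{1-\Phi(2.3)\}>0.01$.
\item By continuity, for small $b>0$ the function $C(\cdot,L)$ is increasing on $[-b,b]$ for $L$ near that root.
\item The stated equation therefore returns the $b=0$ root, and the resulting bound undercovers at $\Delta=-b\sigma_0$.
\end{enumerate}

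Structurally, the one-sided soft-thresholding result works and this one does not. For ST the standardized error is $\mathcal{Z}_1+\sqrt{\gamma}\{\mu-S(\mu+\mathcal{Z}_2)\}$ with $\mu=\sqrt{\gamma/(1+\gamma)}\,t$. The map $\mu\mapsto\mu-S(\mu+z)$ is nondecreasing, which is what gives Theorem~\ref{thm:oneside_ST}. For PT, the map $\mu\mapsto\mu-(\mu+z)\mathbf{1}\{\lvert\mu+z\rvert>c_{\alpha/2}\}$ drops by $c_{\alpha/2}$ at each crossing of $\pm c_{\alpha/2}$, so no monotonicity holds.

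So fallback (b) is not a valid argument. Fallback (a) is not a proof of the statement as written; it is a correction of it. Take the minimum over $-b\le t\le b$; your formula is valid for all real $t$. With that change, the rest of your argument proves the corrected theorem.
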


Third, the following theorem provides the sequence of lower confidence bounds based on $\htau_{\rm ST}$, analogous to Theorem~\ref{thm:L_ST}.

\begin{theorem}\label{thm:oneside_ST}
  For any $L > 0$, $\P_\Delta(\htau_{\rm ST}- \tau \le L (1+\gamma)^{-1/2} \sigma_0)$ is monotonically decreasing in $\Delta$. Then $\hat{L}_{\rm ST}' = \hat{L}_{\rm ST}'(b, \zeta, \gamma, \alpha)$ is the solution to the equation of $L$:
  \begin{align*}
    \P_{\Delta/\sigma_0 = b}(\htau_{\rm ST}- \tau \le L (1+\gamma)^{-1/2} \sigma_0) = 1-\zeta,
  \end{align*} 
  where
  \begin{align*}
    &\P_{\Delta/\sigma_0 = t}(\htau_{\rm ST}- \tau \le L (1+\gamma)^{-1/2} \sigma_0) \\
    =& \Big[\Phi\Big(c_{\alpha/2} - \sqrt{\frac{\gamma}{1+\gamma}}t \Big) - \Phi\Big(-c_{\alpha/2} - \sqrt{\frac{\gamma}{1+\gamma}}t \Big)\Big]\Phi\Big(L - \frac{\gamma}{\sqrt{1+\gamma}}t \Big)\\
    &+ \int_{-\infty}^{-c_{\alpha/2} - \sqrt{\frac{\gamma}{1+\gamma}}t} \Phi\Big(L + \sqrt{\gamma} (u+c_{\alpha/2})\Big)\phi(u) \d u + \int_{c_{\alpha/2} - \sqrt{\frac{\gamma}{1+\gamma}}t}^\infty \Phi\Big(L + \sqrt{\gamma} (u-c_{\alpha/2})\Big)\phi(u) \d u.
  \end{align*}
  The shortest length lower confidence bound based on $\htau_{\rm ST}$ for $\tau$ satisfying \eqref{eq:coverage} is given by $[\htau_{\rm ST} - \hat{L}_{\rm ST}' (1+\gamma)^{-1/2} \sigma_0, \infty)$.
\end{theorem}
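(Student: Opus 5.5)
The proof has four parts: write $\htau_{\rm ST}-\tau$ as a monotone function of the biased-estimator noise, differentiate the one-sided coverage in the relative bias, get the explicit formula from Lemma~\ref{lemma:clt_ST}, and then reduce the construction of the lower bound to the worst case $t=b$.

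\emph{Representation.} Write $\eta=\sqrt{\gamma/(1+\gamma)}\,t$ with $t=\Delta/\sigma_0$, and $W=\cZ_2+\eta\sim N(\eta,1)$. Define $s(w)=w\ind(\lvert w\rvert\le c_{\alpha/2})+c_{\alpha/2}\sign(w)\ind(\lvert w\rvert>c_{\alpha/2})$, the clipping map, which is nondecreasing in $w$. Lemma~\ref{lemma:clt_ST} can be rewritten as
\[
\frac{\htau_{\rm ST}-\tau}{(1+\gamma)^{-1/2}\sigma_0}\overset{d}{=}\cZ_1+\sqrt{\gamma}\,\bigl(\eta-(W-s(W))\bigr),
\]
where $\cZ_1$ is independent of $W$. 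The identity $\sqrt{\gamma}\,\eta=\gamma t/\sqrt{1+\gamma}$ is what makes the first term of the claimed formula come out right.

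\emph{Monotonicity.} I write $W=\eta+\cZ_2$ and regard everything as a function of $\eta$ for fixed $\cZ_2$. Then $\eta-(W-s(W))=s(\eta+\cZ_2)-\cZ_2$, which is nondecreasing in $\eta$ because $s$ is. Hence the centred estimator is stochastically nondecreasing in $t$, and $\P_\Delta(\htau_{\rm ST}-\tau\le L(1+\gamma)^{-1/2}\sigma_0)=\E\,\Phi\bigl(L-\sqrt{\gamma}\,(s(\eta+\cZ_2)-\cZ_2)\bigr)$ is nonincreasing in $t$. Since $\Phi$ is strictly increasing and $s$ has slope one on a set of positive probability, the decrease is in fact strict. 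This pathwise coupling is simpler than any derivative computation, and it works for every $\Delta\in\bR$, not only for $\lvert\Delta\rvert$.

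\emph{Explicit formula.} Condition on $W$ and split into three regions.
\begin{itemize}
\item On $\lvert W\rvert\le c_{\alpha/2}$, which has probability equal to the first bracket, the variable equals $\cZ_1+\gamma t/\sqrt{1+\gamma}$. This gives the factor $\Phi(L-\gamma t/\sqrt{1+\gamma})$.
\item On $W>c_{\alpha/2}$, put $u=W-\eta=\cZ_2$, which has density $\phi(u)$ on $u>c_{\alpha/2}-\eta$. The variable equals $\cZ_1-\sqrt{\gamma}(u-c_{\alpha/2})$, giving $\Phi(L+\sqrt{\gamma}(u-c_{\alpha/2}))$.
\item On $W<-c_{\alpha/2}$ the same substitution gives $\Phi(L+\sqrt{\gamma}(u+c_{\alpha/2}))$.
\end{itemize}
Summing the three pieces gives the displayed expression.

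\emph{Reduction to $t=b$.} The coverage requirement \eqref{eq:coverage} for $[\htau_{\rm ST}-L(1+\gamma)^{-1/2}\sigma_0,\infty)$ is $\inf_{\lvert t\rvert\le b}\P_t(\cdot)\ge 1-\zeta$. By monotonicity this infimum is attained at $t=b$, so $\hat L'_{\rm ST}$ is the smallest $L$ with $\P_{t=b}(\cdot)\ge1-\zeta$. The map $L\mapsto \P_{t=b}(\cdot)$ is continuous and strictly increasing, since every integrand is strictly increasing in $L$. It tends to $0$ as $L\to-\infty$ and to $1$ as $L\to\infty$. Hence a unique root exists and gives the shortest bound.

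The main obstacle is the monotonicity step. The subtle points are the dependence of the truncation region on $\eta$ and the sign conventions that relate $\Delta$ to $W$. The coupling argument above avoids differentiating integrals with moving limits. If that argument turns out to be mishandled, I would fall back on differentiating the closed form in $t$. There the boundary terms cancel by continuity of the soft threshold at $\pm c_{\alpha/2}$, leaving only $-\frac{\gamma}{\sqrt{1+\gamma}}\,\phi(L-\gamma t/\sqrt{1+\gamma})$ times the first bracket, which is nonpositive.
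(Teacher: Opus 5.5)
Your proposal is correct. The explicit-formula part matches the paper's: condition on the standardized pretest statistic, use the independence of the precision-weighted component and $\htau_1-\htau_0$, and split into the acceptance region and the two rejection tails.

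The difference is in the monotonicity step. The paper's proof of this theorem does not argue it; it only states that the coverage probability is decreasing in $\Delta$. The paper's only supporting argument appears in the proof of Theorem~\ref{thm:L_ST}. It is a stochastic-order remark about $\lvert\mu_t(U)\rvert$, aimed at $\lvert\Delta\rvert$ and the two-sided event, not at the signed monotonicity a one-sided bound needs.

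Your coupling writes the standardized error as $\mathcal{Z}_1+\sqrt{\gamma}\,\bigl(s(\eta+\mathcal{Z}_2)-\mathcal{Z}_2\bigr)$, with $s$ the nondecreasing clipping map. This proves the signed statement pathwise for every real $\Delta$, and strictly, which is exactly what is required.

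Your fallback is also right. Both integrands at the moving limits $u=\pm c_{\alpha/2}-\sqrt{\gamma/(1+\gamma)}\,t$ equal $\Phi(L-\gamma t/\sqrt{1+\gamma})$. Their contributions therefore cancel the derivative of the acceptance probability, leaving
\[
\frac{\d}{\d t}\,\P_{\Delta/\sigma_0=t}\bigl(\htau_{\rm ST}-\tau\le L(1+\gamma)^{-1/2}\sigma_0\bigr)
=-\frac{\gamma}{\sqrt{1+\gamma}}\,\phi\Big(L-\frac{\gamma}{\sqrt{1+\gamma}}t\Big)\Big[\Phi\Big(c_{\alpha/2}-\sqrt{\tfrac{\gamma}{1+\gamma}}\,t\Big)-\Phi\Big(-c_{\alpha/2}-\sqrt{\tfrac{\gamma}{1+\gamma}}\,t\Big)\Big].
\]
This is strictly negative and even gives the rate.

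You also show that the root in $L$ exists and is unique, via continuity, strict increase, and the limits $0$ and $1$; the paper leaves this implicit. So your route gives a complete proof of the step the paper takes for granted. The paper's version is shorter only because it omits that argument.
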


\subsection{Bias-dependent point estimator}\label{sec:bias_dependent}

A natural question arises: since the combined estimator $\hat\tau$ itself does not depend on the unknown true bias $\Delta$, could we instead construct confidence intervals using a bias-dependent point estimator, i.e., one that explicitly depends on $\Delta$? More generally, suppose we have an estimator whose expectation is $\tau + g(\Delta)$, where $g(\cdot)$ is a known function of the bias. There are two natural approaches to construct a confidence interval for $\tau$ in this setting. The first approach is to estimate the bias and subtract it from the point estimator. Specifically, we can construct an adaptive estimator of the form $\hat\tau - \widehat{g(\Delta)}$, where $\widehat{g(\Delta)}$ is an estimate of $g(\Delta)$. The second approach is to construct a point estimator $\hat\tau - g(\Delta)$ and its corresponding confidence interval for each possible value of $\Delta$ within the bias bound, and then take the union of all such intervals. However, the union of these confidence intervals is conservative since the biased estimator cannot simultaneously exhibit multiple bias values within the bound.
% Instead, we construct the confidence interval based on the bias-independent estimator $\hat\tau$, i.e., one that does not depend on $\Delta$, but make the confidence interval achieve correct coverage uniformly over all possible values of $\Delta$ within the bias bound $b$ by using $\cI(b, \zeta)$. We show the efficiency of this approach in Sections~\ref{sec:simple} and \ref{sec:multivariate}, i.e., yielding a shorter confidence interval. 
See also \cite{armstrong2020simple} for a related argument: they show that using critical values to account for potential bias is more efficient than subtracting an estimate of the bias from the point estimator.

We now illustrate why constructing confidence intervals using a prespecified point estimator is more efficient than using a bias-dependent one.
% We now illustrate why constructing confidence intervals using a bias-independent point estimator is more efficient than using a bias-dependent one under the context of the precision-weighted estimator.

For the first approach, a natural estimate of the bias is $\hat\Delta = \hat\tau_1 - \hat\tau_0$. We take the precision-weighted estimator as an example. Subtracting this estimated bias from the precision-weighted estimator gives $\htau_0 + \frac{\gamma}{1+\gamma} (\htau_1 - \htau_0) - \frac{\gamma}{1+\gamma} \hat\Delta = \htau_0$. Thus, after bias correction, the estimator collapses to the unbiased estimator $\hat\tau_0$. In other words, this approach discards the more precise estimator. The issue is that, when constructing the confidence interval, we need to account for the randomness in $\hat\Delta$, which makes the confidence interval wider. As a result, this approach provides no efficiency gain over the unbiased estimator.

For the second approach, suppose we know the true bias $\Delta$. Then the bias-corrected combined estimator is $\htau_{\Delta} = \htau - g(\Delta)$. Suppose the shortest length symmetric centered confidence interval based on $\htau_{\Delta}$ for $\tau$ is given by $[\htau_{\Delta} - \hat{L}_{\zeta}, \htau_{\Delta} + \hat{L}_{\zeta}]$, where $\hat{L}_{\zeta}$ does not depends on $\Delta$. When we only have $\lvert \Delta \rvert \le b$, we take the union of these confidence intervals over all possible $\Delta$ with $\lvert \Delta \rvert \le b$. The resulting confidence interval is
\begin{align}\label{eq:CI_union}
  \bigcup_{\Delta:\lvert \Delta \rvert \le b} [\htau_{\Delta} - \hat{L}_{\zeta}, \htau_{\Delta} + \hat{L}_{\zeta}] = \left[\htau - \left(\max_{\Delta:\lvert \Delta \rvert \le b} g(\Delta) + \hat{L}_{\zeta}\right), \htau_{\Delta} + \left(\max_{\Delta:\lvert \Delta \rvert \le b} g(\Delta) + \hat{L}_{\zeta}\right)\right].
\end{align}
By Definition~\ref{def:CI}, the shortest length symmetric centered confidence interval for $\tau$ based on $\htau$ is given by $[\htau- \hat{L}_{\zeta}(b), \htau + \hat{L}_{\zeta}(b)]$, where
\begin{align}\label{eq:L_zeta_b}
  \hat{L}_{\zeta}(b) = \argmin_{L \ge 0} \left\{ \inf_{\Delta: \lvert \Delta \rvert \le b} \P_\Delta(\lvert \hat\tau - \tau \rvert \le L) \ge 1 - \zeta \right\}.
\end{align}
Comparing the confidence interval based on bias-dependent point estimator \eqref{eq:CI_union} with the confidence interval based on prespecified point estimator \eqref{eq:L_zeta_b}, we note that
\[
  \left(\max_{\Delta:\lvert \Delta \rvert \le b} g(\Delta) + \hat{L}_{\zeta}\right) \in \left\{ \inf_{\Delta: \lvert \Delta \rvert \le b} \P_\Delta(\lvert \hat\tau - \tau \rvert \le L) \ge 1 - \zeta \right\}.
\]
Indeed, for any $\Delta'$ with $\lvert \Delta' \rvert \le b$,
\[
  \P_{\Delta'}\left(\lvert \hat\tau - \tau \rvert \le \max_{\Delta:\lvert \Delta \rvert \le b} g(\Delta) + \hat{L}_{\zeta}\right) \ge  \P_{\Delta'}\left(\lvert \hat\tau - \tau \rvert \le g(\Delta') + \hat{L}_{\zeta} \right) \ge 1 - \zeta.
\]
Therefore
\[
  \hat{L}_{\zeta}(b) \le \max_{\Delta:\lvert \Delta \rvert \le b} g(\Delta) + \hat{L}_{\zeta},
\]
with equality only when $b=0$. Hence, the confidence interval based on prespecified point estimator is strictly narrower when the bias bound is nonzero. This comparison highlights a key principle of our framework: instead of accounting for the bias in the point estimation through subtracting the bias, it is more efficient to account for the bias in the confidence interval construction through critical values.

\subsection{More details for the dependence case}\label{sec:dependence}

In this section, we provide more details on the generalization to the dependence case in Section~\ref{sec:dependent} of the main paper. In the reparametrization in \eqref{eq:htau1_prime}, one may be concerned about how scaling the bias works and how to interpret the relative bias. We compute the confidence intervals and the b-value by the following three steps.

\begin{enumerate}[label=(\alph*)]
  \item Reparametrization: We apply the transformation in \eqref{eq:htau1_prime} to obtain the transformed biased estimator $\htau_1'$. Under this transformation, $\hat\tau_0$ remains an unbiased estimator of $\tau$, and $\hat\tau_1'$ becomes a biased estimator that is independent of $\hat\tau_0$. We then construct the combined estimator $\hat\tau$ using these two independent components.
  \item Compute the confidence intervals and the b-value in the transformed problem: We next work in the transformed problem. Let the relative bias in the transformed parametrization be $\Delta'/\sigma_0$ with bound $b'$. Using Definition~\ref{def:CI}, we construct the confidence interval $\htau - \cI(b', \zeta)$ such that
  \[
    \inf_{\Delta': \lvert \Delta'/\sigma_0 \rvert \le b'} \P_{\Delta'}(\tau \in \htau - \cI(b', \zeta)) \ge 1 - \zeta,
  \]
  and compute the corresponding b-value $b^{*’}$ in the transformed problem.
  \item Transform the confidence intervals and the b-value back to the original problem: Finally, we transform the results back to the original problem. Under the transformation in \eqref{eq:htau1_prime}, the relative biases in the two parametrizations are related by
  \[
    \lvert \Delta'/\sigma_0 \rvert \le b' \iff \lvert \Delta/\sigma_0 \rvert \le \lvert 1 - \rho \sigma_1/\sigma_0 \rvert b'.
  \]
  Thus, a bias bound $b’$ in the transformed parametrization corresponds to a bias bound $b = \lvert 1 - \rho \sigma_1/\sigma_0 \rvert b'$ in the original parametrization. Using this relationship, the confidence interval $\htau - \cI(b, \zeta)$ satisfies Definition~\ref{def:CI} in the original parametrization:
  \[
    \inf_{\Delta: \lvert \Delta/\sigma_0 \rvert \le b} \P_{\Delta}(\tau \in \htau - \cI(b, \zeta)) \ge 1 - \zeta,
  \]
  and the b-value in the original parametrization is $b^* = \lvert 1 - \rho \sigma_1/\sigma_0 \rvert b^{*'}$.
\end{enumerate}

\subsection{Relationship between the generic soft-thresholding estimator and the estimator in \citet{j1981estimation, bickel1984parametric}}\label{sec:efron--morris}

In this section, we discuss the relationship between the generic soft-thresholding estimator and the estimator in \citet{j1981estimation, bickel1984parametric} in Section~\ref{sec:point_multi} of the main paper. We consider the special setting studied in \cite{j1981estimation} and \cite{bickel1984parametric}, where the covariance matrices are proportional and isotropic. Specifically, assume $\bm{\Sigma}_0 = \sigma_0^2 \bm{I}_d$ and $\bm{\Sigma}_0=\gamma\bm{\Sigma}_1$ for some $\gamma>0$. The $\gamma$ here generalizes the variance ratio $\sigma_0^2/\sigma_1^2$ in the univariate setting. Let $\mathsf C \ge 0$ be a user-chosen constant and let $\rho_{\mathsf C}(r)$ denote the ratio of Bessel functions, whose explicit form is given in Lemma 3 of \cite{j1981estimation}. The constant $\mathsf C$ here plays the same role as the critical value $c_{\alpha/2}$ in the univariate setting. The estimator in \cite{j1981estimation} and \cite{bickel1984parametric} is then given by choosing the shrinkage function $h_q^*(r) = \rho_{\mathsf C/\sigma_0^2}(r)$ where the threshold $q = q(\mathsf C)$ is defined as the solution to
\[
h_q^*(q) = 1 \iff \rho_{\mathsf C/\sigma_0^2}(q) = 1.
\]

Suppose we choose the function $h_q^*(\cdot)$ as the estimator in \cite{j1981estimation} and \cite{bickel1984parametric}, i.e., $h_q^*(r) = \rho_{\mathsf C/\sigma_0^2}(r)$ with $q$ given above. In the univariate case $d=1$, the estimator $\bm{\hat\tau}_{\rm ST}$ reduces to the univariate soft-thresholding estimator $\hat\tau_{\rm ST}$. In this case, the user-chosen constant $\mathsf C$ in $\bm{\hat\tau}_{\rm ST}$ corresponds one-to-one to the significance level $\alpha$, or equivalently to the critical value $c_{\alpha/2}$ in $\hat\tau_{\rm ST}$. Consider the special case $\mathsf C=0$. When $d \le 2$, the estimator $\bm{\hat\tau}_{\rm ST}$ reduces to the unbiased estimator $\bm{\hat\tau}_0$. However, when $d \ge 3$, the behavior of $\bm{\hat\tau}_{\rm ST}$ is different. When $\mathsf C=0$ and $d \ge 3$, we have $\rho_{\mathsf C/\sigma_0^2}(r) = \rho_0(r) = 2(d-2)/r$, and therefore the threshold $q$ is given by $q = 2(d-2)$. If $\lVert (\bm{\Sigma}_0+ \bm{\Sigma}_1)^{-1/2}(\bm{\htau}_1 - \bm{\htau}_0) \rVert_2^2 \le q$, or equivalently, $\lVert \bm{\htau}_1 - \bm{\htau}_0 \rVert_2^2 \le 2(d-2)(\sigma_0^2 + \sigma_1^2)$, the estimator $\bm{\hat\tau}_{\rm ST}$ reduces to the precision-weighted estimator $\bm{\htau}_0 + \frac{\gamma}{1+\gamma}(\bm{\htau}_1 - \bm{\htau}_0)$. If instead $\lVert (\bm{\Sigma}_0+ \bm{\Sigma}_1)^{-1/2}(\bm{\htau}_1 - \bm{\htau}_0) \rVert_2^2 > q$, or equivalently, $\lVert \bm{\htau}_1 - \bm{\htau}_0 \rVert_2^2 > 2(d-2)(\sigma_0^2 + \sigma_1^2)$, the estimator $\bm{\hat\tau}_{\rm ST}$ takes the James--Stein shrinkage form \citep{green1991james}:
\[
  \bm{\htau}_0 + h_q^*(\lVert (\bm{\Sigma}_0+ \bm{\Sigma}_1)^{-1/2}(\bm{\htau}_1 - \bm{\htau}_0) \rVert_2^2)(\bm{\Sigma}_0^{-1} + \bm{\Sigma}_1^{-1})^{-1}\bm{\Sigma}_1^{-1}(\bm{\htau}_1 - \bm{\htau}_0) = \bm{\htau}_0 + \frac{2(d-2)\sigma_0^2}{\lVert \bm{\htau}_1 - \bm{\htau}_0 \rVert_2^2}(\bm{\htau}_1 - \bm{\htau}_0).
\]

\section{Proofs of the main results}\label{sec:proof_main}

\subsection{Proof of Theorem~\ref{thm:L_PW}}

\begin{proof}[Proof of Theorem~\ref{thm:L_PW}]
  By the definition of $\htau_{\rm PW}$, under Assumption~\ref{ass:normal}, we have 
  \[
    \htau_{\rm PW} \sim N\left(\tau + \frac{\gamma}{1+\gamma} \Delta, \frac{1}{1+\gamma}\sigma_0^2\right).
  \]
  Equivalently, after standardization,
  \begin{align*}
    (1+\gamma)^{1/2} \sigma_0^{-1} (\htau_{\rm PW} - \tau) \sim N\left(\frac{\gamma}{\sqrt{1+\gamma}} \frac{\Delta}{\sigma_0}, 1\right).
  \end{align*}
  
  We now evaluate the worst-case coverage probability of the confidence interval
  \[
    [\htau_{\rm PW} - L (1+\gamma)^{-1/2} \sigma_0, \htau_{\rm PW} + L (1+\gamma)^{-1/2} \sigma_0].
  \]
  We have
  \begin{align*}
    & \inf_{\Delta: \lvert \Delta/\sigma_0 \rvert \le b} \P_\Delta(\tau \in [\htau_{\rm PW} - L (1+\gamma)^{-1/2} \sigma_0, \htau_{\rm PW} + L (1+\gamma)^{-1/2} \sigma_0])\\
    =& \inf_{\Delta: \lvert \Delta/\sigma_0 \rvert \le b} \P_\Delta(\lvert \htau_{\rm PW} - \tau \rvert \le L (1+\gamma)^{-1/2} \sigma_0)\\
    =& \inf_{\Delta: \lvert \Delta/\sigma_0 \rvert \le b} \P_\Delta(\lvert (1+\gamma)^{1/2} \sigma_0^{-1} (\htau_{\rm PW} - \tau) \rvert \le L)\\
    =& \inf_{\Delta: \lvert \Delta/\sigma_0 \rvert \le b} \P_\Delta\left(\left\lvert N\left(\frac{\gamma}{\sqrt{1+\gamma}} \frac{\Delta}{\sigma_0}, 1\right) \right\rvert \le L\right)\\
    =& \P\left( \left\lvert N\left(\sup_{\Delta: \lvert \Delta/\sigma_0 \rvert \le b} \frac{\gamma}{\sqrt{1+\gamma}} \frac{\Delta}{\sigma_0}, 1\right) \right\rvert \le L \right)\\
    =& \P\left( \left\lvert N\left(\frac{\gamma}{\sqrt{1+\gamma}} b, 1\right) \right\rvert \le L \right)\\
    =& \Phi\Big(L - \frac{\gamma}{\sqrt{1+\gamma}} b \Big) - \Phi\Big(-L - \frac{\gamma}{\sqrt{1+\gamma}} b \Big).
  \end{align*}
  Therefore, the proof is complete since $\hat{L}_{\rm PW}$ is the solution to
  \[
    \Phi\Big(L - \frac{\gamma}{\sqrt{1+\gamma}} b \Big) - \Phi\Big(-L - \frac{\gamma}{\sqrt{1+\gamma}} b \Big) = 1-\zeta.
  \] 
\end{proof}

\subsection{Proof of Theorem~\ref{thm:L_PT}}

\begin{proof}[Proof of Theorem~\ref{thm:L_PT}]
Recall that the pretest estimator $\htau_{\rm PT}$ is defined as
\[
  \htau_{\rm PT} = \htau_0 + \frac{\gamma}{1+\gamma} (\htau_1 - \htau_0) \ind\left(\lvert \htau_1 - \htau_0 \rvert \le \sigma c_{\alpha/2}\right).
\]
To characterize the distribution of $\htau_{\rm PT} - \tau$, note that
\begin{align*}
  \begin{pmatrix} \htau_0 + \frac{\gamma}{1+\gamma} (\htau_1-\htau_0)-\tau \\ \htau_1-\htau_0 \end{pmatrix} \sim N\left(\begin{pmatrix} \frac{\gamma}{1+\gamma} \Delta \\ \Delta\end{pmatrix}, \begin{pmatrix} \frac{1}{1+\gamma} \sigma_0^2 & 0 \\
  0 & \sigma_0^2 + \sigma_1^2
  \end{pmatrix}\right).
\end{align*}

We first consider the event that the pretest fails to reject, $\lvert \htau_1 - \htau_0 \rvert \le \sigma c_{\alpha/2}$. Conditional on this event, $\htau_{\rm PT} = \htau_0 + \frac{\gamma}{1+\gamma} (\htau_1 - \htau_0)$, and therefore
\begin{align*}
  (\htau_{\rm PT} - \tau) \given \{\lvert \htau_1 - \htau_0 \rvert \le \sigma c_{\alpha/2}\} \sim N\left(\frac{\gamma}{1+\gamma} \Delta, \frac{1}{1+\gamma} \sigma_0^2\right).
\end{align*}
Equivalently, after standardization,
\begin{align*}
  (1+\gamma)^{1/2} \sigma_0^{-1} (\htau_{\rm PT} - \tau) \given \{\lvert \htau_1 - \htau_0 \rvert \le \sigma c_{\alpha/2}\} \sim N\left(\frac{\gamma}{\sqrt{1+\gamma}} \frac{\Delta}{\sigma_0}, 1\right).
\end{align*}

We next consider the event that the pretest rejects, $\lvert \htau_1 - \htau_0 \rvert > \sigma c_{\alpha/2}$. Conditional on $\sigma^{-1}(\htau_1 - \htau_0) = u$, we have
\[
  \frac{\gamma}{1+\gamma} (\htau_1-\htau_0) = \sqrt{\frac{\gamma}{1+\gamma}} \sigma_0 u.
\]
Since $\htau_{\rm PT} = \htau_0$ conditional on this event, we have
\begin{align*}
  &(\htau_{\rm PT} - \tau) \given \{\sigma^{-1}(\htau_1 - \htau_0) = u, \lvert \htau_1 - \htau_0 \rvert > \sigma c_{\alpha/2}\} \\
  =& (\htau_0 - \tau) \given \{\sigma^{-1}(\htau_1 - \htau_0) = u, \lvert \htau_1 - \htau_0 \rvert > \sigma c_{\alpha/2}\}\\
  =& (\htau_0 + \frac{\gamma}{1+\gamma} (\htau_1-\htau_0)-\tau) \given \{\sigma^{-1}(\htau_1 - \htau_0) = u, \lvert \htau_1 - \htau_0 \rvert > \sigma c_{\alpha/2}\} - \sqrt{\frac{\gamma}{1+\gamma}} \sigma_0 u\\
  \sim& N\left(\frac{\gamma}{1+\gamma}\Delta - \sqrt{\frac{\gamma}{1+\gamma}} \sigma_0 u, \frac{1}{1+\gamma} \sigma_0^2\right),
\end{align*}
which implies
\begin{align*}
  (1+\gamma)^{1/2} \sigma_0^{-1} (\htau_{\rm PT} - \tau) \given \{\sigma^{-1}(\htau_1 - \htau_0) = u, \lvert \htau_1 - \htau_0 \rvert > \sigma c_{\alpha/2}\} \sim N\left(\frac{\gamma}{\sqrt{1+\gamma}} \frac{\Delta}{\sigma_0} - \sqrt{\gamma} u, 1\right).
\end{align*}

We now evaluate the worst-case coverage probability of the confidence interval
\[
  [\htau_{\rm PT} - L (1+\gamma)^{-1/2} \sigma_0, \htau_{\rm PT} + L (1+\gamma)^{-1/2} \sigma_0].
\]
We have
\begin{align*}
  & \inf_{\Delta: \lvert \Delta/\sigma_0 \rvert \le b} \P_\Delta(\tau \in [\htau_{\rm PT} - L (1+\gamma)^{-1/2} \sigma_0, \htau_{\rm PT} + L (1+\gamma)^{-1/2} \sigma_0])\\
  =& \inf_{\Delta: \lvert \Delta/\sigma_0 \rvert \le b} \P_\Delta(\lvert \htau_{\rm PT} - \tau \rvert \le L (1+\gamma)^{-1/2} \sigma_0).
\end{align*}
Fixing $\Delta/\sigma_0=t$, we decompose the probability according to whether the pretest accepts or rejects:
\begin{align*}
  &\P_{\Delta/\sigma_0 = t}(\lvert \htau_{\rm PT} - \tau \rvert \le L (1+\gamma)^{-1/2} \sigma_0)\\
  =&\P_{\Delta/\sigma_0 = t}(\lvert (1+\gamma)^{1/2} \sigma_0^{-1} (\htau_{\rm PT} - \tau) \rvert \le L)\\
  =&\P_{\Delta/\sigma_0 = t}(\lvert (1+\gamma)^{1/2} \sigma_0^{-1} (\htau_{\rm PT} - \tau) \rvert \le L, \lvert \htau_1 - \htau_0 \rvert \le \sigma c_{\alpha/2}) \\
  &+ \P_{\Delta/\sigma_0 = t}(\lvert (1+\gamma)^{1/2} \sigma_0^{-1} (\htau_{\rm PT} - \tau) \rvert \le L, \lvert \htau_1 - \htau_0 \rvert > \sigma c_{\alpha/2})\\ 
  =& \P\left( \left\lvert N\left(\frac{\gamma}{\sqrt{1+\gamma}} t, 1\right) \right\rvert \le L \right) \P\left( \left\lvert N\left(\sqrt{\frac{\gamma}{1+\gamma}} t, 1\right) \right\rvert \le c_{\alpha/2} \right)\\
  &+ \E_{U \sim N(\sqrt{\frac{\gamma}{1+\gamma}} t, 1)}\left[\P\left( \left\lvert N\left(\frac{\gamma}{\sqrt{1+\gamma}} t - \sqrt{\gamma} U, 1\right) \right\rvert \le L \right) \ind\left(\lvert U \rvert > c_{\alpha/2}\right)\right]\\
  =& \Big[\Phi\Big(c_{\alpha/2} - \sqrt{\frac{\gamma}{1+\gamma}}t \Big) - \Phi\Big(-c_{\alpha/2} - \sqrt{\frac{\gamma}{1+\gamma}}t \Big)\Big]\Big[\Phi\Big(L - \frac{\gamma}{\sqrt{1+\gamma}}t \Big) - \Phi\Big(-L - \frac{\gamma}{\sqrt{1+\gamma}}t \Big)\Big]\\
  &+ \int_{-\infty}^{-c_{\alpha/2} - \sqrt{\frac{\gamma}{1+\gamma}}t} \Big[\Phi\Big(L + \sqrt{\gamma} u \Big) - \Phi\Big(-L + \sqrt{\gamma} u\Big)\Big] \phi(u) \d u\\
  &+ \int_{c_{\alpha/2} - \sqrt{\frac{\gamma}{1+\gamma}}t}^\infty \Big[\Phi\Big(L + \sqrt{\gamma} u\Big) - \Phi\Big(-L + \sqrt{\gamma} u\Big)\Big] \phi(u) \d u,
\end{align*}
which yields exactly the expression in \eqref{eq:P_PT}.

Finally, since the coverage probability is symmetric in $\Delta$, the worst case over
$\lvert \Delta/\sigma_0\rvert\le b$ is attained for some $t\in[0,b]$. Choosing $\hat L_{\rm PT}$ to satisfy
\[
  \min_{0 \le t \le b}\P_{\Delta/\sigma_0 = t}(\lvert \htau_{\rm PT} - \tau \rvert \le L (1+\gamma)^{-1/2} \sigma_0) = 1-\zeta
\]
completes the proof.
  
\end{proof}

\subsection{Proof of Theorem~\ref{thm:L_ST}}

\begin{proof}[Proof of Theorem~\ref{thm:L_ST}]
Recall that the soft-thresholding estimator $\htau_{\rm ST}$ is defined as 
\[
  \htau_{\rm ST}= \htau_0 + \frac{\gamma}{1+\gamma} (\htau_1 - \htau_0)\ind\left(\lvert \htau_1 - \htau_0 \rvert \le \sigma c_{\alpha/2}\right) 
  + \frac{\gamma}{1+\gamma} \sigma c_{\alpha/2}\,\sign(\htau_1 - \htau_0)\ind\left(\lvert \htau_1 - \htau_0 \rvert > \sigma c_{\alpha/2}\right).
\]
As before, note that
\begin{align*}
  \begin{pmatrix} \htau_0 + \frac{\gamma}{1+\gamma} (\htau_1-\htau_0)-\tau \\ \htau_1-\htau_0 \end{pmatrix} \sim N\left(\begin{pmatrix} \frac{\gamma}{1+\gamma} \Delta \\ \Delta\end{pmatrix}, \begin{pmatrix} \frac{1}{1+\gamma} \sigma_0^2 & 0 \\
  0 & \sigma_0^2 + \sigma_1^2
  \end{pmatrix}\right).
\end{align*}

We first consider the event that the pretest accepts, $\lvert \htau_1 - \htau_0 \rvert \le \sigma c_{\alpha/2}$. Conditional on this event, $\htau_{\rm ST} = \htau_0 + \frac{\gamma}{1+\gamma} (\htau_1 - \htau_0)$, and therefore
\begin{align*}
  (\htau_{\rm ST} - \tau) \given \{\lvert \htau_1 - \htau_0 \rvert \le \sigma c_{\alpha/2}\} \sim N\left(\frac{\gamma}{1+\gamma} \Delta, \frac{1}{1+\gamma} \sigma_0^2\right).
\end{align*}
Equivalently, after standardization,
\begin{align*}
  (1+\gamma)^{1/2} \sigma_0^{-1} (\htau_{\rm ST} - \tau) \given \{\lvert \htau_1 - \htau_0 \rvert \le \sigma c_{\alpha/2}\} \sim N\left(\frac{\gamma}{\sqrt{1+\gamma}} \frac{\Delta}{\sigma_0}, 1\right).
\end{align*}

We next consider the event that the pretest rejects, $\lvert \htau_1 - \htau_0 \rvert > \sigma c_{\alpha/2}$. Conditional on $\sigma^{-1}(\htau_1 - \htau_0) = u$, we have
\[
  \frac{\gamma}{1+\gamma} (\htau_1-\htau_0) = \sqrt{\frac{\gamma}{1+\gamma}} \sigma_0 u.
\]
Since $\htau_{\rm ST} = \htau_0 + \frac{\gamma}{1+\gamma} \sigma c_{\alpha/2}\,\sign(\htau_1 - \htau_0)$ conditional on this event, we have
\begin{align*}
  &(\htau_{\rm ST} - \tau) \given \{\sigma^{-1}(\htau_1 - \htau_0) = u, \lvert \htau_1 - \htau_0 \rvert > \sigma c_{\alpha/2}\} \\
  =& \htau_0 - \tau + \frac{\gamma}{1+\gamma} \sigma c_{\alpha/2}\,\sign(\htau_1 - \htau_0)\given \{\sigma^{-1}(\htau_1 - \htau_0) = u, \lvert \htau_1 - \htau_0 \rvert > \sigma c_{\alpha/2}\}\\
  =& \htau_0 + \frac{\gamma}{1+\gamma} (\htau_1-\htau_0)-\tau \given \{\sigma^{-1}(\htau_1 - \htau_0) = u, \lvert \htau_1 - \htau_0 \rvert > \sigma c_{\alpha/2}\} - \sqrt{\frac{\gamma}{1+\gamma}} \sigma_0 [u - c_{\alpha/2}\,\sign(u)]\\
  \sim& N\left(\frac{\gamma}{1+\gamma}\Delta - \sqrt{\frac{\gamma}{1+\gamma}} \sigma_0 [u - c_{\alpha/2}\,\sign(u)], \frac{1}{1+\gamma} \sigma_0^2\right),
\end{align*}
which implies
\begin{align*}
  &(1+\gamma)^{1/2} \sigma_0^{-1} (\htau_{\rm ST} - \tau) \given \{\sigma^{-1}(\htau_1 - \htau_0) = u, \lvert \htau_1 - \htau_0 \rvert > \sigma c_{\alpha/2}\} \sim N\left(\frac{\gamma}{\sqrt{1+\gamma}} \frac{\Delta}{\sigma_0} - \sqrt{\gamma} [u - c_{\alpha/2}\,\sign(u)], 1\right).
\end{align*}
  
We now evaluate the worst-case coverage probability of the confidence interval
\[
  [\htau_{\rm ST} - L (1+\gamma)^{-1/2} \sigma_0, \htau_{\rm ST} + L (1+\gamma)^{-1/2} \sigma_0].
\]
We have
\begin{align*}
  & \inf_{\Delta: \lvert \Delta/\sigma_0 \rvert \le b} \P_\Delta(\tau \in [\htau_{\rm ST} - L (1+\gamma)^{-1/2} \sigma_0, \htau_{\rm ST} + L (1+\gamma)^{-1/2} \sigma_0])\\
  =& \inf_{\Delta: \lvert \Delta/\sigma_0 \rvert \le b} \P_\Delta(\lvert \htau_{\rm ST} - \tau \rvert \le L (1+\gamma)^{-1/2} \sigma_0).
\end{align*}
Fixing $\Delta/\sigma_0=t$, we decompose the probability according to whether the pretest accepts or rejects:
\begin{align*}
  &\P_{\Delta/\sigma_0 = t}(\lvert \htau_{\rm ST} - \tau \rvert \le L (1+\gamma)^{-1/2} \sigma_0)\\
  =&\P_{\Delta/\sigma_0 = t}(\lvert (1+\gamma)^{1/2} \sigma_0^{-1} (\htau_{\rm ST} - \tau) \rvert \le L)\\
  =&\P_{\Delta/\sigma_0 = t}(\lvert (1+\gamma)^{1/2} \sigma_0^{-1} (\htau_{\rm ST} - \tau) \rvert \le L, \lvert \htau_1 - \htau_0 \rvert \le \sigma c_{\alpha/2}) \\
  &+ \P_{\Delta/\sigma_0 = t}(\lvert (1+\gamma)^{1/2} \sigma_0^{-1} (\htau_{\rm ST} - \tau) \rvert \le L, \lvert \htau_1 - \htau_0 \rvert > \sigma c_{\alpha/2})\\ 
  =& \P\left( \left\lvert N\left(\frac{\gamma}{\sqrt{1+\gamma}} t, 1\right) \right\rvert \le L \right) \P\left( \left\lvert N\left(\sqrt{\frac{\gamma}{1+\gamma}} t, 1\right) \right\rvert \le c_{\alpha/2} \right)\\
  &+ \E_{U \sim N(\sqrt{\frac{\gamma}{1+\gamma}} t, 1)}\left[\P\left( \left\lvert N\left(\frac{\gamma}{\sqrt{1+\gamma}} t - \sqrt{\gamma} [U - c_{\alpha/2}\,\sign(U)], 1\right) \right\rvert \le L \right) \ind\left(\lvert U \rvert > c_{\alpha/2}\right)\right]\\
  =& \Big[\Phi\Big(c_{\alpha/2} - \sqrt{\frac{\gamma}{1+\gamma}}t \Big) - \Phi\Big(-c_{\alpha/2} - \sqrt{\frac{\gamma}{1+\gamma}}t \Big)\Big]\Big[\Phi\Big(L - \frac{\gamma}{\sqrt{1+\gamma}}t \Big) - \Phi\Big(-L - \frac{\gamma}{\sqrt{1+\gamma}}t \Big)\Big]\\
  &+ \int_{-\infty}^{-c_{\alpha/2} - \sqrt{\frac{\gamma}{1+\gamma}}t} \Big[\Phi\Big(L + \sqrt{\gamma} (u+c_{\alpha/2})\Big) - \Phi\Big(-L + \sqrt{\gamma} (u+c_{\alpha/2})\Big)\Big] \phi(u) \d u\\
  &+ \int_{c_{\alpha/2} - \sqrt{\frac{\gamma}{1+\gamma}}t}^\infty \Big[\Phi\Big(L + \sqrt{\gamma} (u-c_{\alpha/2})\Big) - \Phi\Big(-L + \sqrt{\gamma} (u-c_{\alpha/2})\Big)\Big] \phi(u) \d u.
\end{align*}
which yields exactly the expression in \eqref{eq:P_ST}.

To establish the monotonicity of the coverage probability in $\lvert \Delta \rvert$, fix $L>0$ and write $t=\Delta/\sigma_0$. 
Since $\P_{\Delta}(\lvert \hat\tau_{\rm ST}-\tau\rvert\le L(1+\gamma)^{-1/2}\sigma_0)$
is symmetric about $\Delta = 0$, it suffices to consider $t\ge0$. By Lemma~\ref{lemma:clt_ST}, we have
\[
(1+\gamma)^{1/2}\sigma_0^{-1}(\hat\tau_{\rm ST}-\tau)
\stackrel{d}{=}
Z + \frac{\gamma}{\sqrt{1+\gamma}}\,t
- \sqrt{\gamma}\,S(U),
\]
where $Z\sim N(0,1)$ is independent of
$U\sim N(\sqrt{\frac{\gamma}{1+\gamma}}\,t,1)$ and
$S(u)=\sign(u)(|u|-c_{\alpha/2})_+$.
Consequently, we can write the coverage probability as
\[
  \P_{\Delta/\sigma_0 = t}(\lvert \htau_{\rm ST} - \tau \rvert \le L (1+\gamma)^{-1/2} \sigma_0) = \E\left[
\Phi\left(L-\mu_t(U)\right)
-\Phi\left(-L-\mu_t(U)\right)
\right],
\]
with $\mu_t(U)=\frac{\gamma}{\sqrt{1+\gamma}}\,t-\sqrt{\gamma}\,S(U)$. The function $\mu\mapsto\Phi(L-\mu)-\Phi(-L-\mu)$ is even and strictly
decreasing in $\lvert \mu \rvert$. Moreover, increasing $t$ shifts the distribution of $U$ away from zero,
which increases $\lvert \mu_t(U) \rvert$ in the sense of stochastic order.
Therefore, the coverage probability
is nonincreasing in $t\ge0$, or equivalently, monotonically decreasing in $\lvert \Delta\rvert$.

Finally, since the coverage probability is symmetric in $\Delta$ and monotonically decreasing in $\lvert \Delta \rvert$, the worst case over
$\lvert \Delta/\sigma_0\rvert\le b$ is attained at $\Delta/\sigma_0 = b$. Choosing $\hat L_{\rm ST}$ to satisfy
\[
  \P_{\Delta/\sigma_0 = b}(\lvert \htau_{\rm ST}- \tau \rvert \le L (1+\gamma)^{-1/2} \sigma_0) = 1-\zeta
\]
completes the proof.

\end{proof}

\subsection{Proof of Theorem~\ref{prop:CI_multi}}

\begin{proof}[Proof of Theorem~\ref{prop:CI_multi}]
By the definition of $\bm{\htau}_{\rm PW}$, under Assumption~\ref{ass:normal_multi}, we have 
\[
  \bm{\htau}_{\rm PW} \sim N(\bm{\tau} + (\bm{\Sigma}_0^{-1} + \bm{\Sigma}_1^{-1})^{-1} \bm{\Sigma}_1^{-1} \bm{\Delta}, (\bm{\Sigma}_0^{-1} + \bm{\Sigma}_1^{-1})^{-1} ).
\]
Equivalently, after standardization,
\begin{align*}
  (\bm{\Sigma}_0^{-1} + \bm{\Sigma}_1^{-1})^{1/2} (\bm{\htau}_{\rm PW} - \bm{\tau}) \sim N((\bm{\Sigma}_0^{-1} + \bm{\Sigma}_1^{-1})^{-1/2} \bm{\Sigma}_1^{-1} \bm{\Delta}, \bm{I}_d),
\end{align*}
which implies
\begin{align*}
  (\bm{\htau}_{\rm PW} - \bm{\tau})^\top (\bm{\Sigma}_0^{-1} + \bm{\Sigma}_1^{-1}) (\bm{\htau}_{\rm PW} - \bm{\tau}) \sim \chi^2_d\left(\left\lVert (\bm{\Sigma}_0^{-1} + \bm{\Sigma}_1^{-1})^{-1/2} \bm{\Sigma}_1^{-1} \bm{\Delta}\right\rVert_2^2\right),
\end{align*}
where $\chi^2_d(\cdot)$ denotes the noncentral chi-squared distribution with
$d$ degrees of freedom.

We now evaluate the worst-case coverage probability of the quadratic confidence region
\[
  \{\bm{\tau}\in\mathbb{R}^d:(\bm{\htau}_{\rm PW}-\bm{\tau})^\top (\bm{\Sigma}_0^{-1} + \bm{\Sigma}_1^{-1})(\bm{\htau}_{\rm PW}-\bm{\tau})\le M\}.
\]
We have
\begin{align*}
  &\inf_{\bm{\Delta}: \lvert [\bm{\Sigma}^{-1/2} \bm{\Delta}]_j \rvert \le b_j,\forall j=1,2,\ldots,d} \P_{\bm{\Delta}}\left((\bm{\htau}_{\rm PW}-\bm{\tau})^\top (\bm{\Sigma}_0^{-1} + \bm{\Sigma}_1^{-1})(\bm{\htau}_{\rm PW}-\bm{\tau})\le M\right)\\
  =& \inf_{\bm{\Delta}: \lvert [\bm{\Sigma}^{-1/2} \bm{\Delta}]_j \rvert \le b_j,\forall j=1,2,\ldots,d} \P_{\bm{\Delta}}\left(\chi^2_d\left(\left\lVert (\bm{\Sigma}_0^{-1} + \bm{\Sigma}_1^{-1})^{-1/2} \bm{\Sigma}_1^{-1} \bm{\Delta}\right\rVert_2^2\right) \le M\right)\\
  =& \P\left(\chi^2_d\left(\sup_{\bm{\Delta}: \lvert [\bm{\Sigma}^{-1/2} \bm{\Delta}]_j \rvert \le b_j,\forall j=1,2,\ldots,d}\left\lVert (\bm{\Sigma}_0^{-1} + \bm{\Sigma}_1^{-1})^{-1/2} \bm{\Sigma}_1^{-1} \bm{\Delta}\right\rVert_2^2\right) \le M\right).
\end{align*}

To evaluate the supremum in the noncentrality parameter, note that
\[
  \left\lVert
    (\bm{\Sigma}_0^{-1}+\bm{\Sigma}_1^{-1})^{-1/2}
    \bm{\Sigma}_1^{-1}\bm{\Delta}
  \right\rVert_2^2
  =
  \left\lVert
    (\bm{\Sigma}_0^{-1}+\bm{\Sigma}_1^{-1})^{-1/2}
    \bm{\Sigma}_1^{-1}\bm{\Sigma}^{1/2}
    (\bm{\Sigma}^{-1/2}\bm{\Delta})
  \right\rVert_2^2,
\]
which is a convex quadratic function of
$\bm{u}=\bm{\Sigma}^{-1/2}\bm{\Delta}$.
Since the constraint
$\lvert u_j\rvert\le b_j$ defines a hyperrectangle, the maximum of this convex
quadratic over the constraint set is attained at a vertex.
Therefore,
\begin{align*}
  &\sup_{\bm{\Delta}: \lvert [\bm{\Sigma}^{-1/2} \bm{\Delta}]_j \rvert \le b_j,\forall j=1,2,\ldots,d}\left\lVert (\bm{\Sigma}_0^{-1} + \bm{\Sigma}_1^{-1})^{-1/2} \bm{\Sigma}_1^{-1} \bm{\Delta}\right\rVert_2^2\\
  =& \sup_{\bm{s} \in \{\pm 1\}^d }\left\lVert (\bm{\Sigma}_0^{-1} + \bm{\Sigma}_1^{-1})^{-1/2} \bm{\Sigma}_1^{-1} \bm{\Sigma}^{1/2} \bm{b} \odot \bm{s}\right\rVert_2^2.
\end{align*}

Finally, choosing $\hat M_{\rm PW}$ as the $(1-\zeta)$ upper quantile of the
noncentral chi-squared distribution with $d$ degrees of freedom and the above
noncentrality parameter ensures that the confidence region attains the desired
coverage level. This completes the proof.

\end{proof}

\subsection{Proof of Theorem~\ref{prop:CI_multi_PT}}

\begin{proof}[Proof of Theorem~\ref{prop:CI_multi_PT}]
Recall that the pretest estimator $\bm{\htau}_{\rm PT}$ is defined as
\[
  \bm{\htau}_{\rm PT} = \bm{\htau}_0 + (\bm{\Sigma}_0^{-1} + \bm{\Sigma}_1^{-1})^{-1} \bm{\Sigma}_1^{-1} (\bm{\htau}_1 - \bm{\htau}_0) \ind\left(\lVert (\bm{\Sigma}_0+ \bm{\Sigma}_1)^{-1/2}(\bm{\htau}_1 - \bm{\htau}_0) \rVert_2^2 \le q\right).
\]
A direct calculation shows that
\begin{align*}
  \bm{\htau}_0 + (\bm{\Sigma}_0^{-1} + \bm{\Sigma}_1^{-1})^{-1} \bm{\Sigma}_1^{-1} (\bm{\htau}_1 - \bm{\htau}_0) \indep \bm{\htau}_1 - \bm{\htau}_0.
\end{align*}
Moreover,
\[
  \bm{\htau}_1 - \bm{\htau}_0 \sim N(\bm{\Delta}, \bm{\Sigma}_0+ \bm{\Sigma}_1),
\]
and
\[
  \bm{\htau}_0 + (\bm{\Sigma}_0^{-1} + \bm{\Sigma}_1^{-1})^{-1} \bm{\Sigma}_1^{-1} (\bm{\htau}_1 - \bm{\htau}_0) \sim N(\bm{\tau} + (\bm{\Sigma}_0^{-1} + \bm{\Sigma}_1^{-1})^{-1} \bm{\Sigma}_1^{-1} \bm{\Delta}, (\bm{\Sigma}_0^{-1} + \bm{\Sigma}_1^{-1})^{-1} ).
\]

We first consider the event that the pretest accepts, $\lVert (\bm{\Sigma}_0+ \bm{\Sigma}_1)^{-1/2}(\bm{\htau}_1 - \bm{\htau}_0) \rVert_2^2 \le q$. Conditional on this event, $\bm{\htau}_{\rm PT} = \bm{\htau}_0 + (\bm{\Sigma}_0^{-1} + \bm{\Sigma}_1^{-1})^{-1} \bm{\Sigma}_1^{-1} (\bm{\htau}_1 - \bm{\htau}_0) $, and therefore
\begin{align*}
  (\bm{\htau}_{\rm PT} - \bm{\tau}) \given \{\lVert (\bm{\Sigma}_0+ \bm{\Sigma}_1)^{-1/2}(\bm{\htau}_1 - \bm{\htau}_0) \rVert_2^2 \le q\} \sim N((\bm{\Sigma}_0^{-1} + \bm{\Sigma}_1^{-1})^{-1} \bm{\Sigma}_1^{-1} \bm{\Delta}, (\bm{\Sigma}_0^{-1} + \bm{\Sigma}_1^{-1})^{-1} ).
\end{align*}
After standardization,
\begin{align*}
  &(\bm{\Sigma}_0^{-1} + \bm{\Sigma}_1^{-1})^{1/2} (\bm{\htau}_{\rm PT} - \bm{\tau}) \given \{\lVert (\bm{\Sigma}_0+ \bm{\Sigma}_1)^{-1/2}(\bm{\htau}_1 - \bm{\htau}_0) \rVert_2^2 \le q\} \\
  \sim& N((\bm{\Sigma}_0^{-1} + \bm{\Sigma}_1^{-1})^{-1/2} \bm{\Sigma}_1^{-1} \bm{\Delta}, \bm{I}_d),
\end{align*}
which implies
\begin{align*}
  &(\bm{\htau}_{\rm PT} - \bm{\tau})^\top (\bm{\Sigma}_0^{-1} + \bm{\Sigma}_1^{-1}) (\bm{\htau}_{\rm PT} - \bm{\tau}) \given \{\lVert (\bm{\Sigma}_0+ \bm{\Sigma}_1)^{-1/2}(\bm{\htau}_1 - \bm{\htau}_0) \rVert_2^2 \le q\} \\
  \sim& \chi^2_d\left(\left\lVert (\bm{\Sigma}_0^{-1} + \bm{\Sigma}_1^{-1})^{-1/2} \bm{\Sigma}_1^{-1} \bm{\Delta}\right\rVert_2^2\right).
\end{align*}

We next consider the event that the pretest rejects, $\lVert (\bm{\Sigma}_0+ \bm{\Sigma}_1)^{-1/2}(\bm{\htau}_1 - \bm{\htau}_0) \rVert_2^2 > q$. Conditional on $(\bm{\Sigma}_0+ \bm{\Sigma}_1)^{-1/2}(\bm{\htau}_1 - \bm{\htau}_0) = \bm{u}$, we have
\[
  (\bm{\Sigma}_0^{-1} + \bm{\Sigma}_1^{-1})^{-1} \bm{\Sigma}_1^{-1} (\bm{\htau}_1 - \bm{\htau}_0) = (\bm{\Sigma}_0^{-1} + \bm{\Sigma}_1^{-1})^{-1} \bm{\Sigma}_1^{-1} (\bm{\Sigma}_0+ \bm{\Sigma}_1)^{1/2} \bm{u}.
\]
Since $\bm{\htau}_{\rm PT} = \bm{\htau}_0$ conditional on this event, we have
\begin{align*}
  &(\bm{\htau}_{\rm PT} - \bm{\tau}) \given \{(\bm{\Sigma}_0+ \bm{\Sigma}_1)^{-1/2}(\bm{\htau}_1 - \bm{\htau}_0) = \bm{u}, \lVert (\bm{\Sigma}_0+ \bm{\Sigma}_1)^{-1/2}(\bm{\htau}_1 - \bm{\htau}_0) \rVert_2^2 > q\} \\
  =& \bm{\htau}_0 \given \{(\bm{\Sigma}_0+ \bm{\Sigma}_1)^{-1/2}(\bm{\htau}_1 - \bm{\htau}_0) = \bm{u}, \lVert (\bm{\Sigma}_0+ \bm{\Sigma}_1)^{-1/2}(\bm{\htau}_1 - \bm{\htau}_0) \rVert_2^2 > q\}\\
  =& \bm{\htau}_0 + (\bm{\Sigma}_0^{-1} + \bm{\Sigma}_1^{-1})^{-1} \bm{\Sigma}_1^{-1} (\bm{\htau}_1 - \bm{\htau}_0) \given \{(\bm{\Sigma}_0+ \bm{\Sigma}_1)^{-1/2}(\bm{\htau}_1 - \bm{\htau}_0) = \bm{u}, \lVert (\bm{\Sigma}_0+ \bm{\Sigma}_1)^{-1/2}(\bm{\htau}_1 - \bm{\htau}_0) \rVert_2^2 > q\}\\
  &- (\bm{\Sigma}_0^{-1} + \bm{\Sigma}_1^{-1})^{-1} \bm{\Sigma}_1^{-1} (\bm{\Sigma}_0+ \bm{\Sigma}_1)^{1/2} \bm{u}\\
  \sim& N((\bm{\Sigma}_0^{-1} + \bm{\Sigma}_1^{-1})^{-1} \bm{\Sigma}_1^{-1} [\bm{\Delta} - (\bm{\Sigma}_0+ \bm{\Sigma}_1)^{1/2} \bm{u}], (\bm{\Sigma}_0^{-1} + \bm{\Sigma}_1^{-1})^{-1} ).
\end{align*}
After standardization,
\begin{align*}
  &(\bm{\Sigma}_0^{-1} + \bm{\Sigma}_1^{-1})^{1/2} (\bm{\htau}_{\rm PT} - \bm{\tau}) \given \{(\bm{\Sigma}_0+ \bm{\Sigma}_1)^{-1/2}(\bm{\htau}_1 - \bm{\htau}_0) = \bm{u}, \lVert (\bm{\Sigma}_0+ \bm{\Sigma}_1)^{-1/2}(\bm{\htau}_1 - \bm{\htau}_0) \rVert_2^2 > q\} \\
  \sim& N((\bm{\Sigma}_0^{-1} + \bm{\Sigma}_1^{-1})^{-1/2} \bm{\Sigma}_1^{-1} [\bm{\Delta} - (\bm{\Sigma}_0+ \bm{\Sigma}_1)^{1/2} \bm{u}], \bm{I}_d),
\end{align*}
which implies
\begin{align*}
  &(\bm{\htau}_{\rm PT} - \bm{\tau})^\top (\bm{\Sigma}_0^{-1} + \bm{\Sigma}_1^{-1}) (\bm{\htau}_{\rm PT} - \bm{\tau}) \given \{(\bm{\Sigma}_0+ \bm{\Sigma}_1)^{-1/2}(\bm{\htau}_1 - \bm{\htau}_0) = \bm{u}, \lVert (\bm{\Sigma}_0+ \bm{\Sigma}_1)^{-1/2}(\bm{\htau}_1 - \bm{\htau}_0) \rVert_2^2 > q\} \\
  \sim& \chi^2_d\left(\left\lVert (\bm{\Sigma}_0^{-1} + \bm{\Sigma}_1^{-1})^{-1/2} \bm{\Sigma}_1^{-1} [\bm{\Delta} - (\bm{\Sigma}_0+ \bm{\Sigma}_1)^{1/2} \bm{u}]\right\rVert_2^2\right).
\end{align*}

We now evaluate the worst-case coverage probability of the confidence region
\[
  \{\bm{\tau}\in\mathbb{R}^d:(\bm{\htau}_{\rm PT}-\bm{\tau})^\top (\bm{\Sigma}_0^{-1} + \bm{\Sigma}_1^{-1})(\bm{\htau}_{\rm PT}-\bm{\tau})\le M\}.
\]
We decompose the coverage probability according to whether the
pretest accepts or rejects:
\begin{align*}
  &\P_{\bm{\Delta}}\left((\bm{\htau}_{\rm PT}-\bm{\tau})^\top (\bm{\Sigma}_0^{-1} + \bm{\Sigma}_1^{-1})(\bm{\htau}_{\rm PT}-\bm{\tau})\le M\right)\\
  =& \P_{\bm{\Delta}}\left((\bm{\htau}_{\rm PT}-\bm{\tau})^\top (\bm{\Sigma}_0^{-1} + \bm{\Sigma}_1^{-1})(\bm{\htau}_{\rm PT}-\bm{\tau})\le M, \lVert (\bm{\Sigma}_0+ \bm{\Sigma}_1)^{-1/2}(\bm{\htau}_1 - \bm{\htau}_0) \rVert_2^2 \le q\right) \\
  &+ \P_{\bm{\Delta}}\left((\bm{\htau}_{\rm PT}-\bm{\tau})^\top (\bm{\Sigma}_0^{-1} + \bm{\Sigma}_1^{-1})(\bm{\htau}_{\rm PT}-\bm{\tau})\le M, \lVert (\bm{\Sigma}_0+ \bm{\Sigma}_1)^{-1/2}(\bm{\htau}_1 - \bm{\htau}_0) \rVert_2^2 > q\right)\\
  =& \Psi\left(M;\left\lVert (\bm{\Sigma}_0^{-1} + \bm{\Sigma}_1^{-1})^{-1/2} \bm{\Sigma}_1^{-1} \bm{\Delta}\right\rVert_2^2\right) \Psi\left(q; \left\lVert (\bm{\Sigma}_0 + \bm{\Sigma}_1)^{-1/2} \bm{\Delta}\right\rVert_2^2\right)\\
  &+ \int_{\lVert \bm{u} \rVert_2^2 > q} \Psi\left(M;\left\lVert (\bm{\Sigma}_0^{-1} + \bm{\Sigma}_1^{-1})^{-1/2} \bm{\Sigma}_1^{-1} [\bm{\Delta} - (\bm{\Sigma}_0+ \bm{\Sigma}_1)^{1/2} \bm{u}]\right\rVert_2^2\right) \phi\left(\bm{u};(\bm{\Sigma}_0 + \bm{\Sigma}_1)^{-1/2} \bm{\Delta}\right) \d \bm{u},
\end{align*}
which yields exactly the expression in \eqref{eq:P_multi_PT} by taking $\bm{\Sigma}^{-1/2}\bm{\Delta} = \bm{t}$.

Finally, choosing $\hat M_{\rm PT}$ to satisfy
\[
  \inf_{\bm{\Delta}: \lvert \bm{\Sigma}^{-1/2} \bm{\Delta}\rvert \le \bm{b}} \P_{\bm{\Delta}}((\bm{\htau}_{\rm PT} - \bm{\tau})^\top (\bm{\Sigma}_0^{-1} + \bm{\Sigma}_1^{-1}) (\bm{\htau}_{\rm PT} - \bm{\tau}) \le M) = 1-\zeta
\]
completes the proof.

\end{proof}

\subsection{Proof of Theorem~\ref{prop:CI_multi_ST}}

\begin{proof}[Proof of Theorem~\ref{prop:CI_multi_ST}]
Recall that the soft-thresholding estimator $\bm{\htau}_{\rm ST}$ is defined as
\[
  \bm{\hat\tau}_{\rm ST} = \bm{\htau}_0 + h_q(\lVert (\bm{\Sigma}_0+ \bm{\Sigma}_1)^{-1/2}(\bm{\htau}_1 - \bm{\htau}_0) \rVert_2^2)(\bm{\Sigma}_0^{-1} + \bm{\Sigma}_1^{-1})^{-1}\bm{\Sigma}_1^{-1}(\bm{\htau}_1 - \bm{\htau}_0).
\]
A direct calculation shows that
\begin{align*}
  \bm{\htau}_0 + (\bm{\Sigma}_0^{-1} + \bm{\Sigma}_1^{-1})^{-1} \bm{\Sigma}_1^{-1} (\bm{\htau}_1 - \bm{\htau}_0) \indep \bm{\htau}_1 - \bm{\htau}_0.
\end{align*}
Moreover,
\[
  \bm{\htau}_1 - \bm{\htau}_0 \sim N(\bm{\Delta}, \bm{\Sigma}_0+ \bm{\Sigma}_1),
\]
and
\[
  \bm{\htau}_0 + (\bm{\Sigma}_0^{-1} + \bm{\Sigma}_1^{-1})^{-1} \bm{\Sigma}_1^{-1} (\bm{\htau}_1 - \bm{\htau}_0) \sim N(\bm{\tau} + (\bm{\Sigma}_0^{-1} + \bm{\Sigma}_1^{-1})^{-1} \bm{\Sigma}_1^{-1} \bm{\Delta}, (\bm{\Sigma}_0^{-1} + \bm{\Sigma}_1^{-1})^{-1} ).
\]

We first consider the event that the pretest accepts, $\lVert (\bm{\Sigma}_0+ \bm{\Sigma}_1)^{-1/2}(\bm{\htau}_1 - \bm{\htau}_0) \rVert_2^2 \le q$. Conditional on this event, $\bm{\htau}_{\rm ST} = \bm{\htau}_0 + (\bm{\Sigma}_0^{-1} + \bm{\Sigma}_1^{-1})^{-1} \bm{\Sigma}_1^{-1} (\bm{\htau}_1 - \bm{\htau}_0) $, and therefore
\begin{align*}
  (\bm{\htau}_{\rm ST} - \bm{\tau}) \given \{\lVert (\bm{\Sigma}_0+ \bm{\Sigma}_1)^{-1/2}(\bm{\htau}_1 - \bm{\htau}_0) \rVert_2^2 \le q\} \sim N((\bm{\Sigma}_0^{-1} + \bm{\Sigma}_1^{-1})^{-1} \bm{\Sigma}_1^{-1} \bm{\Delta}, (\bm{\Sigma}_0^{-1} + \bm{\Sigma}_1^{-1})^{-1} ).
\end{align*}
After standardization,
\begin{align*}
  &(\bm{\Sigma}_0^{-1} + \bm{\Sigma}_1^{-1})^{1/2} (\bm{\htau}_{\rm ST} - \bm{\tau}) \given \{\lVert (\bm{\Sigma}_0+ \bm{\Sigma}_1)^{-1/2}(\bm{\htau}_1 - \bm{\htau}_0) \rVert_2^2 \le q\} \\
  \sim& N((\bm{\Sigma}_0^{-1} + \bm{\Sigma}_1^{-1})^{-1/2} \bm{\Sigma}_1^{-1} \bm{\Delta}, \bm{I}_d),
\end{align*}
which implies
\begin{align*}
  &(\bm{\htau}_{\rm ST} - \bm{\tau})^\top (\bm{\Sigma}_0^{-1} + \bm{\Sigma}_1^{-1}) (\bm{\htau}_{\rm ST} - \bm{\tau}) \given \{\lVert (\bm{\Sigma}_0+ \bm{\Sigma}_1)^{-1/2}(\bm{\htau}_1 - \bm{\htau}_0) \rVert_2^2 \le q\} \\
  \sim& \chi^2_d\left(\left\lVert (\bm{\Sigma}_0^{-1} + \bm{\Sigma}_1^{-1})^{-1/2} \bm{\Sigma}_1^{-1} \bm{\Delta}\right\rVert_2^2\right).
\end{align*}

We next consider the event that the pretest rejects, $\lVert (\bm{\Sigma}_0+ \bm{\Sigma}_1)^{-1/2}(\bm{\htau}_1 - \bm{\htau}_0) \rVert_2^2 > q$. Conditional on $(\bm{\Sigma}_0+ \bm{\Sigma}_1)^{-1/2}(\bm{\htau}_1 - \bm{\htau}_0) = \bm{u}$, we have
\[
  (\bm{\Sigma}_0^{-1} + \bm{\Sigma}_1^{-1})^{-1} \bm{\Sigma}_1^{-1} (\bm{\htau}_1 - \bm{\htau}_0) = (\bm{\Sigma}_0^{-1} + \bm{\Sigma}_1^{-1})^{-1} \bm{\Sigma}_1^{-1} (\bm{\Sigma}_0+ \bm{\Sigma}_1)^{1/2} \bm{u}.
\]
Since $\bm{\htau}_{\rm ST} = \bm{\htau}_0 + h_q^*(\lVert (\bm{\Sigma}_0+ \bm{\Sigma}_1)^{-1/2}(\bm{\htau}_1 - \bm{\htau}_0) \rVert_2^2)(\bm{\Sigma}_0^{-1} + \bm{\Sigma}_1^{-1})^{-1}\bm{\Sigma}_1^{-1}(\bm{\htau}_1 - \bm{\htau}_0)$ conditional on this event, we have
\begin{align*}
  &(\bm{\htau}_{\rm ST} - \bm{\tau}) \given \{(\bm{\Sigma}_0+ \bm{\Sigma}_1)^{-1/2}(\bm{\htau}_1 - \bm{\htau}_0) = \bm{u}, \lVert (\bm{\Sigma}_0+ \bm{\Sigma}_1)^{-1/2}(\bm{\htau}_1 - \bm{\htau}_0) \rVert_2^2 > q\} \\
  =& \bm{\htau}_0 + h_q^*(\lVert \bm{u} \rVert_2^2)(\bm{\Sigma}_0^{-1} + \bm{\Sigma}_1^{-1})^{-1}\bm{\Sigma}_1^{-1}(\bm{\htau}_1 - \bm{\htau}_0) \given \{(\bm{\Sigma}_0+ \bm{\Sigma}_1)^{-1/2}(\bm{\htau}_1 - \bm{\htau}_0) = \bm{u}, \lVert (\bm{\Sigma}_0+ \bm{\Sigma}_1)^{-1/2}(\bm{\htau}_1 - \bm{\htau}_0) \rVert_2^2 > q\}\\
  =& \bm{\htau}_0 + (\bm{\Sigma}_0^{-1} + \bm{\Sigma}_1^{-1})^{-1} \bm{\Sigma}_1^{-1} (\bm{\htau}_1 - \bm{\htau}_0) \given \{(\bm{\Sigma}_0+ \bm{\Sigma}_1)^{-1/2}(\bm{\htau}_1 - \bm{\htau}_0) = \bm{u}, \lVert (\bm{\Sigma}_0+ \bm{\Sigma}_1)^{-1/2}(\bm{\htau}_1 - \bm{\htau}_0) \rVert_2^2 > q\}\\
  &- (1-h_q^*(\lVert \bm{u} \rVert_2^2))(\bm{\Sigma}_0^{-1} + \bm{\Sigma}_1^{-1})^{-1} \bm{\Sigma}_1^{-1} (\bm{\Sigma}_0+ \bm{\Sigma}_1)^{1/2} \bm{u}\\
  \sim& N((\bm{\Sigma}_0^{-1} + \bm{\Sigma}_1^{-1})^{-1} \bm{\Sigma}_1^{-1} [\bm{\Delta} - (1-h_q^*(\lVert \bm{u} \rVert_2^2))(\bm{\Sigma}_0+ \bm{\Sigma}_1)^{1/2} \bm{u}], (\bm{\Sigma}_0^{-1} + \bm{\Sigma}_1^{-1})^{-1} ).
\end{align*}
After standardization,
\begin{align*}
  &(\bm{\Sigma}_0^{-1} + \bm{\Sigma}_1^{-1})^{1/2} (\bm{\htau}_{\rm ST} - \bm{\tau}) \given \{(\bm{\Sigma}_0+ \bm{\Sigma}_1)^{-1/2}(\bm{\htau}_1 - \bm{\htau}_0) = \bm{u}, \lVert (\bm{\Sigma}_0+ \bm{\Sigma}_1)^{-1/2}(\bm{\htau}_1 - \bm{\htau}_0) \rVert_2^2 > q\} \\
  \sim& N((\bm{\Sigma}_0^{-1} + \bm{\Sigma}_1^{-1})^{-1/2} \bm{\Sigma}_1^{-1} [\bm{\Delta} - (1-h_q^*(\lVert \bm{u} \rVert_2^2))(\bm{\Sigma}_0+ \bm{\Sigma}_1)^{1/2} \bm{u}], \bm{I}_d),
\end{align*}
which implies
\begin{align*}
  &(\bm{\htau}_{\rm ST} - \bm{\tau})^\top (\bm{\Sigma}_0^{-1} + \bm{\Sigma}_1^{-1}) (\bm{\htau}_{\rm ST} - \bm{\tau}) \given \{(\bm{\Sigma}_0+ \bm{\Sigma}_1)^{-1/2}(\bm{\htau}_1 - \bm{\htau}_0) = \bm{u}, \lVert (\bm{\Sigma}_0+ \bm{\Sigma}_1)^{-1/2}(\bm{\htau}_1 - \bm{\htau}_0) \rVert_2^2 > q\} \\
  \sim& \chi^2_d\left(\left\lVert (\bm{\Sigma}_0^{-1} + \bm{\Sigma}_1^{-1})^{-1/2} \bm{\Sigma}_1^{-1} [\bm{\Delta} - (1-h_q^*(\lVert \bm{u} \rVert_2^2))(\bm{\Sigma}_0+ \bm{\Sigma}_1)^{1/2} \bm{u}]\right\rVert_2^2\right).
\end{align*}

We now evaluate the worst-case coverage probability of the confidence region
\[
  \{\bm{\tau}\in\mathbb{R}^d:(\bm{\htau}_{\rm ST}-\bm{\tau})^\top (\bm{\Sigma}_0^{-1} + \bm{\Sigma}_1^{-1})(\bm{\htau}_{\rm ST}-\bm{\tau})\le M\}.
\]
We decompose the coverage probability according to whether the
pretest accepts or rejects:
\begin{align*}
  &\P_{\bm{\Delta}}\left((\bm{\htau}_{\rm ST}-\bm{\tau})^\top (\bm{\Sigma}_0^{-1} + \bm{\Sigma}_1^{-1})(\bm{\htau}_{\rm ST}-\bm{\tau})\le M\right)\\
  =& \P_{\bm{\Delta}}\left((\bm{\htau}_{\rm ST}-\bm{\tau})^\top (\bm{\Sigma}_0^{-1} + \bm{\Sigma}_1^{-1})(\bm{\htau}_{\rm ST}-\bm{\tau})\le M, \lVert (\bm{\Sigma}_0+ \bm{\Sigma}_1)^{-1/2}(\bm{\htau}_1 - \bm{\htau}_0) \rVert_2^2 \le q\right) \\
  &+ \P_{\bm{\Delta}}\left((\bm{\htau}_{\rm ST}-\bm{\tau})^\top (\bm{\Sigma}_0^{-1} + \bm{\Sigma}_1^{-1})(\bm{\htau}_{\rm ST}-\bm{\tau})\le M, \lVert (\bm{\Sigma}_0+ \bm{\Sigma}_1)^{-1/2}(\bm{\htau}_1 - \bm{\htau}_0) \rVert_2^2 > q\right)\\
  =& \Psi\left(M;\left\lVert (\bm{\Sigma}_0^{-1} + \bm{\Sigma}_1^{-1})^{-1/2} \bm{\Sigma}_1^{-1} \bm{\Delta}\right\rVert_2^2\right) \Psi\left(q; \left\lVert (\bm{\Sigma}_0 + \bm{\Sigma}_1)^{-1/2} \bm{\Delta}\right\rVert_2^2\right)\\
  &+ \int_{\lVert \bm{u} \rVert_2^2 > q} \Psi\left(M;\left\lVert (\bm{\Sigma}_0^{-1} + \bm{\Sigma}_1^{-1})^{-1/2} \bm{\Sigma}_1^{-1} [\bm{\Delta} - (1-h_q^*(\lVert \bm{u} \rVert_2^2))(\bm{\Sigma}_0+ \bm{\Sigma}_1)^{1/2} \bm{u}]\right\rVert_2^2\right) \phi\left(\bm{u};(\bm{\Sigma}_0 + \bm{\Sigma}_1)^{-1/2} \bm{\Delta}\right) \d \bm{u},
\end{align*}
which yields exactly the expression in \eqref{eq:P_multi_ST} by taking $\bm{\Sigma}^{-1/2}\bm{\Delta} = \bm{t}$.

Let $\bm{t}=\bm{\Sigma}^{-1/2}\bm{\Delta}$ and denote by
\[
p(\bm{t})
:=\P_{\bm{\Sigma}^{-1/2}\bm{\Delta}=\bm{t}}\!\left(
(\bm{\htau}_{\rm ST}-\bm{\tau})^\top(\bm{\Sigma}_0^{-1}+\bm{\Sigma}_1^{-1})
(\bm{\htau}_{\rm ST}-\bm{\tau})\le M
\right).
\]
By construction of the soft-thresholding rule,
$p(\bm{t})$ is invariant under coordinate-wise sign flips, i.e.,
$p(\bm{t})=p(\bm{t}\odot\bm{s})$ for all $\bm{s}\in\{\pm1\}^d$.
Moreover, for each $j\in\{1,\ldots,d\}$ and any fixed values of the remaining coordinates,
the map $u\mapsto p(t_1,\ldots,t_{j-1},u,t_{j+1},\ldots,t_d)$ is nonincreasing in $\lvert u \rvert$.
Therefore,
\[
\inf_{\bm{t}:|t_j|\le b_j} p(\bm{t})
=\inf_{\bm{s}\in\{\pm1\}^d} p(\bm{b}\odot\bm{s}),
\]
i.e., the worst-case coverage probability over the hyperrectangle is attained at a vertex.

Finally, choosing $\hat M_{\rm ST}$ to satisfy
\[
  \inf_{\bm{\Delta}: \bm{\Sigma}^{-1/2} \bm{\Delta} = \bm{b} \odot \bm{s}, \bm{s} \in \{-1,1\}^d} \P_{\bm{\Delta}}((\bm{\htau}_{\rm ST} - \bm{\tau})^\top (\bm{\Sigma}_0^{-1} + \bm{\Sigma}_1^{-1}) (\bm{\htau}_{\rm ST} - \bm{\tau}) \le M) = 1-\zeta
\]
completes the proof.

\end{proof}

\subsection{Proof of Theorem~\ref{prop:L_PW_multi}}

\begin{proof}[Proof of Theorem~\ref{prop:L_PW_multi}]
By the definition of $\htau_{\rm PW}$, under Assumption~\ref{ass:normal_multi_data}, we have
\begin{align*}
  \htau_{\rm PW} \sim N\left(\tau +  \sum_{j=1}^K \frac{\gamma_j}{1+\lVert \bm{\gamma} \rVert_1} \Delta_j, \frac{1}{1+\lVert \bm{\gamma} \rVert_1} \sigma_0^2\right).
\end{align*}
Equivalently, after standardization,
\[
(1+\lVert \bm{\gamma}\rVert_1)^{1/2}\sigma_0^{-1}(\htau_{\rm PW}-\tau) \sim N\left(\sum_{j=1}^K \frac{\gamma_j}{\sqrt{1+\lVert \bm{\gamma}\rVert_1}}\frac{\Delta_j}{\sigma_0},1\right).
\]

We now evaluate the worst-case coverage probability of the confidence interval
\[
[\htau_{\rm PW} - L (1+\lVert \bm{\gamma} \rVert_1)^{-1/2} \sigma_0, \htau_{\rm PW} + L (1+\lVert \bm{\gamma} \rVert_1)^{-1/2} \sigma_0].
\]
We have
\begin{align*}
  & \inf_{\bm{\Delta}: \lvert \Delta_j/\sigma_0 \rvert \le b_j, \forall j=1,\ldots,K} \P_{\bm{\Delta}}(\lvert \htau_{\rm PW}- \tau \rvert \le L (1+\lVert \bm{\gamma} \rVert_1)^{-1/2} \sigma_0)\\
  =& \inf_{\bm{\Delta}: \lvert \Delta_j/\sigma_0 \rvert \le b_j, \forall j=1,\ldots,K} \P_{\bm{\Delta}}\left(\left\lvert N\left(\sum_{j=1}^K \frac{\gamma_j}{\sqrt{1+\lVert \bm{\gamma} \rVert_1}} \frac{\Delta_j}{\sigma_0}, 1\right) \right\rvert \le L \right)\\
  =& \P\left(\left\lvert N\left(\sup_{\bm{\Delta}: \lvert \Delta_j/\sigma_0 \rvert \le b_j, \forall j=1,\ldots,K} \sum_{j=1}^K \frac{\gamma_j}{\sqrt{1+\lVert \bm{\gamma} \rVert_1}} \frac{\Delta_j}{\sigma_0}, 1\right) \right\rvert \le L \right)\\
  =& \P\left(\left\lvert N\left(\frac{\langle \bm{\gamma}, \bm{b} \rangle}{\sqrt{1+\lVert \bm{\gamma} \rVert_1}}, 1\right) \right\rvert \le L \right)\\
  =& \Phi\left(L - \frac{\langle \bm{\gamma}, \bm{b} \rangle}{\sqrt{1+\lVert \bm{\gamma} \rVert_1}}  \right) - \Phi\left(-L - \frac{\langle \bm{\gamma}, \bm{b} \rangle}{\sqrt{1+\lVert \bm{\gamma} \rVert_1}} \right).
\end{align*}

Therefore, the proof is complete since $\hat{L}_{\rm PW}$ is the solution to
\[
  \Phi\left(L - \frac{\langle \bm{\gamma}, \bm{b} \rangle}{\sqrt{1+\lVert \bm{\gamma} \rVert_1}}  \right) - \Phi\left(-L - \frac{\langle \bm{\gamma}, \bm{b} \rangle}{\sqrt{1+\lVert \bm{\gamma} \rVert_1}} \right) = 1-\zeta.
\] 

\end{proof}

\subsection{Proof of Theorem~\ref{thm:L_PT_multi}}

\begin{proof}[Proof of Theorem~\ref{thm:L_PT_multi}]
Recall that the pretest estimator $\htau_{\rm PT}$ is defined as
\begin{align*}
  \htau_{\rm PT} = \htau_0 + \sum_{j=1}^K \frac{\gamma_j}{1+\lVert \bm{\gamma} \rVert_1} (\htau_j - \htau_0) \ind\left(\lvert \htau_j - \htau_0 \rvert \le (1+\gamma_j^{-1})^{1/2}\sigma_0 c_{\alpha/2}\right).
\end{align*}
Then
\begin{align*}
  (\htau_1 - \htau_0, \ldots, \htau_K - \htau_0)^\top/\sigma_0 \sim N(\bm{\Delta}/\sigma_0, \bm{V}), \qquad \bm{V}_{ij}=1+\gamma_i^{-1}\ind(i=j),
\end{align*}
and moreover
\begin{align*}
  \htau_0 + \sum_{j=1}^K \frac{\gamma_j}{1+\lVert \bm{\gamma} \rVert_1} (\htau_j - \htau_0) \indep  (\htau_1 - \htau_0, \ldots, \htau_K - \htau_0)^\top.
\end{align*}

Rewrite $\htau_{\rm PT}$ as
\begin{align*}
  &\htau_{\rm PT} = \htau_0 + \sum_{j=1}^K \frac{\gamma_j}{1+\lVert \bm{\gamma} \rVert_1} (\htau_j - \htau_0) \ind\left(\lvert \htau_j - \htau_0 \rvert \le (1+\gamma_j^{-1})^{1/2}\sigma_0 c_{\alpha/2}\right)\\
  =& \htau_0 + \sum_{j=1}^K \frac{\gamma_j}{1+\lVert \bm{\gamma} \rVert_1} (\htau_j - \htau_0) - \sum_{j=1}^K \frac{\gamma_j}{1+\lVert \bm{\gamma} \rVert_1} (\htau_j - \htau_0) \ind\left(\lvert \htau_j - \htau_0 \rvert > (1+\gamma_j^{-1})^{1/2}\sigma_0 c_{\alpha/2}\right).
\end{align*}
Consequently, conditioning on $(\htau_1 - \htau_0, \ldots, \htau_K - \htau_0)^\top/\sigma_0=\bm{u}$ yields
\begin{align*}
  &\htau_{\rm PT} \given \{(\htau_1 - \htau_0, \ldots, \htau_K - \htau_0)^\top/\sigma_0 = \bm{u}\}\\
  =& \htau_0 + \sum_{j=1}^K \frac{\gamma_j}{1+\lVert \bm{\gamma} \rVert_1} (\htau_j - \htau_0) - \sum_{j: \lvert u_j \rvert > (1+\gamma_j^{-1})^{1/2} c_{\alpha/2}} \frac{\gamma_j}{1+\lVert \bm{\gamma} \rVert_1} \sigma_0 u_j,\\
  \sim& N\left(\tau + \sum_{j=1}^K \frac{\gamma_j}{1+\lVert \bm{\gamma} \rVert_1} \Delta_j - \sum_{j: \lvert u_j \rvert > (1+\gamma_j^{-1})^{1/2} c_{\alpha/2}} \frac{\gamma_j}{1+\lVert \bm{\gamma} \rVert_1} \sigma_0 u_j, \frac{1}{1+\lVert \bm{\gamma} \rVert_1} \sigma_0^2\right).
\end{align*}
Equivalently,
\begin{align*}
  &(1+\lVert \bm{\gamma}\rVert_1)^{1/2}\sigma_0^{-1}(\htau_{\rm PT} - \tau)\given \{(\htau_1 - \htau_0, \ldots, \htau_K - \htau_0)^\top/\sigma_0 = \bm{u}\} \\
  \sim& N\left(\sum_{j=1}^K \frac{\gamma_j}{\sqrt{1+\lVert \bm{\gamma} \rVert_1}} \frac{\Delta_j}{\sigma_0} - \sum_{j: \lvert u_j \rvert > (1+\gamma_j^{-1})^{1/2} c_{\alpha/2}} \frac{\gamma_j}{\sqrt{1+\lVert \bm{\gamma} \rVert_1}} u_j, 1\right).
\end{align*}

We now evaluate the worst-case coverage probability of the confidence interval
\[
  [\htau_{\rm PT} - L (1+\lVert \bm{\gamma} \rVert_1)^{-1/2} \sigma_0, \htau_{\rm PT} + L (1+\lVert \bm{\gamma} \rVert_1)^{-1/2} \sigma_0].
\]
Define $\bm{u}'\in\bR^K$ coordinate-wise by $u_j' = u_j \ind(\lvert u_j \rvert > (1+\gamma_j^{-1})^{1/2} c_{\alpha/2})$ for $j=1,\ldots,K$. We have
\begin{align*}
  &\P_{\bm{\Delta}/\sigma_0 = \bm{t}}(\lvert \htau_{\rm PT}- \tau \rvert \le L (1+\lVert \bm{\gamma} \rVert_1)^{-1/2} \sigma_0)\\
  =& \int_{\bR^K} \P_{\bm{\Delta}/\sigma_0 = \bm{t}}\left(\lvert \htau_{\rm PT}- \tau \rvert \le L (1+\lVert \bm{\gamma} \rVert_1)^{-1/2} \sigma_0 \given (\htau_1 - \htau_0, \ldots, \htau_K - \htau_0)^\top/\sigma_0 = \bm{u}\right) \phi_{\bm{t}, \bm{V}}(\bm{u}) \d \bm{u}\\
  =& \int_{\bR^K}
  \P\left(
    \left\lvert N\left(\frac{\langle \bm{\gamma},\bm{t}-\bm{u}'\rangle}{\sqrt{1+\lVert \bm{\gamma}\rVert_1}},1\right)\right\rvert
    \le L
  \right)
  \phi_{\bm{t},\bm{V}}(\bm{u})\,d\bm{u} \\
  =& \int_{\bR^K} \left[\Phi\left(L - \frac{\langle \bm{\gamma}, \bm{t} - \bm{u}' \rangle}{\sqrt{1+\lVert \bm{\gamma} \rVert_1}}  \right) - \Phi\left(-L - \frac{\langle \bm{\gamma}, \bm{t} - \bm{u}' \rangle}{\sqrt{1+\lVert \bm{\gamma} \rVert_1}} \right)\right] \phi_{\bm{t}, \bm{V}}(\bm{u}) \d \bm{u},
\end{align*}
which yields exactly the expression in the theorem.

Finally, choosing $\hat L_{\rm PT}$ to satisfy
\[
  \inf_{\bm{\Delta}: \lvert \bm{\Delta}/\sigma_0 \rvert \le \bm{b}} \P_{\bm{\Delta}}(\lvert \htau_{\rm PT}- \tau \rvert \le L (1+\lVert \bm{\gamma} \rVert_1)^{-1/2} \sigma_0) = 1-\zeta
\]
completes the proof.

\end{proof}

\subsection{Proof of Theorem~\ref{prop:L_ST_multi}}

\begin{proof}[Proof of Theorem~\ref{prop:L_ST_multi}]
Recall that the soft-thresholding estimator $\htau_{\rm ST}$ is defined as
\begin{align*}
  \htau_{\rm ST}= \htau_0 + \sum_{j=1}^K \frac{\gamma_j}{1+\lVert \bm{\gamma} \rVert_1} \Big[&(\htau_j - \htau_0) \ind\left(\lvert \htau_j - \htau_0 \rvert \le (1+\gamma_j^{-1})^{1/2}\sigma_0 c_{\alpha/2}\right) \\
  &+ (1+\gamma_j^{-1})^{1/2}\sigma_0 c_{\alpha/2} {\rm sign}(\htau_j - \htau_0) \ind\left(\lvert \htau_j - \htau_0 \rvert > (1+\gamma_j^{-1})^{1/2}\sigma_0 c_{\alpha/2}\right) \Big].
\end{align*}
Then
\begin{align*}
  (\htau_1 - \htau_0, \ldots, \htau_K - \htau_0)^\top/\sigma_0 \sim N(\bm{\Delta}/\sigma_0, \bm{V}), \qquad \bm{V}_{ij}=1+\gamma_i^{-1}\ind(i=j),
\end{align*}
and moreover
\begin{align*}
  \htau_0 + \sum_{j=1}^K \frac{\gamma_j}{1+\lVert \bm{\gamma} \rVert_1} (\htau_j - \htau_0) \indep  (\htau_1 - \htau_0, \ldots, \htau_K - \htau_0)^\top.
\end{align*}

Rewrite $\htau_{\rm ST}$ as
\begin{align*}
  \htau_{\rm ST}=& \htau_0 + \sum_{j=1}^K \frac{\gamma_j}{1+\lVert \bm{\gamma} \rVert_1} \Big[(\htau_j - \htau_0) \ind\left(\lvert \htau_j - \htau_0 \rvert \le (1+\gamma_j^{-1})^{1/2}\sigma_0 c_{\alpha/2}\right) \\
  &+ (1+\gamma_j^{-1})^{1/2}\sigma_0 c_{\alpha/2} {\rm sign}(\htau_j - \htau_0) \ind\left(\lvert \htau_j - \htau_0 \rvert > (1+\gamma_j^{-1})^{1/2}\sigma_0 c_{\alpha/2}\right) \Big]\\
  =& \htau_0 + \sum_{j=1}^K \frac{\gamma_j}{1+\lVert \bm{\gamma} \rVert_1} (\htau_j - \htau_0) \\
  &- \sum_{j=1}^K \frac{\gamma_j}{1+\lVert \bm{\gamma} \rVert_1} \left[(\htau_j - \htau_0) - (1+\gamma_j^{-1})^{1/2}\sigma_0 c_{\alpha/2} {\rm sign}(\htau_j - \htau_0)\right] \ind\left(\lvert \htau_j - \htau_0 \rvert > (1+\gamma_j^{-1})^{1/2}\sigma_0 c_{\alpha/2}\right).
\end{align*}
Consequently, conditioning on $(\htau_1 - \htau_0, \ldots, \htau_K - \htau_0)^\top/\sigma_0=\bm{u}$ yields
\begin{align*}
  &\htau_{\rm ST} \given \{(\htau_1 - \htau_0, \ldots, \htau_K - \htau_0)^\top/\sigma_0 = \bm{u}\}\\
  =& \htau_0 + \sum_{j=1}^K \frac{\gamma_j}{1+\lVert \bm{\gamma} \rVert_1} (\htau_j - \htau_0) - \sum_{j: \lvert u_j \rvert > (1+\gamma_j^{-1})^{1/2} c_{\alpha/2}} \frac{\gamma_j}{1+\lVert \bm{\gamma} \rVert_1} \sigma_0 \left[u_j - (1+\gamma_j^{-1})^{1/2} c_{\alpha/2} {\rm sign}(u_j)\right]\\
  \sim& N\left(\tau + \sum_{j=1}^K \frac{\gamma_j}{1+\lVert \bm{\gamma} \rVert_1} \Delta_j - \sum_{j: \lvert u_j \rvert > (1+\gamma_j^{-1})^{1/2} c_{\alpha/2}} \frac{\gamma_j}{1+\lVert \bm{\gamma} \rVert_1} \sigma_0 \left[u_j - (1+\gamma_j^{-1})^{1/2} c_{\alpha/2} {\rm sign}(u_j)\right], \frac{1}{1+\lVert \bm{\gamma} \rVert_1} \sigma_0^2\right).
\end{align*}
Equivalently,
\begin{align*}
  &(\htau_{\rm ST} - \tau)/[(1+\lVert \bm{\gamma} \rVert_1)^{-1/2} \sigma_0] \given \{(\htau_1 - \htau_0, \ldots, \htau_K - \htau_0)^\top/\sigma_0 = \bm{u}\} \\
  \sim& N\left(\sum_{j=1}^K \frac{\gamma_j}{\sqrt{1+\lVert \bm{\gamma} \rVert_1}} \frac{\Delta_j}{\sigma_0} - \sum_{j: \lvert u_j \rvert > (1+\gamma_j^{-1})^{1/2} c_{\alpha/2}} \frac{\gamma_j}{\sqrt{1+\lVert \bm{\gamma} \rVert_1}} \left[u_j - (1+\gamma_j^{-1})^{1/2} c_{\alpha/2} {\rm sign}(u_j)\right], 1\right).
\end{align*}

We now evaluate the worst-case coverage probability of the confidence interval
\[
  [\htau_{\rm ST} - L (1+\lVert \bm{\gamma} \rVert_1)^{-1/2} \sigma_0, \htau_{\rm ST} + L (1+\lVert \bm{\gamma} \rVert_1)^{-1/2} \sigma_0].
\]
Define $\bm{u}'\in\bR^K$ coordinate-wise by $u_j' = [u_j - (1+\gamma_j^{-1})^{1/2} c_{\alpha/2} {\rm sign}(u_j)] \ind(\lvert u_j \rvert > (1+\gamma_j^{-1})^{1/2} c_{\alpha/2})$ for $j=1,\ldots,K$. We have
\begin{align*}
  &\P_{\bm{\Delta}/\sigma_0 = \bm{t}}(\lvert \htau_{\rm ST}- \tau \rvert \le L (1+\lVert \bm{\gamma} \rVert_1)^{-1/2} \sigma_0)\\
  =& \int_{\bR^K} \P_{\bm{\Delta}/\sigma_0 = \bm{t}}\left(\lvert \htau_{\rm ST}- \tau \rvert \le L (1+\lVert \bm{\gamma} \rVert_1)^{-1/2} \sigma_0 \given (\htau_1 - \htau_0, \ldots, \htau_K - \htau_0)^\top/\sigma_0 = \bm{u}\right) \phi_{\bm{t}, \bm{V}}(\bm{u}) \d \bm{u}\\
  =& \int_{\bR^K}
  \P\left(
    \left\lvert N\left(\frac{\langle \bm{\gamma},\bm{t}-\bm{u}'\rangle}{\sqrt{1+\lVert \bm{\gamma}\rVert_1}},1\right)\right\rvert
    \le L
  \right)
  \phi_{\bm{t},\bm{V}}(\bm{u})\,d\bm{u} \\
  =& \int_{\bR^K} \left[\Phi\left(L - \frac{\langle \bm{\gamma}, \bm{t} - \bm{u}' \rangle}{\sqrt{1+\lVert \bm{\gamma} \rVert_1}}  \right) - \Phi\left(-L - \frac{\langle \bm{\gamma}, \bm{t} - \bm{u}' \rangle}{\sqrt{1+\lVert \bm{\gamma} \rVert_1}} \right)\right] \phi_{\bm{t}, \bm{V}}(\bm{u}) \d \bm{u},
\end{align*}
which yields exactly the expression in the theorem.

Let $\bm{t}=\bm{\Sigma}^{-1/2}\bm{\Delta}$ and denote by
\[
p(\bm{t})
:=\P_{\bm{\Delta}/\sigma_0=\bm{t}}\left(
\lvert \htau_{\rm ST}-\tau \rvert
\le L(1+\lVert\bm{\gamma}\rVert_1)^{-1/2}\sigma_0
\right).
\]
Analogous to the proof of Theorem~\ref{thm:L_ST}, we can show that $p(\bm{t})$ is symmetric about $\bm{t} = \bm{0}$ and monotonically decreasing in $\lvert t_j \rvert$ for all $j=1,\ldots,K$. Therefore, the worst-case coverage probability is attained at $\bm{t} = \bm{b}$.

Finally, choosing $\hat L_{\rm ST}$ to satisfy
\[
  \P_{\bm{\Delta}/\sigma_0 = \bm{b}}(\lvert \htau_{\rm ST}- \tau \rvert \le L (1+\lVert \bm{\gamma} \rVert_1)^{-1/2} \sigma_0) = 1-\zeta
\]
completes the proof.
  
\end{proof}

\section{Proofs of additional results}\label{sec:proof_additional}

\subsection{Proof of Lemma~\ref{lemma:clt_PT} and Lemma~\ref{lemma:clt_ST}}

\begin{proof}[Proof of Lemma~\ref{lemma:clt_PT} and Lemma~\ref{lemma:clt_ST}]

By Assumption~\ref{ass:normal},
\begin{align*}
  \begin{pmatrix} \htau_0-\tau \\ \htau_1-\tau \end{pmatrix} \sim N\left(\begin{pmatrix} 0 \\ \Delta\end{pmatrix}, \begin{pmatrix} \sigma_0^2 & 0 \\
    0 & \sigma_1^2
  \end{pmatrix}\right).
\end{align*}
Hence, by linearity of multivariate normals,
\begin{align*}
  \begin{pmatrix} \htau_0-\tau \\ \htau_1-\htau_0 \end{pmatrix} \sim N\left(\begin{pmatrix} 0 \\ \Delta\end{pmatrix}, \begin{pmatrix} \sigma_0^2 & -\sigma_0^2 \\
  -\sigma_0^2 & \sigma_0^2 + \sigma_1^2
  \end{pmatrix}\right).
\end{align*}
Another linear transformation yields
\begin{align*}
  \begin{pmatrix} \htau_0 + \frac{\gamma}{1+\gamma} (\htau_1-\htau_0)-\tau \\ \htau_1-\htau_0 \end{pmatrix} \sim N\left(\begin{pmatrix} \frac{\gamma}{1+\gamma} \Delta \\ \Delta\end{pmatrix}, \begin{pmatrix} \frac{1}{1+\gamma} \sigma_0^2 & 0 \\
  0 & \sigma_0^2 + \sigma_1^2
  \end{pmatrix}\right).
\end{align*}
Therefore, there exist independent standard normals $\cZ_1,\cZ_2$ such that
\[
\htau_0 + \frac{\gamma}{1+\gamma} (\htau_1-\htau_0)-\tau \stackrel{\sf d}{=} \frac{\gamma}{1+\gamma}\Delta + \frac{1}{\sqrt{1+\gamma}}\sigma_0\cZ_1,
\qquad
\htau_1-\htau_0 \stackrel{\sf d}{=} \Delta + \sigma \cZ_2.
\]

Next, by the definitions of $\htau_{\rm PT}$ and $\htau_{\rm ST}$,
\begin{align*}
\htau_{\rm PT} =& \htau_0 + \frac{\gamma}{1+\gamma} (\htau_1 - \htau_0) \ind\left(\lvert \htau_1 - \htau_0 \rvert \le \sigma c_{\alpha/2}\right)\\
=& \htau_0 + \frac{\gamma}{1+\gamma} (\htau_1 - \htau_0) - \frac{\gamma}{1+\gamma} (\htau_1 - \htau_0) \ind\left(\lvert \htau_1 - \htau_0 \rvert >  \sigma c_{\alpha/2}\right),
\end{align*}
and
\begin{align*}
\htau_{\rm ST}=& \htau_0 + \frac{\gamma}{1+\gamma} (\htau_1 - \htau_0)\ind\left(\lvert \htau_1 - \htau_0 \rvert \le \sigma c_{\alpha/2}\right) + \frac{\gamma}{1+\gamma} \sigma c_{\alpha/2}\,\sign(\htau_1 - \htau_0)\ind\left(\lvert \htau_1 - \htau_0 \rvert > \sigma c_{\alpha/2}\right)\\
=& \htau_0 + \frac{\gamma}{1+\gamma} (\htau_1 - \htau_0) - \frac{\gamma}{1+\gamma} [(\htau_1 - \htau_0) - \sigma c_{\alpha/2}\,\sign(\htau_1 - \htau_0)] \ind\left(\lvert \htau_1 - \htau_0 \rvert >  \sigma c_{\alpha/2}\right).
\end{align*}

For the pretest estimator,
\begin{align*}
  &\htau_{\rm PT} - \tau =\htau_0 + \frac{\gamma}{1+\gamma} (\htau_1 - \htau_0) - \tau - \frac{\gamma}{1+\gamma} (\htau_1 - \htau_0) \ind\left(\lvert \htau_1 - \htau_0 \rvert >  \sigma c_{\alpha/2}\right)\\
  \stackrel{\sf d}{=}& \frac{\gamma}{1+\gamma} \Delta + \frac{1}{\sqrt{1+\gamma}} \sigma_0 \cZ_1 - \frac{\gamma}{1+\gamma} (\Delta + \sigma \cZ_2) \ind\left(\left\lvert \cZ_2 + \frac{\Delta}{\sigma}\right\rvert > c_{\alpha/2}\right)\\
  =&   \frac{1}{\sqrt{1+\gamma}}\sigma_0\cZ_1 +  \frac{\gamma}{1+\gamma} \Delta\ind\left(\left\lvert \cZ_2 + \sqrt{\frac{\gamma}{1+\gamma}}\frac{\Delta}{\sigma_0}\right\rvert \le c_{\alpha/2}\right) - \sqrt{\frac{\gamma}{1+\gamma}}\sigma_0\cZ_2\ind\left(\left\lvert \cZ_2 + \sqrt{\frac{\gamma}{1+\gamma}}\frac{\Delta}{\sigma_0}\right\rvert > c_{\alpha/2}\right),
\end{align*}
which proves Lemma~\ref{lemma:clt_PT}.

Similarly, for the soft-thresholding estimator,
\begin{align*}
  &\htau_{\rm ST} - \tau =\htau_0 + \frac{\gamma}{1+\gamma} (\htau_1 - \htau_0) - \frac{\gamma}{1+\gamma} [(\htau_1 - \htau_0) - \sigma c_{\alpha/2}\,\sign(\htau_1 - \htau_0)] \ind\left(\lvert \htau_1 - \htau_0 \rvert >  \sigma c_{\alpha/2}\right)\\
  \stackrel{\sf d}{=}& \frac{\gamma}{1+\gamma} \Delta + \frac{1}{\sqrt{1+\gamma}} \sigma_0 \cZ_1 - \frac{\gamma}{1+\gamma} [(\Delta + \sigma \cZ_2) - \sigma c_{\alpha/2}\,\sign(\Delta + \sigma \cZ_2)] \ind\left(\left\lvert \cZ_2 + \frac{\Delta}{\sigma}\right\rvert > c_{\alpha/2}\right)\\
  =&\frac{1}{\sqrt{1+\gamma}}\sigma_0\cZ_1 +  \frac{\gamma}{1+\gamma} \Delta\ind\left(\left\lvert \cZ_2 + \sqrt{\frac{\gamma}{1+\gamma}}\frac{\Delta}{\sigma_0}\right\rvert \le c_{\alpha/2}\right) \\
  &- \sqrt{\frac{\gamma}{1+\gamma}} \sigma_0\left(\cZ_2 - c_{\alpha/2}\sign\left(\cZ_2 + \sqrt{\frac{\gamma}{1+\gamma}}\frac{\Delta}{\sigma_0}\right)\right)\ind\left(\left\lvert \cZ_2 + \sqrt{\frac{\gamma}{1+\gamma}}\frac{\Delta}{\sigma_0}\right\rvert > c_{\alpha/2}\right),
\end{align*}
which proves Lemma~\ref{lemma:clt_ST}.

\end{proof}

\subsection{Proof of Theorem~\ref{thm:b_PW}}

\begin{proof}[Proof of Theorem~\ref{thm:b_PW}]

By Theorem~\ref{thm:L_PW}, the shortest symmetric centered confidence interval based on $\hat\tau_{\rm PW}$ is given by $[\htau_{\rm PW} - \hat{L}_{\rm PW} (1+\gamma)^{-1/2} \sigma_0, \htau_{\rm PW} + \hat{L}_{\rm PW} (1+\gamma)^{-1/2} \sigma_0]$, where $\hat{L}_{\rm PW}$ is the unique solution to
\[
  \Phi\Big(L - \frac{\gamma}{\sqrt{1+\gamma}} b \Big) - \Phi\Big(-L - \frac{\gamma}{\sqrt{1+\gamma}} b \Big) = 1-\zeta.
\]
By the definition of the b-value,
\begin{align*}
  b^*_{\rm PW}(\zeta) =& \inf\left\{b \ge 0: 0 \in [\htau_{\rm PW} - \hat{L}_{\rm PW} (1+\gamma)^{-1/2} \sigma_0, \htau_{\rm PW} + \hat{L}_{\rm PW} (1+\gamma)^{-1/2} \sigma_0]\right\}\\
  =& \inf\left\{b \ge 0: \hat{L}_{\rm PW} \ge \frac{\lvert \htau_{\rm PW} \rvert}{(1+\gamma)^{-1/2} \sigma_0}\right\}
\end{align*}

Now evaluate the confidence interval at $b=0$. In this case $\hat L_{\rm PW}(0)$ solves
$\Phi(L)-\Phi(-L)=1-\zeta$. If
\[
  \Phi\Big(\frac{\lvert \htau_{\rm PW} \rvert}{(1+\gamma)^{-1/2} \sigma_0}\Big) - \Phi\Big(-\frac{\lvert \htau_{\rm PW} \rvert}{(1+\gamma)^{-1/2} \sigma_0}\Big) < 1-\zeta,
\]
then $\hat{L}_{\rm PW}(0) > \frac{\lvert \htau_{\rm PW} \rvert}{(1+\gamma)^{-1/2} \sigma_0}$ and therefore $b^*_{\rm PW}(\zeta)=0$.

Otherwise,
\[
  \Phi\Big(\frac{\lvert \htau_{\rm PW} \rvert}{(1+\gamma)^{-1/2} \sigma_0}\Big) - \Phi\Big(-\frac{\lvert \htau_{\rm PW} \rvert}{(1+\gamma)^{-1/2} \sigma_0}\Big) \ge 1-\zeta,
\]
so $b^*_{\rm PW}(\zeta) \in (0, \infty)$. By definition of the infimum,
at $b=b^*_{\rm PW}(\zeta)$ we must have the boundary condition
\[
\hat L_{\rm PW}(b^*_{\rm PW}(\zeta))
=
\frac{\lvert \htau_{\rm PW} \rvert}{(1+\gamma)^{-1/2}\sigma_0}.
\]
Plugging $L=\frac{\lvert \htau_{\rm PW} \rvert}{(1+\gamma)^{-1/2} \sigma_0}$ into the defining equation of $\hat{L}_{\rm PW}$ yields that $b = b^*_{\rm PW}(\zeta)$ solves
\[
  \Phi\Big(\frac{\lvert \htau_{\rm PW} \rvert}{(1+\gamma)^{-1/2} \sigma_0} - \frac{\gamma}{\sqrt{1+\gamma}} b \Big) - \Phi\Big(-\frac{\lvert \htau_{\rm PW} \rvert}{(1+\gamma)^{-1/2} \sigma_0}  - \frac{\gamma}{\sqrt{1+\gamma}} b \Big) = 1-\zeta,
\]
which is the desired characterization.

\end{proof}

\subsection{Proof of Theorem~\ref{thm:b_PT}}

\begin{proof}[Proof of Theorem~\ref{thm:b_PT}]

By Theorem~\ref{thm:L_PT}, the shortest symmetric centered confidence interval based on $\hat\tau_{\rm PT}$ is given by $[\htau_{\rm PT} - \hat{L}_{\rm PT} (1+\gamma)^{-1/2} \sigma_0, \htau_{\rm PT} + \hat{L}_{\rm PT} (1+\gamma)^{-1/2} \sigma_0]$, where $\hat{L}_{\rm PT}$ is the solution to
\begin{align*}
  \min_{0 \le t \le b}\P_{\Delta/\sigma_0 = t}(\lvert \htau_{\rm PT} - \tau \rvert \le L (1+\gamma)^{-1/2} \sigma_0) = 1-\zeta.
\end{align*}
By the definition of the b-value,
\begin{align*}
  b^*_{\rm PT}(\zeta) =& \inf\left\{b \ge 0: 0 \in [\htau_{\rm PT} - \hat{L}_{\rm PT} (1+\gamma)^{-1/2} \sigma_0, \htau_{\rm PT} + \hat{L}_{\rm PT} (1+\gamma)^{-1/2} \sigma_0]\right\}\\
  =& \inf\left\{b \ge 0: \hat{L}_{\rm PT} \ge \frac{\lvert \htau_{\rm PT} \rvert}{(1+\gamma)^{-1/2} \sigma_0}\right\}
\end{align*}

At $b=0$, the defining equation for $\hat{L}_{\rm PT}$ becomes $\P_{\Delta/\sigma_0 = 0}(\lvert \htau_{\rm PT} - \tau \rvert \le L (1+\gamma)^{-1/2} \sigma_0) = 1-\zeta$. If
\[
  \P_{\Delta/\sigma_0 = 0}(\lvert \tilde\tau_{\rm PT} - \tau \rvert \le \lvert \htau_{\rm PT} \rvert \given \htau_{\rm PT}) < 1-\zeta,
\]
then $\hat{L}_{\rm PT}(0) > \frac{\lvert \htau_{\rm PT} \rvert}{(1+\gamma)^{-1/2} \sigma_0}$ and therefore $b^*_{\rm PT}(\zeta)=0$.

At $b=\infty$, the defining equation for $\hat{L}_{\rm PT}$ becomes $\min_{t \ge 0}\P_{\Delta/\sigma_0 = t}(\lvert \htau_{\rm PT} - \tau \rvert \le L (1+\gamma)^{-1/2} \sigma_0) = 1-\zeta$. If
\[
  \min_{t \ge 0}\P_{\Delta/\sigma_0 = t}(\lvert \tilde\tau_{\rm PT} - \tau \rvert \le \lvert \htau_{\rm PT} \rvert \given \htau_{\rm PT}) > 1-\zeta,
\]
then $\hat{L}_{\rm PT}(\infty) < \frac{\lvert \htau_{\rm PT} \rvert}{(1+\gamma)^{-1/2} \sigma_0}$ and therefore $b^*_{\rm PT}(\zeta)=\infty$.

Otherwise,
\[
  \min_{t \ge 0}\P_{\Delta/\sigma_0 = t}(\lvert \tilde\tau_{\rm PT} - \tau \rvert \le \lvert \htau_{\rm PT} \rvert \given \htau_{\rm PT}) \le 1-\zeta,~~ {\rm and}~~ \P_{\Delta/\sigma_0 = 0}(\lvert \tilde\tau_{\rm PT} - \tau \rvert \le \lvert \htau_{\rm PT} \rvert \given \htau_{\rm PT}) \ge 1-\zeta,
\]
so $b^*_{\rm PW}(\zeta) \in (0, \infty)$. By definition of the infimum, at
$b=b^*_{\rm PT}(\zeta)$ we must have the boundary condition
\[
\hat L_{\rm PT}(b^*_{\rm PT}(\zeta))
=
\frac{\lvert \htau_{\rm PT} \rvert}{(1+\gamma)^{-1/2}\sigma_0}.
\]
Plugging $L=\frac{\lvert \htau_{\rm PT} \rvert}{(1+\gamma)^{-1/2} \sigma_0}$ into the defining equation of $\hat{L}_{\rm PT}$ yields that $b = b^*_{\rm PT}(\zeta)$ solves
\[
  \min_{0 \le t \le b} \P_{\Delta/\sigma_0 = t}(\lvert \tilde\tau_{\rm PT} - \tau \rvert \le \lvert \htau_{\rm PT} \rvert \given \htau_{\rm PT}) = 1-\zeta,
\]
which is the desired characterization.

The explicit form of $\P_{\Delta/\sigma_0 = t}(\lvert \tilde\tau_{\rm PT} - \tau \rvert \le \lvert \htau_{\rm PT} \rvert \given \htau_{\rm PT})$ is directly given by Theorem~\ref{thm:L_PT}.

\end{proof}

\subsection{Proof of Theorem~\ref{thm:b_ST}}

\begin{proof}[Proof of Theorem~\ref{thm:b_ST}]

By Theorem~\ref{thm:L_ST}, the shortest symmetric centered confidence interval based on $\hat\tau_{\rm ST}$ is given by $[\htau_{\rm ST} - \hat{L}_{\rm ST} (1+\gamma)^{-1/2} \sigma_0, \htau_{\rm ST} + \hat{L}_{\rm ST} (1+\gamma)^{-1/2} \sigma_0]$, where $\hat{L}_{\rm ST}$ is the solution to
\begin{align*}
  \P_{\Delta/\sigma_0 = b}(\lvert \htau_{\rm ST}- \tau \rvert \le L (1+\gamma)^{-1/2} \sigma_0) = 1-\zeta.
\end{align*}
By the definition of the b-value,
\begin{align*}
  b^*_{\rm ST}(\zeta) =& \inf\left\{b \ge 0: 0 \in [\htau_{\rm ST} - \hat{L}_{\rm ST} (1+\gamma)^{-1/2} \sigma_0, \htau_{\rm ST} + \hat{L}_{\rm ST} (1+\gamma)^{-1/2} \sigma_0]\right\}\\
  =& \inf\left\{b \ge 0: \hat{L}_{\rm ST} \ge \frac{\lvert \htau_{\rm ST} \rvert}{(1+\gamma)^{-1/2} \sigma_0}\right\}
\end{align*}

At $b=0$, the defining equation for $\hat{L}_{\rm ST}$ becomes $\P_{\Delta/\sigma_0 = 0}(\lvert \htau_{\rm ST} - \tau \rvert \le L (1+\gamma)^{-1/2} \sigma_0) = 1-\zeta$. If
\[
  \P_{\Delta/\sigma_0 = 0}(\lvert \tilde\tau_{\rm ST} - \tau \rvert \le \lvert \htau_{\rm ST} \rvert \given \htau_{\rm ST}) < 1-\zeta,
\]
then $\hat{L}_{\rm ST}(0) > \frac{\lvert \htau_{\rm ST} \rvert}{(1+\gamma)^{-1/2} \sigma_0}$ and therefore $b^*_{\rm ST}(\zeta)=0$.

At $b=\infty$, the defining equation for $\hat{L}_{\rm ST}$ becomes $\P_{\Delta/\sigma_0 = \infty}(\lvert \htau_{\rm ST}- \tau \rvert \le L (1+\gamma)^{-1/2} \sigma_0) = 1-\zeta$. If
\[
  \P_{\Delta/\sigma_0 = \infty}(\lvert \tilde\tau_{\rm ST} - \tau \rvert \le \lvert \htau_{\rm ST} \rvert \given \htau_{\rm ST}) > 1-\zeta,
\]
then $\hat{L}_{\rm ST}(\infty) < \frac{\lvert \htau_{\rm ST} \rvert}{(1+\gamma)^{-1/2} \sigma_0}$ and therefore $b^*_{\rm ST}(\zeta)=\infty$.

Otherwise,
\[
  \P_{\Delta/\sigma_0 = \infty}(\lvert \tilde\tau_{\rm ST} - \tau \rvert \le \lvert \htau_{\rm ST} \rvert \given \htau_{\rm ST}) \le 1-\zeta,~~ {\rm and}~~ \P_{\Delta/\sigma_0 = 0}(\lvert \tilde\tau_{\rm ST} - \tau \rvert \le \lvert \htau_{\rm ST} \rvert \given \htau_{\rm ST}) \ge 1-\zeta,
\]
so $b^*_{\rm ST}(\zeta) \in (0, \infty)$. By definition of the infimum, at
$b=b^*_{\rm ST}(\zeta)$ we must have the boundary condition
\[
\hat L_{\rm ST}(b^*_{\rm ST}(\zeta))
=
\frac{\lvert \htau_{\rm ST} \rvert}{(1+\gamma)^{-1/2}\sigma_0}.
\]
Plugging $L=\frac{\lvert \htau_{\rm ST} \rvert}{(1+\gamma)^{-1/2} \sigma_0}$ into the defining equation of $\hat{L}_{\rm ST}$ yields that $b = b^*_{\rm ST}(\zeta)$ solves
\[
  \P_{\Delta/\sigma_0 = b}(\lvert \tilde\tau_{\rm ST} - \tau \rvert \le \lvert \hat\tau_{\rm ST}\rvert \given \hat\tau_{\rm ST}) = 1-\zeta,
\]
which is the desired characterization.

The explicit form of $\P_{\Delta/\sigma_0 = t}(\lvert \tilde\tau_{\rm ST} - \tau \rvert \le \lvert \htau_{\rm ST} \rvert \given \htau_{\rm ST})$ is directly given by Theorem~\ref{thm:L_ST}.

\end{proof}

\subsection{Proof of Theorem~\ref{thm:oneside_PW}}

\begin{proof}[Proof of Theorem~\ref{thm:oneside_PW}]

We follow the argument in the proof of Theorem~\ref{thm:L_PW}. From that proof, we know that
\begin{align*}
(1+\gamma)^{1/2} \sigma_0^{-1} (\htau_{\rm PW} - \tau) \sim N\left(\frac{\gamma}{\sqrt{1+\gamma}} \frac{\Delta}{\sigma_0}, 1\right).
\end{align*}

Consider the lower confidence bound
\[
  [\htau_{\rm PW} - L (1+\gamma)^{-1/2} \sigma_0, \infty).
\]
The worst-case coverage probability of the lower confidence bound is
\begin{align*}
  & \inf_{\Delta: \lvert \Delta/\sigma_0 \rvert \le b} \P_\Delta(\tau \in [\htau_{\rm PW} - L (1+\gamma)^{-1/2} \sigma_0, \infty))\\
  =& \inf_{\Delta: \lvert \Delta/\sigma_0 \rvert \le b} \P_\Delta(\htau_{\rm PW} - \tau \le L (1+\gamma)^{-1/2} \sigma_0)\\
  =& \inf_{\Delta: \lvert \Delta/\sigma_0 \rvert \le b} \P_\Delta((1+\gamma)^{1/2} \sigma_0^{-1} (\htau_{\rm PW} - \tau) \le L)\\
  =& \inf_{\Delta: \lvert \Delta/\sigma_0 \rvert \le b} \P_\Delta\left(N\left(\frac{\gamma}{\sqrt{1+\gamma}} \frac{\Delta}{\sigma_0}, 1\right) \le L\right)\\
  =& \P\left( N\left(\sup_{\Delta: \lvert \Delta/\sigma_0 \rvert \le b} \frac{\gamma}{\sqrt{1+\gamma}} \frac{\Delta}{\sigma_0}, 1\right) \le L \right)\\
  =& \P\left( N\left(\frac{\gamma}{\sqrt{1+\gamma}} b, 1\right) \le L \right)\\
  =& \Phi\Big(L - \frac{\gamma}{\sqrt{1+\gamma}} b \Big).
\end{align*}

Therefore, the proof is complete since $\hat{L}_{\rm PW}' = c_\zeta + \frac{\gamma}{\sqrt{1+\gamma}} b$ is the solution to
\[
  \Phi\Big(L - \frac{\gamma}{\sqrt{1+\gamma}} b \Big) = 1-\zeta.
\]  

\end{proof}

\subsection{Proof of Theorem~\ref{thm:oneside_PT}}

\begin{proof}[Proof of Theorem~\ref{thm:oneside_PT}]

We follow the argument in the proof of Theorem~\ref{thm:L_PT}. From that proof, we know that
\begin{align*}
  (1+\gamma)^{1/2} \sigma_0^{-1} (\htau_{\rm PT} - \tau) \given \{\lvert \htau_1 - \htau_0 \rvert \le \sigma c_{\alpha/2}\} \sim N\left(\frac{\gamma}{\sqrt{1+\gamma}} \frac{\Delta}{\sigma_0}, 1\right),
\end{align*}
and
\begin{align*}
  (1+\gamma)^{1/2} \sigma_0^{-1} (\htau_{\rm PT} - \tau) \given \{\sigma^{-1}(\htau_1 - \htau_0) = u, \lvert \htau_1 - \htau_0 \rvert > \sigma c_{\alpha/2}\} \sim N\left(\frac{\gamma}{\sqrt{1+\gamma}} \frac{\Delta}{\sigma_0} - \sqrt{\gamma} u, 1\right).
\end{align*}

Consider the lower confidence bound
\[
  [\htau_{\rm PT} - L (1+\gamma)^{-1/2} \sigma_0, \infty).
\]
The worst-case coverage probability of the lower confidence bound is
\begin{align*}
  & \inf_{\Delta: \lvert \Delta/\sigma_0 \rvert \le b} \P_\Delta(\tau \in [\htau_{\rm PT} - L (1+\gamma)^{-1/2} \sigma_0, \infty))\\
  =& \inf_{\Delta: \lvert \Delta/\sigma_0 \rvert \le b} \P_\Delta(\htau_{\rm PT} - \tau \le L (1+\gamma)^{-1/2} \sigma_0).
\end{align*}
For a fixed $t=\Delta/\sigma_0$, we decompose the probability
according to whether the pretest accepts or rejects:
\begin{align*}
  &\P_{\Delta/\sigma_0 = t}(\htau_{\rm PT} - \tau \le L (1+\gamma)^{-1/2} \sigma_0)\\
  =&\P_{\Delta/\sigma_0 = t}((1+\gamma)^{1/2} \sigma_0^{-1} (\htau_{\rm PT} - \tau) \le L)\\
  =&\P_{\Delta/\sigma_0 = t}((1+\gamma)^{1/2} \sigma_0^{-1} (\htau_{\rm PT} - \tau) \le L, \lvert \htau_1 - \htau_0 \rvert \le \sigma c_{\alpha/2}) \\
  &+ \P_{\Delta/\sigma_0 = t}((1+\gamma)^{1/2} \sigma_0^{-1} (\htau_{\rm PT} - \tau) \le L, \lvert \htau_1 - \htau_0 \rvert > \sigma c_{\alpha/2})\\ 
  =& \P\left( N\left(\frac{\gamma}{\sqrt{1+\gamma}} t, 1\right) \le L \right) \P\left( \left\lvert N\left(\sqrt{\frac{\gamma}{1+\gamma}} t, 1\right) \right\rvert \le c_{\alpha/2} \right)\\
  &+ \E_{U \sim N(\sqrt{\frac{\gamma}{1+\gamma}} t, 1)}\left[\P\left( N\left(\frac{\gamma}{\sqrt{1+\gamma}} t - \sqrt{\gamma} U, 1\right) \le L \right) \ind\left(\lvert U \rvert > c_{\alpha/2}\right)\right]\\
  =& \Big[\Phi\Big(c_{\alpha/2} - \sqrt{\frac{\gamma}{1+\gamma}}t \Big) - \Phi\Big(-c_{\alpha/2} - \sqrt{\frac{\gamma}{1+\gamma}}t \Big)\Big]\Phi\Big(L - \frac{\gamma}{\sqrt{1+\gamma}}t \Big)\\
  &+ \int_{-\infty}^{-c_{\alpha/2} - \sqrt{\frac{\gamma}{1+\gamma}}t} \Phi\Big(L + \sqrt{\gamma} u \Big)\phi(u) \d u + \int_{c_{\alpha/2} - \sqrt{\frac{\gamma}{1+\gamma}}t}^\infty \Phi\Big(L + \sqrt{\gamma} u\Big)\phi(u) \d u.
\end{align*}

Finally, the worst-case coverage probability over $\lvert \Delta/\sigma_0 \rvert \le b$ is attained for some $t\in[0,b]$. Choosing $\hat L_{\rm PT}'$ to satisfy
\[
  \min_{0 \le t \le b}\P_{\Delta/\sigma_0 = t}(\htau_{\rm PT} - \tau \le L (1+\gamma)^{-1/2} \sigma_0) = 1-\zeta
\]
completes the proof.

\end{proof}

\subsection{Proof of Theorem~\ref{thm:oneside_ST}}

\begin{proof}[Proof of Theorem~\ref{thm:oneside_ST}]

We follow the argument in the proof of Theorem~\ref{thm:L_ST}. From that proof, we know that
\begin{align*}
  (1+\gamma)^{1/2} \sigma_0^{-1} (\htau_{\rm ST} - \tau) \given \{\lvert \htau_1 - \htau_0 \rvert \le \sigma c_{\alpha/2}\} \sim N\left(\frac{\gamma}{\sqrt{1+\gamma}} \frac{\Delta}{\sigma_0}, 1\right),
\end{align*}
and
\begin{align*}
  &(1+\gamma)^{1/2} \sigma_0^{-1} (\htau_{\rm ST} - \tau) \given \{\sigma^{-1}(\htau_1 - \htau_0) = u, \lvert \htau_1 - \htau_0 \rvert > \sigma c_{\alpha/2}\} \sim N\left(\frac{\gamma}{\sqrt{1+\gamma}} \frac{\Delta}{\sigma_0} - \sqrt{\gamma} [u - c_{\alpha/2}\,\sign(u)], 1\right).
\end{align*}

Consider the lower confidence bound
\[
  [\htau_{\rm ST} - L (1+\gamma)^{-1/2} \sigma_0, \infty).
\]
The worst-case coverage probability of the lower confidence bound is
\begin{align*}
  & \inf_{\Delta: \lvert \Delta/\sigma_0 \rvert \le b} \P_\Delta(\tau \in [\htau_{\rm ST} - L (1+\gamma)^{-1/2} \sigma_0, \infty))\\
  =& \inf_{\Delta: \lvert \Delta/\sigma_0 \rvert \le b} \P_\Delta(\htau_{\rm ST} - \tau \le L (1+\gamma)^{-1/2} \sigma_0).
\end{align*}
For a fixed $t=\Delta/\sigma_0$, we decompose the probability
according to whether the pretest accepts or rejects:
\begin{align*}
  &\P_{\Delta/\sigma_0 = t}(\htau_{\rm ST} - \tau \le L (1+\gamma)^{-1/2} \sigma_0)\\
  =&\P_{\Delta/\sigma_0 = t}((1+\gamma)^{1/2} \sigma_0^{-1} (\htau_{\rm ST} - \tau) \le L)\\
  =&\P_{\Delta/\sigma_0 = t}((1+\gamma)^{1/2} \sigma_0^{-1} (\htau_{\rm ST} - \tau) \le L, \lvert \htau_1 - \htau_0 \rvert \le \sigma c_{\alpha/2}) \\
  &+ \P_{\Delta/\sigma_0 = t}((1+\gamma)^{1/2} \sigma_0^{-1} (\htau_{\rm ST} - \tau) \le L, \lvert \htau_1 - \htau_0 \rvert > \sigma c_{\alpha/2})\\ 
  =& \P\left( N\left(\frac{\gamma}{\sqrt{1+\gamma}} t, 1\right) \le L \right) \P\left( \left\lvert N\left(\sqrt{\frac{\gamma}{1+\gamma}} t, 1\right) \right\rvert \le c_{\alpha/2} \right)\\
  &+ \E_{U \sim N(\sqrt{\frac{\gamma}{1+\gamma}} t, 1)}\left[\P\left(N\left(\frac{\gamma}{\sqrt{1+\gamma}} t - \sqrt{\gamma} [U - c_{\alpha/2}\,\sign(U)], 1\right) \le L \right) \ind\left(\lvert U \rvert > c_{\alpha/2}\right)\right]\\
  =& \Big[\Phi\Big(c_{\alpha/2} - \sqrt{\frac{\gamma}{1+\gamma}}t \Big) - \Phi\Big(-c_{\alpha/2} - \sqrt{\frac{\gamma}{1+\gamma}}t \Big)\Big]\Phi\Big(L - \frac{\gamma}{\sqrt{1+\gamma}}t \Big)\\
  &+ \int_{-\infty}^{-c_{\alpha/2} - \sqrt{\frac{\gamma}{1+\gamma}}t} \Phi\Big(L + \sqrt{\gamma} (u+c_{\alpha/2})\Big)\phi(u) \d u + \int_{c_{\alpha/2} - \sqrt{\frac{\gamma}{1+\gamma}}t}^\infty \Phi\Big(L + \sqrt{\gamma} (u-c_{\alpha/2})\Big)\phi(u) \d u.
\end{align*}

Finally, since the coverage probability is monotonically decreasing in $\Delta$, the worst case over
$\lvert \Delta/\sigma_0\rvert\le b$ is attained at $\Delta/\sigma_0 = b$. Choosing $\hat L_{\rm ST}'$ to satisfy
\[
  \P_{\Delta/\sigma_0 = b}(\htau_{\rm ST}- \tau \le L (1+\gamma)^{-1/2} \sigma_0) = 1-\zeta
\]
completes the proof.

\end{proof}

\end{document}